%% file: main.tex
\documentclass[11pt]{article}
\usepackage{authblk}
\usepackage{appendix}
\usepackage[a4paper,margin=3cm]{geometry}
\usepackage[utf8]{inputenc}
\usepackage{amsmath}
\usepackage{amssymb}
\usepackage{xcolor}
\usepackage{amsthm}
\usepackage{dsfont}
\usepackage{color}
\usepackage{amsmath}
\usepackage{xcolor}

\usepackage{amsfonts}
\usepackage{amssymb}
\usepackage{bbm}
\usepackage{lineno}
\usepackage{multirow}
\usepackage{relsize}
\usepackage{enumitem}
\usepackage{pst-node}
\usepackage[compact]{titlesec}
\usepackage{tikz}
\usetikzlibrary{topaths,calc}
\usetikzlibrary{positioning}
\usetikzlibrary {shapes,matrix}
\usetikzlibrary{arrows}
\usepackage{changepage}
\tikzset{
	semithick,
	node distance = 2cm,
	dot/.style={circle,fill,inner sep=2pt}
}
\tikzset{
	side by side/.style 2 args={
		line width=2pt,
		#1,
		postaction={
			clip,postaction={draw,#2}
		}
	}
}
\tikzstyle{every state}=[draw = black,thick,fill = white,minimum size = 4mm]
\tikzstyle{selected edge} = [draw,line width=2pt,-,red!50]
\tikzset{
	edge/.style={->,> = latex'}
}
\usepackage[colorlinks=true,linkcolor=red,citecolor=blue]{hyperref}
\usepackage[nameinlink]{cleveref}

\newcommand{\comment}[1]{}
\usepackage[colorlinks=true,linkcolor=red,citecolor=blue]{hyperref}
\usepackage[section]{placeins}
\usepackage[nameinlink]{cleveref}
\usepackage[procnumbered, linesnumbered,algoruled]{algorithm2e}

\newcommand{\R}{{\mathbb{R}}}
\renewcommand{\P}{{\mathbb{P}}}

\newcommand{\OPT}{\textnormal{OPT}}

\newcommand{\E}{{\mathbb{E}}}

	\Crefname{obs}{Observation}{Observations}
		\Crefname{cor}{Corollary}{Corollaries}

\begin{document}
\newtheorem{thm}{Theorem}[section]
\newtheorem{prop}[thm]{Proposition}
\newtheorem{assm}[thm]{Assumption}
\newtheorem{lem}[thm]{Lemma}
\newtheorem{obs}[thm]{Observation}
\newtheorem{cor}[thm]{Corollary}
 \newtheorem{lemma}[thm]{Lemma}
  \newtheorem{definition}[thm]{Definition}
 \newtheorem{theorem}[thm]{Theorem}
 \newtheorem{proposition}[thm]{Proposition}
 \newtheorem{claim}[thm]{Claim}
\newtheorem{defn}[thm]{Definition}
\newcommand{\ariel}[1]{{\color{red} (Ariel :#1)}}
\def \II   {{\mathcal I}}
\newcommand{\one}{\mathbbm{1}}
	\def\claimproof{\proof}
\def\endclaimproof{\unskip\nobreak\hfill$\square$\par\vspace{0pt}}
\SetKwIF{If}{ElseIf}{Else}{if}{then}{else if}{else}{}      
\SetKwFor{For}{for}{do}{}
\date{}
\title{ {\bf Incremental Dominating Set}}

\author[2]{Ilan Doron Arad\thanks{\texttt{ilanda@mit.edu}}}
\author[1]{Jonathan Gal\thanks{\texttt{jonathan.gal@campus.technion.ac.il}}}
\author[1]{Seffi Naor\thanks{\texttt{naor@cs.technion.ac.il}}}

\affil[1]{Technion -- Israel Institute of Technology,
Haifa, Israel}
\affil[2]{Massachusetts Institute of Technology\\
Cambridge, MA, USA}
\maketitle
\begin{abstract}
Dominating Set is a fundamental problem in graph theory: given a graph, find a minimum-weight subset of vertices such that every vertex is either selected or adjacent to a selected vertex. In online settings where vertices arrive sequentially, comparing algorithms against an offline optimum with full knowledge of the input leads to extremely strong lower bounds, where even a simple star graph shows that any online algorithm must have competitive ratio $\Omega(\Delta)$, with $\Delta$ the largest degree of any vertex in the graph, matching the trivial strategy of selecting all vertices.

We study the \textit{incremental} dominating set problem, where the optimal algorithm is constrained to the same choices available to online algorithms. This introduces a benchmark that enables a meaningful comparison between algorithms.

We present the first results for vertex-weighted graphs and randomized algorithms in this model. For incremental dominating set, we give an $O(\Delta)$-competitive deterministic algorithm and an $O(\log^2\Delta)$-competitive randomized algorithm. We extend these results to the Connected Dominating Set problem using a linear-programming formulation that captures connectivity through local constraints. When the neighborhood of each arriving vertex is known \textit{in advance}, deterministic algorithms achieve similar polylogarithmic competitive ratios as their randomized counterparts. Finally, we establish matching lower bounds, showing that our results are optimal up to constant factors.
\end{abstract}
\section{Introduction}
Dominating Set is a fundamental and extensively studied problem in graph theory and algorithm design, e.g.~\cite{hedetniemi1991bibliography, konig12theorie, ore1962theory}.
Given a graph $G=(V,E)$ with vertex weights $w: V\to\R^+$, we say that vertex $u$ dominates vertex $v$ if either $u=v$ or $(u,v)\in E$.
A subset of vertices $D\subseteq V$ is then called a dominating set if every vertex in $V$ is dominated by at least one vertex in $D$. Variants include the Connected Dominating Set problem~\cite{cdsbook,guha1998approximation, yu2013connected}, which requires the dominating set to be connected, and the Total Dominating Set problem~\cite{atapour2008total, cockayne1980total}, where every vertex must be dominated by a distinct neighbor. These, and related problems, have found applications in network routing~\cite{wu2001dominating}, facility location~\cite{hooker1991finite,mehrez1984facility}, efficient transmission~\cite{liu1968introduction}, and more~\cite{milner1964theory}.

We consider online versions of these problems. In the online setting, domination requests arrive sequentially, requiring the algorithm to make irrevocable decisions, while maintaining feasibility at each step without knowledge of future requests~\cite{main, das1997routing, kobayashi2017improved}. 

Online domination has primarily been studied under two models: the known-graph and unknown-graph settings. In the known-graph model, the entire graph is known upfront, and requests for dominating specific vertices arrive sequentially.
This model is a special case of the online set cover problem, for which the known competitive bounds are polylogarithmic in the number of sets and elements~\cite{osc}. Several works have extended these bounds to variants of the known-graph model in the context of domination problems~\cite{markarian2020online}. 

In the unknown-graph model, the vertices are revealed one by one along with the edges connecting them to previously seen vertices. Each revealed vertex must be immediately dominated. The version most commonly studied in this model is the late accept variant~\cite{main}, in which a vertex can be added to the dominating set at any time after it arrives, but cannot be removed once it is added.

Most research in this setting analyzes online algorithms by comparing them to a powerful offline algorithm that can choose any subset of vertices, independently of their arrival order. Consider the star graph in Figure~\ref{fig:star-counterexample}; it demonstrates that every online algorithm must have competitive ratio $\Theta(\Delta)$, where $\Delta$ is the maximum degree of a vertex, matching the trivial strategy that always adds all vertices to the dominating set. Thus, there has been very little research in this direction ~\cite{main,de2021online}.

\begin{figure}[ht]
    \centering
    \begin{tikzpicture}
    \node[circle, draw, minimum size=8mm, font=\footnotesize] at (360:0mm) (center) {\textcolor{blue}{}};
    \node[coordinate] (D2)  [left=2.8cm of center] {};
    \node[coordinate] (centerwo)  [left=7cm of center] {};
	\node[coordinate] (C1)  [right=3.2cm of centerwo] {}; 
	\draw[thick, double,-implies, double distance=1.5mm] (C1)--(D2);
    \foreach \n in {1,...,7}{
        \node[circle, draw, minimum size=8mm, font=\footnotesize] at ({-(\n - 1)*360/9 + 90}:2cm) (n\n) {\textcolor{red}{$\n$}};
        \draw (center)--(n\n);
    }
    \foreach \i in {1,...,7} {
    \pgfmathsetmacro{\angle}{90 - (\i-1) * 40} 
    \path (centerwo) ++(\angle:2cm) coordinate (p\i);
    \node[circle, draw, minimum size=8mm, font=\footnotesize] at (p\i) (m\i) {\textcolor{red}{$\i$}};
  }
  \pgfmathsetmacro{\angle}{90 - (9-1) * 40}
    \path (centerwo) ++(\angle:2cm) coordinate (p9);
    \node[circle, draw, minimum size=8mm, font=\footnotesize] at (p9) (m9) {\textcolor{red}{$\Delta$}};
  \pgfmathsetmacro{\angle}{90 - (8-1) * 40}
    \path (centerwo) ++(\angle:1.6cm) coordinate (p8);

    \node[rotate= 80, font=\huge] at (p8) (n8) {\textbf{$\ldots$}};
    \node[rotate= 80, font=\huge] at ({-(8 - 1)*360/9 + 90}:1.6cm) (n8) {\textbf{$\ldots$}};
    \node[circle, draw, minimum size=8mm, font=\footnotesize] at ({-(9 - 1)*360/9 + 90}:2cm) (n9) {\textcolor{red}{$\Delta$}};
        \draw (center)--(n9);
\end{tikzpicture} 
    \caption{A star graph where the center vertex is revealed last. An incremental algorithm, unable to choose the center vertex before it is requested, must select all leaf vertices, while an offline algorithm needs only to select the center vertex, yielding a competitive ratio of $\Omega(\Delta)$. However, against the optimal \textbf{incremental} solution, which is also unable to choose the center vertex, the competitive ratio is $1$.}
    \label{fig:star-counterexample}
\end{figure}
To provide a more meaningful benchmark, we propose comparing online algorithms 
with their incremental counterparts. In the incremental dominating set (DS) 
problem, we are given a graph $G=(V,E)$ and an ordering $v_1,v_2,\dots,v_n$ 
of the vertices. The goal is to choose a subset $S\subseteq V$ such that for 
every $t\in [n]$, the set $S\cap \{v_1,\dots,v_t\}$ is a dominating set for 
the subgraph induced by $\{v_1,\dots,v_t\}$. Since both the algorithm and the 
optimal incremental solution face the same arrival constraints, this benchmark 
is more informative. In the example in Figure~\ref{fig:star-counterexample}, 
the competitive ratio is reduced to $1$ from $\Omega(\Delta)$.

This problem naturally models constraints that online algorithms face in the unknown-graph setting. Boyar et al.~\cite{main} were the first to consider deterministic online algorithms  for the incremental model on unweighted graphs in the unknown-graph setting. They showed that any deterministic online algorithm has a competitive ratio of $\Theta(\Delta)$. This result implies that, for unweighted graphs, no deterministic online algorithm can substantially outperform the na\"ive approach of simply selecting all vertices. However, the question of obtaining improved competitive bounds for other settings, such as weighted graphs, where achieving an $O(\Delta)$ competitive algorithm is non-trivial, or randomized algorithms, has remained open.

\subsection{Our Contribution \& Results}
We address several open questions in online domination problems: obtaining competitive ratios for weighted graphs, randomized algorithms, and settings with additional structural information.

We start by formulating both incremental Dominating Set (DS) and incremental Connected Dominating Set (CDS) as covering problems. For incremental CDS we manage to encode both domination and connectivity requirements using a small number of linear constraints. This is the first use of such formulation in the online setting.

In the deterministic weighted setting, we show that incremental DS and incremental CDS admit an $O(\Delta)$-competitive algorithm, matching the unweighted lower bound~\cite{main}. For randomized algorithms, we achieve $O(\log^2 \Delta)$ competitive ratio and prove this is optimal up to a constant factor for polynomial-time algorithms. For computationally unbounded algorithms we prove an $\Omega(\log \Delta)$ lower bound.

We also study the \textbf{known neighborhood} model, where requesting a 
vertex reveals its full neighborhood. Surprisingly, knowing just the neighborhood of requested vertices in advance yields substantially stronger {\em deterministic} bounds: $O(\log^2 \Delta_G)$-competitive in the unweighted case for both incremental DS and incremental CDS where $\Delta_G$ is the maximum degree of the revealed graph (containing vertices which are not requested), and $O(\log n \log \Delta)$-competitive for weighted incremental DS. These competitive ratios hold at every step of the algorithm, and in particular the final solution remains competitive even if some revealed vertices were never requested. These bounds are asymptotically tight.

We also study offline and random-order models. In the offline setting, we establish tight $\Theta(\log \Delta)$ polynomial time approximation bounds unless NP $\subseteq$ BPP. In the random-order setting, we prove an $\Omega(\log \Delta)$ lower bound, showing that random arrival order does not substantially improve the competitive ratio.

A summary of our upper and lower bounds across different models appears in Table~\ref{fig:results}. We now discuss our results in detail.
\begin{table}[t]
    \centering
    \renewcommand{\arraystretch}{1.5}
\resizebox{\textwidth}{!}{%
\begin{tabular}{|c|c|c|c|}
\hline
&\textbf{Incremental DS} & \textbf{Incremental CDS} & \textbf{Lower Bound}\\
\hline
    Deterministic &$O(\Delta)$&$O(\Delta)$&$\Omega(\Delta)$~\cite{main}
    \\
    \hline
    Randomized (Polynomial Time) &\multirow{2}{*}{\centering $O(\log^2\Delta)$}&\multirow{2}{*}{\centering $O(\log^2\Delta)$}&$\Omega(\log^2\Delta)$\\
    \cline{1-1} \cline{4-4}
    Randomized (General)&&&$\Omega(\log\Delta)$\\
    \hline
    
    Known Neighborhood Unweighted&$O(\log^2\Delta_G)$&$O(\log^2\Delta_G)$&$\Omega(\log^2\Delta)$\\
    \hline
    Known Neighborhood (Polynomial Time) &\multirow{2}{*}{\centering $O(\log n\log\Delta)$}&\multirow{2}{*}{\centering -}&$\Omega(\log^2\Delta)$\\
    \cline{1-1} \cline{4-4}
    Known Neighborhood (Deterministic)&&&$\Omega(\log^2\Delta/ \log \log \Delta)$\\
    \hline
    
    Random Order Arrival&-&-&$\Omega(\log \Delta)$\\
    \hline
    Offline (Known Input) &$O(\log \Delta)$&$O(\log \Delta)$&$\Omega(\log \Delta)$\\
    \hline
\end{tabular}
}
    \caption{Results for incremental DS and incremental CDS. 
Entries marked with - follow from results in other rows. All results apply to weighted graphs unless specified otherwise. $\Delta$ refers to the maximum degree of the graph induced by requested vertices, while $\Delta_G$ refers to the maximum degree of the revealed graph.}
    \label{fig:results}
\end{table}
\subsubsection{Online Incremental}
We begin with the online setting of incremental DS and CDS. Boyar et al.~\cite{main} have shown a $\Theta(\Delta)$ deterministic lower bound for incremental DS and incremental CDS,
which matches the competitive bound of the trivial algorithm that selects all vertices in unweighted graphs. In contrast, we design a deterministic algorithm with a non-trivial competitive ratio for weighted graphs.

\begin{theorem}
\label{thm1.3}
There is a $(\Delta + 1)$-competitive deterministic algorithm for \textnormal{incremental DS} and \textnormal{incremental CDS}.
\end{theorem}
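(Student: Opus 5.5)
The plan is to write both problems as online covering linear programs in which every constraint has at most $\Delta+1$ variables. Then a standard online primal--dual (Bar-Yehuda--Even style) algorithm gives ratio $\Delta+1$. For a vertex $u$, let $N^-[u]$ denote the vertices of $N[u]$ that arrive no later than $u$ (including $u$ itself), and let $N^-(u)=N^-[u]\setminus\{u\}$.

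\textbf{Step 1 (formulations).} For incremental DS, I claim $S$ is feasible iff $S\cap N^-[u]\neq\emptyset$ for every $u$. For necessity, consider the time $t$ at which $u$ arrives: $u$ must be dominated by $S\cap\{v_1,\dots,v_t\}$, and that dominator lies in $N^-[u]$. For sufficiency, the dominator of $u$ is present from time $t$ onward, so $u$ remains dominated at every later step. This yields the constraints $\sum_{v\in N^-[u]}x_v\ge 1$, each with at most $\Delta+1$ variables.

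For incremental CDS, assume as usual that each prefix graph $G[\{v_1,\dots,v_t\}]$ is connected. The constraints are $x_{v_1}\ge 1$ together with $\sum_{w\in N^-(u)}x_w\ge1$ for every $u\ne v_1$. Necessity has three parts.
\begin{itemize}[nosep]
\item $v_1\in S$, since $S\cap\{v_1\}$ must dominate $v_1$.
\item If $u\notin S$, then $u$ must be dominated at its arrival by a strictly earlier neighbor in $S$.
\item If $u\in S$ and $u\neq v_1$, then $S\cap V_t$ is connected and contains both $u$ and $v_1$, so $u$ has an earlier neighbor in $S$.
\end{itemize}
Sufficiency follows by induction on $t$. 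Adding $v_t$ keeps $S\cap V_t$ connected, because $v_t$, if chosen, attaches to an earlier chosen neighbor. It also keeps $S\cap V_t$ dominating, because $v_t$ has an earlier chosen neighbor and older vertices stay dominated. Each CDS constraint has at most $\Delta$ variables.

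\textbf{Step 2 (algorithm).} Maintain a dual variable $y_u\ge0$ for each arrived constraint, keeping the dual feasible: for every vertex $v$, the load $\sum_{u:\,v\in C_u}y_u$ must not exceed $w(v)$, where $C_u$ is the variable set of $u$'s constraint. When $u$ arrives, every vertex of $C_u$ has already arrived, since late accept is allowed. If $C_u$ already contains a selected vertex, do nothing. Otherwise raise $y_u$ continuously until some $v\in C_u$ becomes tight, and add $v$ to $S$. Loads of vertices arriving later are unaffected, because $u$'s constraint does not contain them. The resulting $S$ satisfies every constraint at its arrival time, and since $S$ only grows, it stays feasible.

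\textbf{Step 3 (analysis).} Every selected vertex is tight, so
\[
w(S)=\sum_{v\in S}\sum_{u:\,v\in C_u}y_u\le\sum_u |C_u|\,y_u\le(\Delta+1)\sum_u y_u\le(\Delta+1)\,\OPT,
\]
where the last inequality is weak LP duality, using that $\OPT$ is feasible for the covering LP by Step 1. The competitive ratio is therefore $\Delta+1$.

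The main obstacle is Step 1 for CDS: one must find a connectivity encoding that is both local and of covering type. The key idea is to require an \emph{earlier} neighbor in $S$ for every non-first vertex. One has to check carefully that this single constraint captures domination and connectivity together, and to handle edge cases such as $v_1$ and the connectivity assumption on prefixes. The DS part and the primal--dual accounting are routine.
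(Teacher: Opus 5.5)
The gap is in the CDS half. You assume that every prefix $G[V_t]$ is connected, but that is not part of the problem here. A CDS only has to be connected inside each component of $G[V_t]$, so prefixes may be disconnected. Handling exactly this is what the paper's opener/attacher/joiner classification is for. In the connected-prefix case every vertex after $v_1$ is an attacher, and your constraints coincide with the paper's.

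Without the assumption your formulation fails in two ways:
\begin{itemize}
\item An opener $u\neq v_1$ has no earlier neighbor, so its constraint is empty and the LP is infeasible, although OPT just selects $u$. Your necessity argument via a path from $u$ to $v_1$ also breaks.
\item Patching openers with $x_u\ge 1$ is not enough, because a single ``some earlier neighbor is selected'' constraint is too weak for a joiner. Take $v_1=a$, $v_2=b$ isolated, and $v_3=c$ adjacent to both. Then $S=\{a,b\}$ satisfies $x_a\ge1$, $x_b\ge1$, $x_a+x_b\ge1$, and it is exactly what your primal--dual outputs. Yet it is not connected in the path $a$--$c$--$b$.
\end{itemize}
The missing idea concerns a joiner $u$ merging components $C_1,\dots,C_k$ ($k\ge 2$) of $G[V_{t-1}]$. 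Each $S\cap C_h$ is nonempty and every path between them passes through $u$, so $u$ must itself be selected. It must also have a selected neighbor in \emph{each} $C_h$.

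So the CDS constraints should be:
\begin{itemize}
\item $x_u\ge1$ for openers;
\item your constraint for attachers;
\item for a joiner, $x_u\ge1$ together with $\sum_{w\in N(u)\cap C_h}x_w\ge1$ for every merged $C_h$.
\end{itemize}
Necessity is the argument just given, and sufficiency is the same induction you wrote. Each constraint still has at most $\Delta$ variables, so your primal--dual accounting then goes through unchanged.

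The rest of your proposal is sound and takes a different route from the paper. The paper buys the cheapest variable of each arriving constraint and charges that constraint to an OPT vertex it contains. Its ratio is therefore the column sparsity: each variable lies in at most $\Delta+1$ constraints. Your primal--dual instead bounds the ratio by the row sparsity and compares against the fractional optimum through weak duality. Both sparsities are at most $\Delta+1$ here, so either analysis gives the theorem, and yours additionally bounds the cost against the LP value.
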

In the randomized setting, the competitive ratio improves exponentially.

\begin{theorem}
\label{thm1.4}
There is an $O(\log^2\Delta)$-competitive randomized algorithm for \textnormal{incremental DS} and \textnormal{incremental CDS}.
\end{theorem}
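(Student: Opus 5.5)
Our plan is the standard online covering pipeline: an online fractional solution for an LP relaxation of the incremental problem, followed by online randomized rounding. We first write incremental DS as a covering LP. Let $x_v\in[0,1]$ indicate that $v$ is selected. The incremental requirement says that when $v_t$ arrives, $v_t$ must be dominated by a vertex among $\{v_1,\dots,v_t\}$ adjacent to it or equal to it. Moreover, every earlier vertex $v_s$ must remain dominated within the prefix. Since a vertex's closed neighborhood inside the prefix only grows, domination of $v_s$ at time $s$ implies domination at every later time. The incremental constraints therefore reduce to the following: for each $t$,
\[
\sum_{u\in N[v_t]\cap\{v_1,\dots,v_t\}} x_u\ \ge 1 .
\]
We take this as the constraint revealed at time $t$. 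Each constraint has at most $\Delta+1$ variables, and a variable $x_u$ can be raised only after $u$ has arrived. That is exactly the online-algorithm restriction, so the LP optimum lower-bounds the incremental $\OPT$.

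For CDS we would use the local connectivity formulation announced in the introduction. This requires $v_t$ to be dominated as above. Connectivity is expressed by covering constraints that, whenever $x$ "selects" a dominator of $v_t$, require a chosen vertex in the prefix neighborhood linking it to the existing structure. Concretely, we would require that some vertex of $N[v_t]$ that is adjacent to the previously dominating component be selected. Each such constraint again involves $O(\Delta)$ earlier-or-current vertices. We would verify that any family of sets satisfying all these constraints can be connected, with only a constant-factor loss, by adding connectors that the constraints themselves force. This reduction keeps the remainder of the argument identical for DS and CDS.

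Given a covering LP whose constraints have row sparsity at most $d=\Delta+1$, we run the online primal--dual multiplicative-update algorithm of Buchbinder--Naor for online set cover / fractional covering. When an unsatisfied constraint arrives, all its variables are increased simultaneously according to
\[
x_u\leftarrow x_u\Bigl(1+\tfrac{1}{w_u}\Bigr)+\tfrac{1}{d\,w_u}
\]
in continuous form. This yields an $O(\log d)=O(\log\Delta)$-competitive fractional solution. Competitiveness is measured against the LP optimum, and hence against the incremental optimum.

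For rounding, we follow the online set cover scheme. For each vertex $u$ we draw $\Theta(\log \Delta)$ independent thresholds uniformly in $[0,1]$, and we add $u$ as soon as $x_u$ exceeds the minimum threshold. If a constraint is still uncovered after the increase, we repair it by adding its cheapest vertex. The expected cost of the threshold step is $O(\log\Delta)$ times the fractional cost. Each constraint has fractional mass at least $1$, so it fails with probability at most $\Delta^{-c}$; because the repair costs at most the weight of one vertex in the constraint, which is dominated by an LP-charged amount, the repair term is lower order. Together these give $O(\log^2\Delta)$.

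The main obstacle is that standard online set cover rounding loses $\log m$ (the number of constraints, roughly $n$), not $\log \Delta$. To lose only $\log\Delta$ we must charge each failure locally. The repair cost of constraint $t$ is at most $\min_{u} w_u$ over that constraint. We would bound the expected repair cost by $\Delta^{-c}$ times this minimum weight and then sum. The concern is that many constraints could share a cheap vertex and inflate the sum. We would handle this as follows. Each vertex belongs to at most $\Delta+1$ constraints, so a union bound per vertex charges at most $(\Delta+1)\Delta^{-c}w_u\le w_u x_u$-type terms. For the fractional mass we use the fact that a constraint with total $x$-mass at least $1$ contains some $u$ with $x_u\ge 1/(\Delta+1)$; we would assign the repair of constraint $t$ to such a vertex, giving total repair cost $O(\sum_u w_u x_u)$. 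For CDS, the same local charging must also absorb the connector additions, and checking that this remains local is the step we expect to require the most care.
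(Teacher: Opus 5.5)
Your incremental DS argument is correct, and it is more self-contained than the paper's. The paper just feeds the covering LP, which has row and column sparsity $O(\Delta)$, into the Gupta--Nagarajan $O(\log k\cdot\log d)$ theorem. Your local charging does the work that theorem would do. When constraint $t$ arrives it contains some $u$ with $x_u\ge 1/(\Delta+1)$, so its expected repair cost is at most $\Delta^{-c}w_u\le(\Delta+1)\Delta^{-c}w_ux_u$. Since $u$ lies in at most $\Delta+1$ constraints, the total repair cost is $O(\Delta^{2-c})\sum_u w_ux_u$, so $\Theta(\log\Delta)$ thresholds suffice. Your rounding loses $O(\log(kd))$ rather than $O(\log d)$, which makes no difference here because $k,d=O(\Delta)$.

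The gap is incremental CDS, which the statement also covers. The constraint you sketch is not a covering constraint of the instance: ``select a vertex of $N[v_t]$ adjacent to the previously dominating component, whenever $x$ selects a dominator.''
\begin{itemize}
\item Which component is ``dominating'' depends on the solution. So this is neither a fixed row you can hand to the fractional algorithm nor something OPT is guaranteed to satisfy.
\item An ``if selected then'' requirement is not of covering form anyway.
\item It does not handle a vertex that merges several components.
\item The promised constant-factor connector repair is never justified.
\end{itemize}
So the CDS half of the theorem is unproved.

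The missing idea is to classify $v_i$ by the components of $G[V_{i-1}]$ that $N(v_i)$ meets, which depends only on the graph and the order.
\begin{itemize}
\item If it meets none (an opener), impose $x_i\ge1$.
\item If it meets exactly one component $C$ (an attacher), impose the single row $\sum_{v_j\in N(v_i)\cap V_{i-1}}x_j\ge1$. This row is necessary whether or not $v_i$ is selected: if $v_i\notin D$ it is needed for domination, and if $v_i\in D$ it is needed for connectivity to the nonempty set $D\cap C$. It is sufficient by induction, since $D\cap C$ is already connected.
\item If it meets $k\ge2$ components $C_1,\dots,C_k$ (a joiner), every incremental CDS must contain $v_i$, because the nonempty sets $D\cap C_h$ can only be linked through $v_i$. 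Impose $x_i\ge1$ and $\sum_{v_j\in N(v_i)\cap C_h}x_j\ge1$ for each $h$.
\end{itemize}
The integral feasible points of this LP are exactly the incremental CDSs. Each row has at most $\Delta$ variables and each variable lies in $O(\Delta)$ rows. Your fractional-plus-rounding pipeline then applies verbatim, with no separate connector charging.
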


We note that this result gives the first sublinear-competitive algorithm for weighted online CDS. In contrast, in the known-graph model (or in any model in which we only need to dominate a subset of the vertices) the problem reduces to non-metric TSP, for which obtaining an $f(n)$-approximation is hard for every polynomial-time computable function~\cite{guha1998approximation}. Thus, the incremental model allows substantially better guarantees.

We also study the \textbf{known neighborhood} model, where requesting a vertex $v_i$ reveals its full neighborhood $N[v_i]$. Importantly, this extra information does not change the original setting, as the neighbors of $v_i$ become visible to the algorithm, but cannot be added to the dominating set until they are requested. Surprisingly, this additional information enables deterministic algorithms to achieve polylogarithmic competitive ratios matching their randomized counterparts.

\begin{theorem}
\label{thm1.11}
When the neighborhood of requested vertices is known, and the graph is unweighted, then
there is an $O(\log^2\Delta_G)$-competitive deterministic algorithm for both \textnormal{incremental DS} and \textnormal{incremental CDS}.
\end{theorem}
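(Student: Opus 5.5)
The plan is to cast both problems as online covering problems whose constraints are local, solve them fractionally with an $O(\log \Delta_G)$-competitive multiplicative update, and then round online deterministically with a pessimistic-estimator potential in the style of Alon et al.\ for deterministic online set cover. The known neighborhoods make that potential local, so its logarithmic factor is $\log \Delta_G$ rather than $\log n$.

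\textbf{The covering LP.} For incremental DS, when $v_t$ is requested the only new requirement is
\[
\sum_{u\in N[v_t]\cap\{v_1,\dots,v_t\}} x_u \ge 1 .
\]
Earlier requirements stay satisfied because the solution only grows. The optimal incremental solution is a feasible integral point of this LP, so it suffices to compete with the LP optimum. View it as a set system:
\begin{itemize}
\item each vertex $u$ is a set containing the elements $w\in N[u]$ requested at or after $u$;
\item each element lies in at most $\Delta+1$ sets, and each set has size at most $\Delta_G+1$.
\end{itemize}
The crucial use of the model is that once $u$ is requested, $N[u]$ is revealed, so the full list of elements that $u$ can ever cover is known at the moment $x_u$ first becomes available. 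For CDS I would use the paper's local linear formulation of connectivity. Its constraints have the same shape: each is triggered by an arrival and involves only vertices in a bounded-radius revealed neighborhood. The same machinery then applies, at the cost of a constant factor in the degree parameters.

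\textbf{Fractional algorithm.} On arrival of $v_t$, while its constraint is violated, multiply every available $x_u$ in the constraint by $2$, and raise any $x_u=0$ to $1/(\Delta+1)$. The standard primal–dual analysis bounds the number of doubling rounds charged to each unit of OPT by $O(\log \Delta)$. This gives an $O(\log \Delta)\subseteq O(\log\Delta_G)$-competitive monotone fractional solution.

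\textbf{Deterministic rounding.} Maintain an integral set $D$ and a potential
\[
\Phi=\sum_{w}\Delta_G^{\,2\,(1-\sum_{u\ni w}x_u)}\cdot[w\text{ not dominated by }D] \;+\; \Delta_G^{\,2\sum_{u\in D}(1-\dots)}\text{-type cost term}.
\]
The sum ranges only over elements $w$ that lie in the revealed neighborhood of some requested vertex. Whenever $x_u$ increases by $\delta$, I add $u$ to $D$ or not, choosing deterministically, as in the method of conditional expectations, so as not to increase $\Phi$. Because $N[u]$ is known, the effect of adding $u$ on every element it will ever cover can be computed. Each increment touches at most $\Delta_G+1$ terms, which is why base $\Delta_G^2$ (not $n^2$) suffices. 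A newly arriving constraint has fractional coverage at least $1$, so its term is at most $1$. Hence $\Phi$ never reaches the value an undominated requested vertex would force, and the cost term bounds $|D|$ by $O(\log\Delta_G)$ times the fractional cost. Unweightedness is used here: every selected vertex has cost $1$, so the cost term can be charged per selected vertex rather than through weight classes, which would reintroduce a $\log n$.

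\textbf{Expected obstacle.} The hard step is making the potential genuinely local. Alon et al.'s argument needs $\Phi<$ (something polynomial in $n$) globally, because undominated elements anywhere contribute. I would first try amortizing per element: charge each element's contribution only to the $O(\Delta_G)$ sets containing it, and show via the known neighborhoods that at every step the relevant sub-potential around the current vertex involves only $\mathrm{poly}(\Delta_G)$ terms. A fallback is to partition time into phases in which OPT doubles, resetting the potential each phase. Since the analysis holds at every step, the guarantee also covers revealed but never-requested vertices, which is why the bound is stated in terms of $\Delta_G$.
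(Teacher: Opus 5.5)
\textbf{There is a genuine gap.} It sits at the step you yourself flag as the ``expected obstacle,'' and that step is the whole content of the theorem.

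As written, $\Delta_G^{2(1-\sum_{u\ni w}x_u)}$ is \emph{decreasing} in fractional coverage. Raising $x_u$ only lowers $\Phi$, so the conditional-expectation step never adds anything. An undominated requested vertex (coverage $\ge 1$) contributes at most $1$, so nothing is certified.

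With the Alon et al.\ orientation $\Delta_G^{2\sum x_u}$, your domination argument needs the invariant $\Phi<\Delta_G^{2}$ (or any $\mathrm{poly}(\Delta_G)$ threshold). The number of terms touched per increment is irrelevant; what matters is the ratio between the total potential and the threshold of a single term. Every undominated revealed vertex contributes at least $1$, so the invariant forces you to dominate all but $\mathrm{poly}(\Delta_G)$ of the revealed-but-unrequested vertices. That can cost $\Theta(\Delta_G)\cdot|\OPT|$: take many disjoint stars, center requested first, then the leaves, each leaf with one private unrequested neighbor.

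Your cost term has the same defect. An Alon et al.-style exponential cost term gives $|D|\le O(\log\Delta_G)\cdot(\text{fractional cost})+O(|\OPT|\log\Phi_{\max})$. Here $\log\Phi_{\max}$ grows with the number of revealed vertices, i.e.\ like $\log n$. This is precisely how the paper's weighted algorithm ends at $O(\log n\log\Delta)$.

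Neither fallback helps. A phase in which $\OPT$ doubles can contain arbitrarily many revealed vertices relative to $\Delta_G$. Conditional expectations control only the total potential, not local sub-potentials.

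\textbf{How the paper gets around this: two ideas.}

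First, it never tries to keep $\Phi$ below a threshold. If $v_i$ is still undominated after rounding, it simply adds $v_i$ and pays for this with potential.
\begin{itemize}
\item Every revealed vertex gets $x=1/(4m)$, and $c_v=\sum_{v_j\in N[v]\cap V_i}x_j+\frac{m-|N[v]\cap V_i|}{4m}$.
\item A term of $\Phi=\sum_{v\in U_i}m^{3c_v}$ is therefore created once, at reveal time, with value $m^{3/4}$. Neighbors being requested does not change it, and doublings are offset by the rounding step.
\item Hence the total potential ever injected over $r$ requests is at most $r\Delta_G m$, while each forced addition removes at least $m^3$.
\item Unit weights give $r\le(\Delta_G+1)|\OPT|$, so once $m\ge\Delta_G$ there are only $O(r\Delta_G/m^2)=O(|\OPT|)$ forced additions.
\end{itemize}

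Second, there is no cost term at all; the paper counts events instead.
\begin{itemize}
\item Each time a requested vertex arrives undominated, the doublings raise $\sum_{v_j\in N[v_i]\cap V_i}x_j$ by at most $2$. Sampling $6\log m$ times with probabilities $\delta_j/2$ then shows that some subset of size $O(\log m)$ restores the potential.
\item There are only $O(|\OPT|\log m)$ such events. Each one doubles the coordinate of an $\OPT$ vertex dominating $v_i$, and a coordinate can double only $O(\log m)$ times (from $1/(4m)$ to at most $2$).
\end{itemize}
Together these give $O(\log^2 m)$ directly, with no appeal to an $O(\log\Delta)$-competitive fractional solution.

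Your plan also omits two smaller pieces.
\begin{itemize}
\item $\Delta_G$ is not known in advance. The paper squares its guess rather than doubling it, since doubling would cost $O(\log^3\Delta_G)$.
\item For CDS, the paper runs the DS algorithm with $N(v_i)$ only on attacher constraints. It pays one arbitrary vertex per opener/joiner constraint, and there are at most $2|\OPT|$ of these by charging each merged component to its representative.
\end{itemize}
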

Here $\Delta_G$ denotes the maximum degree of the fully revealed graph, which may contain vertices that are never requested. The competitive ratio holds at every step of the algorithm, and remains valid even if the adversary stops requesting vertices before all revealed vertices have been requested. In the case where all revealed vertices are eventually requested we have $\Delta = \Delta_G$ and the algorithm is $O(\log^2\Delta)$-competitive.

In the weighted case we still get polylogarithmic results for incremental DS.
\begin{theorem}
\label{thm1.12}
When the neighborhood of requested vertices is known,  there is an $O(\log n \log \Delta)$-competitive deterministic algorithm for \textnormal{incremental DS}.
\end{theorem}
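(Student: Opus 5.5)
Our plan is to reduce the weighted known-neighborhood incremental DS problem to an online covering problem and then round the fractional solution deterministically, using the known neighborhoods to anticipate future constraints. When $v_t$ is requested, the constraint it creates involves only vertices whose arrival times are at most $t$. In the incremental formulation, the constraint for time $t$ and vertex $v_i$ with $i\le t$ is $\sum_{u\in N[v_i],\, u\preceq v_t} x_u\ge 1$. This is implied by the constraints for the times at which $v_i$ or its earlier neighbors arrive, so the LP is a covering LP. Each constraint has at most $\Delta+1$ variables, and there are at most $n$ variables overall.

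Because $N[v_t]$ is revealed when $v_t$ is requested, we know in advance every constraint in which each requested vertex can later participate. We maintain a fractional solution using the standard multiplicative-weights online covering algorithm of~\cite{osc}. With row sparsity at most $\Delta+1$, this gives an $O(\log \Delta)$-competitive fractional solution against the incremental LP optimum, which lower bounds $\OPT$.

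For deterministic rounding we will use the derandomized (potential-function) rounding for online set cover from~\cite{osc}. That rounding requires the family of elements, i.e.\ potential constraints, to be known in advance. Here the known neighborhoods play that role: once $v_t$ is requested, the constraints it can ever belong to are determined by the vertices in $N[v_t]$, each of which will need domination at arrival. The potential is a sum over all currently known potential constraints of exponential terms $n^{2(1-\text{coverage})}$, or a pessimistic estimator of failure, plus a cost term. When fractional values increase, we add vertices greedily so that the potential never increases. This loses an additional $O(\log n)$ factor, giving $O(\log n\log\Delta)$ overall. Feasibility is guaranteed because the potential stays below $1$, so no currently required constraint is left uncovered.

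The main obstacle is that the set of constraints grows over time, as new neighborhoods are revealed, while the potential argument needs a fixed universe. We plan to handle this by charging each newly revealed constraint an initial potential of $n^{-c}$, using the fact that at most $n^2$ (vertex, time) constraints exist. We will then check that the initial potential contributed by newly revealed constraints is absorbed by the decrease earned through fractional increments. We also need to check that a vertex may be selected only after it has been requested. Since the fractional algorithm only increases variables of requested vertices, rounding triggered at time $t$ chooses among already requested vertices, which matches the incremental benchmark.
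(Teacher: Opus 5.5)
Your route is the paper's: an $O(\log\Delta)$-competitive fractional covering solution, rounded deterministically with an Alon et al.-style potential. The potential sums an exponential term over revealed-but-undominated vertices plus an exponential cost term, and the known neighborhoods guarantee that every constraint touched by a rising $x_t$ (those of $N[v_t]$) is already in it.

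Two points of the write-up need repair. First, the per-vertex term must \emph{increase} with coverage, i.e.\ $n^{-2(1-c_u)}$ (a rescaling of the paper's $m^{2c_u}$), not $n^{2(1-c_u)}$. As written, an uncovered vertex with $c_u\ge 1$ contributes at most $1$ and a new vertex contributes $n^2$, which contradicts your own ``initial $n^{-c}$'' and ``stays below $1$''. Second, there is no decrease available to ``absorb'' newly revealed vertices: the derandomized rounding (Lemma~\ref{lemb.2}) only prevents increases. What works instead is this. A vertex first revealed at step $t$ has $v_t$ as its only requested neighbor, with $x_t=0$ at that moment, so it enters with coverage exactly $0$. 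The total injected potential is therefore at most the number of revealed vertices times the initial term, which stays below the feasibility threshold. This is the paper's bound $\Phi_i\le n+1<m^2$ (Lemma~\ref{lem5.8}).

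The genuine omission is that your potential uses $n$ as its base and a cost term normalized by $\OPT$, and neither is known online. The paper handles this in two ways.
\begin{itemize}
\item It works with a guess $\alpha$ of $\OPT$, handled by doubling, and keeps every $v_j$ with $w(v_j)>\alpha$ at $x_j=0$. This ensures $w(v_j)/(2\alpha)\le 1/2$, which the one-step bound $\E[\exp(w(v_j)\one[v_j\in D]/(2\alpha))]\le\exp(3\log m\, w(v_j)\delta_j/(2\alpha))$ requires. For a vertex much heavier than $\alpha$, the exponential cost term can increase in expectation and the derandomization step breaks.
\item More importantly, it runs with a guess $m$ of $n$ that is \emph{squared}, not doubled, whenever exceeded. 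Doubling would cost $\sum_{k\le\log n}O(k\,\alpha\log\Delta)=O(\alpha\log^2 n\log\Delta)$. Squaring makes the per-phase bounds $O(\alpha\log m\log\Delta)$ geometric and gives $O(\alpha\log n\log\Delta)$, since the final guess satisfies $m\le n^2$.
\end{itemize}
Given Theorem~\ref{thm5.6}, this guess-and-square step is precisely the paper's proof of the statement. Your plan needs it, or an equivalent device, to reach the claimed ratio.
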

Here $n$ denotes the number of revealed vertices and $\Delta$ the maximum degree of $G[V_n]$, the subgraph induced by requested vertices. As in the unweighted case, the competitive ratio holds at every step of the algorithm.

We establish matching lower bounds for our randomized algorithms.
\begin{theorem}
\label{thm1.7}
There is no polynomial-time online algorithm for \textnormal{incremental DS} or \textnormal{incremental CDS} with competitive ratio $o(\log^2 \Delta)$ unless \textnormal{NP} $\subseteq$ \textnormal{BPP}, even when the neighborhood of requested vertices is known.
\end{theorem}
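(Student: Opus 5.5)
The plan is a reduction from the online set cover lower bound of Korman (the $\Omega(\log m \log n)$ hardness for polynomial-time online set cover unless NP $\subseteq$ BPP). In the hard instances of that result, there are $m$ sets and $n$ elements, and elements arrive online. The bound $\Omega(\log n\log m)$ holds already when $m$ and $n$ are polynomially related, so it can be read as $\Omega(\log^2 N)$ with $N = \mathrm{poly}(m,n)$. We will encode each such instance as an incremental DS instance in which $\Delta$ is polynomial in $m+n$. Then an $o(\log^2\Delta)$-competitive polynomial-time algorithm for incremental DS would give an $o(\log m\log n)$-competitive online set cover algorithm.

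The construction is as follows.
\begin{enumerate}
\item First reveal one vertex $s_j$ for each set $S_j$, with weight $w(S_j)$ (or weight $1$ in the unweighted hard instances), together with a hub vertex $h$ of weight $0$, or of tiny weight if zero weights are disallowed. The vertex $h$ is adjacent to all $s_j$, so the prefix is dominated cheaply by taking $h$ alone. For CDS, $h$ also keeps every selected $s_j$ connected.
\item When an element $e$ arrives online, request a new vertex $u_e$ adjacent exactly to $\{s_j : e\in S_j\}$, with weight so large that no competitive solution picks it. Since $u_e$ is not adjacent to $h$, it must be dominated by some already-present $s_j$ with $e\in S_j$, or by itself.
\item The incremental constraint is what makes this work: $u_e$ can only be dominated by vertices that have already arrived. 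Choosing $u_e$ itself costs far more than any competitive solution can afford. So the feasible incremental solutions of small cost are exactly set covers (plus $h$), and $\OPT$ is preserved up to an additive $w(h)$.
\item In the known-neighborhood model, revealing $N[u_e]$ upon request only reveals sets already known, so the reduction goes through unchanged.
\item For CDS, a set cover $\{s_j\}$ together with $h$ is connected through $h$, so the same reduction applies.
\end{enumerate}

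The degree is $\Delta\le \max(m, \deg(\text{element}) , \text{number of elements in a set})+1 = \mathrm{poly}(m,n)$, so $\log\Delta=\Theta(\log m+\log n)$. Whenever $m$ and $n$ are polynomially related, $\log^2\Delta = \Theta(\log m\log n)$.

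The main obstacle is the weighted versus unweighted question. The table claims the lower bound for weighted graphs, so using weights is allowed, and the large weights on the $u_e$ settle it. If one wants the unweighted version, replace each $u_e$ by many false twins (or attach pendant structures) so that selecting element vertices becomes expensive. Each element vertex must also be dominated at its own arrival time, which forces the algorithm's choice among the $s_j$ to be made online, exactly as in online set cover. A further check is that Korman's hard instances have $m,n$ polynomially related, so that $\log m\log n=\Theta(\log^2\Delta)$ rather than merely $\log^2$ of the number of vertices. I would verify this in Korman's construction, which is based on hardness of approximation for set cover with a $\log n$ gap composed over $\log m$ phases. Balancing the parameters there should give $m=\mathrm{poly}(n)$.
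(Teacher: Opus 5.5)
Your proposal is correct and is essentially the paper's reduction. Your hub $h$ plays the role of the paper's root $r$; it must be requested first, which you should say explicitly, since otherwise each $s_j$ arrives isolated and is forced into every solution. The rest matches: set vertices are requested next, then element vertices mirroring the online set cover sequence; CDS connectivity comes through the hub; and $\log\Delta=\Theta(\log \hat n)$ follows from the polynomially related parameters of Korman's hard instances, exactly as the paper uses them.

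The differences are minor:
\begin{enumerate}
\item The paper stays unweighted by replacing any selected element vertex with a set containing it, so your heavy weights and false twins are unnecessary.
\item The paper adds a terminal $t$ adjacent to $r$ and to all elements, with a pendant $t'$, so that never-requested elements can be requested at the end for one extra vertex. This matters in the known-neighborhood model, where requesting $s_j$ already reveals all its element neighbors; your step 4 only checks $N[u_e]$. The model as the paper defines it does, however, also tolerate revealed vertices that are never requested.
\end{enumerate}
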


For deterministic algorithms, we obtain a stronger lower bound.
\begin{theorem}\label{thmdetLB}
There is no deterministic online algorithm for \textnormal{incremental DS} or \textnormal{incremental CDS} with competitive ratio $o(\log^2 \Delta/\log \log \Delta)$, even when the neighborhood of requested vertices is known.
\end{theorem}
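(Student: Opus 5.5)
The plan is to reduce from the deterministic lower bound for online set cover of Alon, Awerbuch, Azar, Buchbinder and Naor. They prove that for every deterministic online algorithm there is a set system with $m$ sets over $n$ elements, together with an adaptive sequence of element requests, forcing competitive ratio $\Omega\!\left(\frac{\log m\log n}{\log\log m+\log\log n}\right)$. The key feature I rely on is that in their construction the set system is fixed and known to the algorithm in advance. The adversary only chooses adaptively which elements to request. This is exactly what makes the reduction survive the known-neighborhood model: revealing neighborhoods can at most expose the set system, never the request sequence.

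\textbf{Construction.} Given the set system $(\mathcal{U},\mathcal{S})$, I build an unweighted graph as follows.
\begin{itemize}
\item There is a root $r$, requested first and adjacent to one vertex $x_S$ per set $S\in\mathcal{S}$.
\item The set vertices $x_S$ are requested next, all of them, in any order.
\item Each element $e\in\mathcal{U}$ gets $k$ copies $e^{(1)},\dots,e^{(k)}$. Each copy is adjacent to $x_S$ for every $S\ni e$, and to nothing else.
\end{itemize}
When the set-cover adversary requests $e$, the incremental adversary requests the $k$ copies of $e$ one after another. Elements never requested by the set-cover adversary have their copies revealed, as neighbors of set vertices, but never requested. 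This is allowed, and the statement's $\Delta$ refers only to the requested graph.

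\textbf{Feasibility and cost of the optimum.} Since the root arrives first, any feasible solution must contain $r$ or $x_S$ for the first set vertex; in either case the set vertices are cheaply dominated. The incremental optimum takes $r$ together with the vertices $x_S$ of an optimal cover of the requested elements. Its cost is $\OPT_{SC}+1\le 2\,\OPT_{SC}$. This solution is feasible at every prefix, because all $x_S$ are requested before any element copy. It is also connected, since every chosen $x_S$ is adjacent to $r$, so the same bound holds for incremental CDS.

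\textbf{Converting the online algorithm.} Given a deterministic incremental algorithm $A$, I convert it into an online set cover algorithm. Whenever $A$ selects $x_S$, the set-cover algorithm buys $S$. A copy $e^{(i)}$ can be dominated only by itself or by some $x_S$ with $S\ni e$. So while no set containing $e$ has been bought, $A$ pays $1$ for each copy it self-selects. Let $c$ be the competitive ratio we aim to rule out. The set-cover algorithm adds an arbitrary set containing $e$ as soon as $A$ has self-selected $k$ copies of $e$. If instead $A$ buys a covering set before that point, we follow it. In every case the set-cover cost is at most the cost of $A$, because each forced purchase is charged to $k\ge 1$ copies paid by $A$. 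Hence $A$'s ratio is at least the set-cover ratio, up to the factor $2$ from the optimum.

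The main obstacle I expect is calibrating the parameters so that $\log\Delta=\Theta(\log m)=\Theta(\log n)$ while the self-domination loophole stays harmless. The degrees are:
\begin{itemize}
\item $\deg r = m$;
\item a set vertex has degree at most $k\,|S|+1\le kn+1$;
\item a copy has degree at most $m$.
\end{itemize}
I would take $m=\Theta(n)$ (Alon et al.'s instances allow $m$ and $n$ polynomially related) and $k=n$. Then $\Delta\le n^2+1$, so $\log\Delta=\Theta(\log n)$. Alon et al.'s bound then becomes $\Omega(\log^2\Delta/\log\log\Delta)$, which is the statement. The accounting in the conversion step needs care at one point: the adversary may stop midway through the copies of $e$. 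I would handle this by letting the set-cover adversary wait until all copies have been requested, which is harmless because the copies of $e$ are requested consecutively.
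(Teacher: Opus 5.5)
Your proposal is correct and is essentially the paper's proof. It uses the same gadget of a root plus set vertices, requested first. Element vertices are then requested as Alon et al.'s deterministic set-cover adversary dictates. The known-neighborhood case goes through because that lower bound already assumes a known set system, and the CDS case follows from connectivity through $r$. Your explicit online conversion, which replaces a self-selected element by a set containing it, is if anything a more careful version of a step the paper leaves implicit.

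The differences are cosmetic. The paper adds a terminal $t$ with pendant $t'$, so the leftover elements are requested at the end and dominated by $t$ rather than left unrequested. Your $k$ copies per element are unnecessary: your own charging already works with $k=1$, and $\deg r=m=\Theta(n)$ already gives $\log\Delta=\Theta(\log n)$.
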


Finally, in the random-order setting where vertices arrive according to a uniformly random permutation, we establish a logarithmic lower bound.

\begin{theorem}
\label{thm1.10}
There is no online random-order algorithm for \textnormal{incremental DS} or \textnormal{incremental CDS} with competitive ratio $o(\log \Delta)$.
\end{theorem}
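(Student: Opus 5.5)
We plan to apply Yao's minimax principle. We will build a distribution over input sequences, with arrival order uniformly random, on which every deterministic online algorithm pays in expectation $\Omega(\log\Delta)$ times the expected incremental optimum. The natural source for the construction is the $\Omega(\log n)$ lower bound for online set cover. We will adapt it so that a uniformly random arrival order still leaves the algorithm uncertain about which "set" vertices will cover future "element" vertices.

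\textbf{Construction.} Take a bipartite graph with a small set $A$ of hub vertices and a large set $B$ of leaf vertices. Choose $|A|=k$ with $k=\Theta(\Delta)$. Each leaf is adjacent to a random subset of hubs, sampled so that a single hidden hub $a^\ast$ (drawn uniformly at random) is adjacent to every leaf. Every leaf is also adjacent to roughly half of the remaining hubs, chosen independently. To make the incremental benchmark cheap, give each hub a large multiplicity: replace it by $m$ twin copies with identical neighborhoods. Under a random order, some copy of $a^\ast$ then arrives early with high probability, namely within the first $O(|V|/m)$ arrivals. Give hubs weight $1$ and leaves weight $M$, where $M\gg\log\Delta$, so that no one ever wants to select leaves.

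\textbf{Benchmark.} The incremental optimum selects the first-arriving copy of $a^\ast$. It must also dominate the few vertices that arrive before that copy, and for these it uses at most $O(1)$ additional hubs in expectation. Before any hub arrives, a leaf has to dominate itself; we bound the expected number of such early leaves by a constant by choosing $m$ large relative to $|B|/k$. So $\E[\OPT]=O(1)$.

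\textbf{Algorithm's cost.} Leaves arrive in random order, and each arriving leaf reveals its hub neighborhood. As in the classical halving argument, the algorithm cannot distinguish $a^\ast$ from the hubs still consistent with the neighborhoods revealed so far. After $j$ leaves, about $k/2^j$ candidate hubs remain. Each arriving leaf not yet dominated forces a hub choice, and with constant probability that choice misses the surviving candidates at the next step. Summed over $\log k$ phases, this forces $\Omega(\log k)=\Omega(\log\Delta)$ expected hub purchases before $a^\ast$ is identified, or else the algorithm pays $M$ for a leaf.

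\textbf{Main obstacle.} Reconciling the random order with the adaptive halving is the hard part. Because the order is random, the algorithm may see many hub copies before any leaf, and the twin copies all look alike. We would fix this by making the hubs indistinguishable until leaves reveal adjacency. Learning happens only through leaves, which arrive uniformly spread, so at each leaf arrival the set of hubs consistent with all revealed leaves is uniformly distributed given the history. It then remains to verify that a leaf is undominated with constant probability whenever the algorithm holds fewer than $\log k$ hubs. This follows because each purchased non-$a^\ast$ hub misses a fresh leaf independently with probability $1/2$.
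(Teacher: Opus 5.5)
Your proposal has two genuine gaps. The second one defeats the construction as designed.

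First, the hubs themselves must be dominated, and your benchmark ignores this. In your bipartite graph a hub copy is adjacent only to leaves. When it arrives, it can be dominated only by itself or by an already requested leaf of weight $M$ in the solution. In a uniformly random order the expected number of hub copies preceding the first leaf is $km/(|B|+1)$. Your choice of parameters makes this large: to keep the expected cost $M|B|/(km+1)$ of self-dominating early leaves at $O(1)$ you need $km\gtrsim M|B|$. Each of these isolated hub copies must then be bought by every incremental solution, including OPT. Even if the copies of each hub form a clique, one copy of each of roughly $\min\{k,km/|B|\}$ hubs must be bought, and this is $\gg\log\Delta$. So the claim $\E[\textnormal{OPT}]=O(1)$ fails, and an additive $\Omega(\log k)$ excess for the algorithm no longer gives a multiplicative bound. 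The paper avoids this with the clique $R$ of $n^5$ roots adjacent to every set copy. With probability $1-O(1/n)$ a root arrives first, is bought at cost $1$, and dominates every later root and every set copy. That same root also makes the optimal solution connected, which is how the paper gets the incremental CDS half of the statement; your proposal does not address CDS at all.

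Second, the halving argument fails for independently random adjacencies. Your last step asserts that a fresh leaf is undominated with constant probability while the algorithm holds fewer than $\log k$ hubs. But if it holds $t$ hubs other than $a^\ast$, a fresh leaf is undominated with probability $2^{-t}$, not a constant. Consider the algorithm that, only when forced, buys a hub adjacent to all leaves seen so far, and let $T_\ell$ be its number of purchases after $\ell$ leaves. Then $\E[2^{T_{\ell+1}}-2^{T_\ell}]\le 1$, so by Jensen $\E[T_L]\le\log_2(L+1)$. After $L=2\log_2 k$ leaves, with probability at least $1-1/k$ only $a^\ast$ is still consistent, so the next forced purchase is $a^\ast$. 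The expected number of later purchases other than $a^\ast$ is at most $\sum_{\ell>L}k2^{-\ell}\le 1/k$. Hence this algorithm pays $O(\log\log k)$ in expectation, and your instance cannot separate by more than $O(\log\log\Delta)$.

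The adversarial halving construction relies on each new element avoiding all sets the algorithm has bought, and random order removes exactly that. You need an instance that is hard \emph{in random order}. The paper does not build one; it takes the random-order set cover lower bound (Theorem~\ref{thmrand}) as a black box and simulates it:
\begin{itemize}
\item it replicates each set into $n^3$ copies, so that with high probability some copy of every set precedes all elements;
\item on the resulting event $\eta$, which has probability at least $1/2$ (Lemma~\ref{lem6.4}) and is symmetric in the elements, the elements arrive in uniformly random relative order, so the element phase is exactly a random-order set cover instance;
\item $\Delta=\Theta(n^5)$ gives $\log\Delta=\Theta(\log n)$.
\end{itemize}
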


The random-order model is a restriction of the fully adversarial model, the above lower bound immediately implies the following.

\begin{cor}
\label{thm1.8}
There is no online algorithm for \textnormal{incremental DS} or \textnormal{incremental CDS} with competitive ratio $o(\log \Delta)$. This lower bound also holds when the neighborhood of requested vertices is known.
\end{cor}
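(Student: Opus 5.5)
The corollary follows from \Cref{thm1.10} by a reduction: the random-order model is a special case of the fully adversarial online model. Suppose, towards a contradiction, that some online algorithm $\mathcal{A}$ (possibly randomized) for incremental DS or incremental CDS has competitive ratio $c(\Delta)=o(\log \Delta)$ against an adversarial arrival order. The adversarial guarantee says that for every graph $G$ and every ordering $\sigma$ of its vertices, $\E[\mathcal{A}(G,\sigma)] \le c(\Delta)\cdot \OPT(G,\sigma)$, possibly up to an additive constant.

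Run $\mathcal{A}$ unchanged in the random-order setting. For each fixed realization of the uniformly random permutation $\sigma$, the instance $(G,\sigma)$ is a legitimate adversarial instance, so the bound above holds pointwise in $\sigma$. One point needs care: in the incremental model the benchmark $\OPT(G,\sigma)$ depends on the order, so the random-order ratio must be read with the same order-dependent optimum. This is the definition under which \Cref{thm1.10} is stated. Taking expectation over $\sigma$ gives
\[
\E_\sigma\!\left[\E[\mathcal{A}(G,\sigma)]\right] \le c(\Delta)\cdot \E_\sigma[\OPT(G,\sigma)],
\]
and the per-realization ratio bound also holds, since it is pointwise. Either way, $\mathcal{A}$ is $c(\Delta)$-competitive in the random-order model, which contradicts \Cref{thm1.10}.

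For the known-neighborhood clause, an algorithm in the known-neighborhood model receives more information. I would check that the lower-bound instances of \Cref{thm1.10} remain hard when neighborhoods are revealed on request. The cleanest route is to verify that its proof's argument does not rely on hiding neighborhoods; for example, the hardness may come from the random order itself, with revealed neighborhoods giving no useful information about which vertices will be requested early. I expect this to be the only nontrivial step. If the random-order construction does use hidden neighborhoods, I would instead cite the known-neighborhood lower bounds \Cref{thm1.7} and \Cref{thmdetLB}. These do not directly cover computationally unbounded randomized algorithms, so the intended route is to confirm that the random-order instance of \Cref{thm1.10} is stated to hold with known neighborhoods.
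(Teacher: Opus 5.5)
Your reduction from the adversarial model to the random-order model is correct and is exactly the paper's argument for the main claim. Your remarks about the order-dependent $\OPT$ and the pointwise bound in $\sigma$ are fine.

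The gap is the known-neighborhood clause. You do not prove it, and the route you say you intend to take, confirming that the instance of Theorem~\ref{thm1.10} stays hard, cannot work. On the event $\eta$, the first vertex $r\in R$ reveals every clique $S_i$, and the first requested copy of each $S_i$ reveals all elements of $s_i$. So with known neighborhoods the whole graph is visible before any vertex of $E$ is requested. A uniformly random permutation of $V$ eventually requests every vertex of $E$. Hence a computationally unbounded algorithm can, at that moment, select already-requested copies of the sets in a minimum cover of $E$ (taking any singleton of that cover on arrival), and this matches $\OPT$ on $\eta$. The conditional $\Omega(\log n)$ argument therefore does not survive known neighborhoods. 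Your fallback to Theorems~\ref{thm1.7} and~\ref{thmdetLB} only handles polynomial-time and deterministic algorithms.

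The paper disposes of this clause in one sentence, appealing to the set system being known in advance in the set-cover reduction. That reasoning is sound for the adversarial reduction on $\{r,t,t'\}\cup\mathcal{S}\cup U$ behind Lemma~\ref{lem6.1}. There, known neighborhoods reveal only the set system, while the adversary still controls which elements arrive before $t$. By the observation above, it is not sound for the random-order instance.

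To close the gap, apply that adversarial reduction to a known-set-system instance that is hard for \emph{every} randomized algorithm, for example the halving instance:
\begin{itemize}
\item take $m=2^h$ sets at the leaves of a complete binary tree;
\item create one element per tree node, contained in the sets below it;
\item request the elements along a uniformly random root-to-leaf path, top-down.
\end{itemize}
The reduced instance has $\OPT=3$, namely $r$, the leaf set and $t$, and this optimum is connected. It also has $\Delta=\Theta(m)$.

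By Yao's principle, fix a deterministic algorithm and consider the quantity (vertices bought) minus one half of (bought sets lying below the current node). This quantity grows by at least $\tfrac12$ in expectation per level, for two reasons. Either the algorithm already holds sets below both children, and those below the sibling are lost. Or, with probability $\tfrac12$, the path moves to a child below which it holds nothing, and a purchase is forced. So the expected cost is at least $h/2$. This gives $\Omega(\log m)=\Omega(\log\Delta)$ for incremental DS and CDS with known neighborhoods.
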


\subsubsection{The Offline Setting}
\label{sec:ell}
We study the offline variants of incremental DS and CDS, where the entire graph and the vertex request order are known in advance. We show that these problems remain NP-complete and hard to approximate, and establish tight approximation bounds up to constant factors.
\begin{theorem}
\label{thm1.1}
    \textnormal{Incremental DS} and \textnormal{incremental CDS} are \textnormal{NP-complete}. Furthermore, unless \textnormal{NP} $\subseteq$ \textnormal{BPP} there is no polynomial-time $o(\log\Delta)$ approximation algorithm for either problem.
\end{theorem}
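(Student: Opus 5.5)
Membership in NP is immediate: given $S\subseteq V$, we check for each prefix $t\in[n]$ that $S\cap\{v_1,\dots,v_t\}$ dominates $G[\{v_1,\dots,v_t\}]$ (and, for CDS, that $S\cap\{v_1,\dots,v_t\}$ induces a connected subgraph). This takes polynomial time.

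For hardness, the plan is an approximation-preserving reduction from Set Cover restricted to instances whose sets have size at most $\Delta$. That problem admits no $o(\log\Delta)$-approximation unless NP $\subseteq$ BPP (Feige, and the bounded-set-size results of Trevisan). Given a universe $U$ and a family $\mathcal{F}$ with $|F|\le \Delta$, we build the graph in three stages.
\begin{enumerate}
\item Request one vertex $c$ of weight $0$.
\item Request a vertex $s_F$ for every $F\in\mathcal{F}$, of weight $1$, adjacent to $c$ (and to each other, for the CDS variant). In every prefix the cheap vertex $c$ dominates all of these, so no set vertex is forced to be selected.
\item Request the element vertices $u_e$, each adjacent exactly to $s_F$ for $F\ni e$, of weight large, say $|\mathcal{F}|+1$.
\end{enumerate}
Dominating $u_e$ then requires either $u_e$ itself or some $s_F$ with $e\in F$.

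The costs match in both directions. Any set cover $\mathcal{C}$ gives the feasible incremental solution $\{c\}\cup\{s_F:F\in\mathcal{C}\}$. We must check feasibility at every prefix: $c$ dominates every set vertex, and when $u_e$ arrives it is already dominated by a selected $s_F$. Since set vertices may be selected any time after their request, this is allowed. For CDS the selected vertices are connected through $c$. Conversely, given a solution of cost $W$, we replace each selected $u_e$ by any $s_F$ with $F\ni e$, which does not increase the cost, and read off a cover of size at most $W$. Hence OPT values coincide and approximations transfer. With the clique on set vertices, the maximum degree is about $|\mathcal{F}|$, which breaks the $\Delta$ parameter. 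I would therefore drop the clique and use the star through $c$; $c$ then has degree $|\mathcal{F}|$, which must also be avoided.

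The main obstacle is therefore keeping the maximum degree $O(\Delta)$, or at least $\mathrm{poly}(\Delta)$, which suffices since $\log$ only changes by a constant factor. To fix the hub, I would replace $c$ by a weight-$0$ bounded-degree tree, for example a path or binary tree of zero-weight vertices requested first. Each set vertex hangs off a distinct tree node, so degrees stay $O(1)$ plus the incident elements. Element vertices have degree equal to their frequency, which I bound by starting from hard Set Cover instances with bounded frequency and set size. Alternatively, I would invoke directly the $\Omega(\log\Delta)$-hardness of Dominating Set on graphs of maximum degree $\Delta$ and embed it this way. The zero-weight tree is connected and dominates every set vertex, so CDS feasibility still holds in every prefix.

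NP-completeness of the decision version then follows from the same reduction, starting from NP-complete Set Cover or Dominating Set. If zero weights are undesired, we give tree vertices weight $\varepsilon$ tiny. In the unweighted case we would instead need gadgets, but the stated theorem covers weighted graphs.
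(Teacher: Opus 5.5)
Your argument is correct for the weighted problem as the paper defines it, but it differs from the paper's proof in one essential point.

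The paper also reduces from Dominating Set on graphs of maximum degree $\Delta$. It builds essentially the layer you get by instantiating your construction with the sets $N[v]$: the original vertices $V$ act as set vertices, and a mirror copy $U$ acts as element vertices. The difference is in how the set vertices get dominated for free.

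\begin{itemize}
\item You use a zero- (or $\varepsilon$-)weight bounded-degree tree. This makes the incremental optimum exactly equal to the set-cover optimum, and since the tree is connected it also settles CDS with no extra gadget.
\item The paper keeps all weights equal to $1$ and first requests $n/\Delta$ hubs $s_i$, each adjacent to a block of $\Delta$ vertices of $V$. The hubs are isolated on arrival, so every solution must contain them. But $n/\Delta\le 2n/(\Delta+1)\le 2\gamma(G)$, so the two optima agree up to a constant factor. For CDS the paper adds forced connectors $s'_i$ between consecutive hubs. This is exactly the unweighted gadget you left open.
\end{itemize}

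Your route is simpler and exact, and it works from any source with bounded set size and bounded frequency. The paper's route proves hardness even for unweighted graphs, which is strictly stronger. That counting step, however, relies on the Dominating-Set-specific facts $|\mathcal F|=|U|=n$ and $\gamma(G)\ge n/(\Delta+1)$.

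Two small points. First, commit to the bounded-degree Dominating Set source, as the paper does: it guarantees set sizes and element frequencies at most $\Delta+1$. Bounded frequency in Trevisan's bounded-set-size Set Cover instances would need separate justification. Second, the heavy weights on element vertices are unnecessary. Replacing a selected $u_e$ by any $s_F$ with $e\in F$ costs at most $1$ and preserves feasibility.
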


Complementing this lower bound, we give a matching polynomial-time $O(\log\Delta)$-approximation algorithm for both problems.
\begin{theorem}
\label{thm1.2}
    There is a polynomial-time $O(\log\Delta)$-approximation algorithm for \textnormal{incremental DS} and \textnormal{incremental CDS}.
\end{theorem}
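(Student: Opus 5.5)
The plan is to cast both problems as covering problems with a small number of constraints of bounded size, solve them by rounding, and lose only $O(\log\Delta)$.

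\textbf{Incremental DS.} For each time $t$ and each $v_i$ with $i\le t$, we need some $u\in N[v_i]\cap\{v_1,\dots,v_t\}$ in $S$. Only the constraint at the earliest time matters, since sets $\{v_1,\dots,v_t\}$ grow with $t$: the binding constraint for $v_i$ is at $t=i$. At that moment it reads
\[
\sum_{u\in N[v_i],\,u\in\{v_1,\dots,v_i\}} x_u\ge 1 .
\]
So incremental DS is exactly a weighted set cover instance. The elements are the vertices $v_i$, and element $v_i$ is covered by $u$ iff $u\in N[v_i]$ and $u$ precedes or equals $v_i$. Each vertex $u$ is a set covering at most $\Delta+1$ elements. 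Greedy weighted set cover, or LP rounding with $O(\log\Delta)$ scaling, then gives an $H_{\Delta+1}=O(\log\Delta)$ approximation in polynomial time. Both are standard, so this part is routine.

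\textbf{Incremental CDS.} Connectivity must also hold at every prefix: $S\cap\{v_1,\dots,v_t\}$ must be a connected dominating set of $G[\{v_1,\dots,v_t\}]$ for every $t$. I would use the local constraint formulation the paper announces. The key structural observation I would aim to prove is the following. In an incremental CDS, when $v_t$ arrives, the previous solution is already connected and dominating. Therefore feasibility at step $t$ reduces to two local conditions: $v_t$ is dominated by a selected earlier-or-equal neighbor, and every newly relevant selected vertex is attached to the old component via a neighbor.

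Concretely, I would reformulate as follows. Select $v_t$ if needed. Each selected $v_t$, for $t>$ the first selected vertex, must have a selected neighbor among $v_1,\dots,v_{t-1}$. This holds because $S\cap V_t$ connected and $S\cap V_{t-1}$ connected force $v_t$, if in $S$, to attach to an earlier selected vertex. This gives an implication-type constraint $x_{v_t}\le\sum_{u\in N(v_t),u<v_t}x_u$. That constraint is not covering. So the main obstacle is converting it into covering form.

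My first attempt would be to charge by "anchors." Replace each domination request for $v_i$ by the choice of a pair $(u,w)$, where $u$ dominates $v_i$ and $w$ connects $u$ backwards. Alternatively, observe that the dominator $u$ of $v_i$ with $u\le v_i$, if $u\neq v_i$, was itself dominated earlier. This suggests covering sets of the form $N[u]$ restricted to earlier vertices, with size $O(\Delta)$. Then a set cover with sets of size $\mathrm{poly}(\Delta)$ still yields $O(\log\Delta)$.

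Concretely, I would show the following. Any feasible incremental DS solution $D$ can be converted to an incremental CDS solution of cost at most $3\cdot w(\OPT_{CDS})$ by an appropriate local augmentation. Conversely, the CDS optimum is at least a related covering LP whose constraints each involve $O(\Delta^2)$ variables. Greedy on that LP gives $O(\log\Delta)$, and a final fix-up step restores connectivity, charged to the LP. I expect proving that the fix-up is chargeable within a constant factor to be the hardest step.
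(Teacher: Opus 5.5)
Your incremental DS argument is correct and matches the paper. The constraints $\sum_{u\in N[v_i]\cap V_i}x_u\ge 1$ form a set cover instance in which each vertex lies in at most $\Delta+1$ constraints, so greedy (or Srinivasan's rounding, which the paper invokes) gives $O(\log\Delta)$.

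The CDS half, however, has a genuine gap. You write down a non-covering implication, call its conversion to covering form the main obstacle, and then only list candidate remedies: anchor pairs, $O(\Delta^2)$-size constraints, and a fix-up step whose constant-factor charging you leave open. Your claim that any incremental DS $D$ can be locally augmented to an incremental CDS of cost at most $3\,w(\OPT_{CDS})$ cannot hold for arbitrary $D$, since an augmentation of $D$ costs at least $w(D)$. Your reformulation is also wrong as stated. With ``first selected vertex'' read globally, it is violated by every isolated arrival after the first, since such a vertex must be selected and has no earlier neighbor. Read per component of the current prefix, it is not sufficient, because nothing forces a vertex that merges components to be selected. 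On the path $a - j - b$ requested in the order $a,b,j$, the set $\{a,b\}$ dominates every prefix and satisfies your implication, yet it is disconnected in the connected graph $G[V_3]$.

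The missing idea is that the components of $G[V_{i-1}]$ are determined by the input and the order alone, not by the solution. So each arrival can be classified in advance and given its own constraints; these are the paper's openers, attachers and joiners.

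\begin{itemize}
\item If $v_i$ has earlier neighbors in exactly one component $C$, then $S\cap C\neq\emptyset$, since vertices of $C$ can only be dominated from inside $C$. Hence $v_i$ needs a selected earlier neighbor whether or not it is selected: for domination if $v_i\notin S$, for connectivity if $v_i\in S$. Your domination constraint and your implication therefore collapse into the single covering constraint $\sum_{u\in N(v_i)\cap V_{i-1}}x_u\ge 1$.
\item If $v_i$ has earlier neighbors in components $C_1,\dots,C_k$ with $k\ge 2$, every path in $G[V_i]$ between the nonempty sets $S\cap C_h$ passes through $v_i$. This forces $x_i\ge 1$, together with one constraint $\sum_{u\in N(v_i)\cap C_h}x_u\ge 1$ for each $h$.
\item An isolated arrival gets $x_i\ge 1$.
\end{itemize}

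So the conditional disappears in every case: for attachers its consequent holds unconditionally, and for joiners and openers its antecedent is forced. An induction on $i$ shows that the integral solutions of this system are exactly the incremental CDSs. Each variable appears in $O(\Delta)$ constraints, so the same greedy or Srinivasan rounding as in the DS case gives $O(\log\Delta)$, with no connectivity fix-up needed.
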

\subsection{Our Techniques}
We start in Section~\ref{sec:LP} by formulating both incremental DS and incremental CDS as covering problems. The formulation of incremental DS is rather standard, but for incremental CDS we introduce a more involved formulation that must maintain connectivity throughout. For each newly requested vertex, we create different constraints depending on whether it connects to zero, one, or more than one existing connected component. This formulation simultaneously captures both domination and connectivity requirements, while ensuring that both the maximum constraint size and maximum frequency are linear in $\Delta$, enabling the application of standard techniques for online covering problems. This formulation is the foundation for most of our results.

As an immediate result from the linear formulation we prove the $O(\log^2\Delta)$-competitive ratio for online randomized algorithms, and an $O(\log\Delta)$ approximation ratio for the offline setting, where both the graph and request order are known in advance. 
Another result of this formulation is a greedy deterministic $O(\Delta)$-competitive algorithm for the weighted variant of both settings.

In Section~\ref{sec:neighbor}, we use the extra information provided by knowing the neighborhood of requested vertices to exponentially improve the competitive ratio for deterministic algorithms. When the graph is unweighted, we maintain both a fractional and an integral solution, coupled through a potential function that is exponential in the fractional coverage of elements not yet covered integrally. The exponential growth ensures that at most $O(|\text{OPT}|)$ elements can be covered fractionally without being covered integrally. We combine this bound with an algorithm that guarantees that whenever a new uncovered vertex is requested, we add a logarithmic number of vertices to the dominating set. Together, these techniques provide the desired competitive ratio. We note that our potential function approach is inspired by Alon et al.~\cite{osc}, yet our solution gives improved competitive bounds and handles a wider range of settings.

We note that $\Delta_G$ can increase over time, to handle this we use a guess-and-square technique that achieves the desired $O(\log^2 \Delta_G)$ competitive ratio, improving upon the $O(\log^3 \Delta_G)$ bound obtained by the standard guess-and-double approach (see Section~\ref{doubling} for more details). This technique is used several times throughout the paper for algorithms where the maximum degree of the graph is not known in advance.

To get a similar competitive ratio for incremental CDS, we partition constraints into two sets: the first one contains the constraints handled by the incremental DS algorithm and the second one contains at most $O(|\text{OPT}|)$ other constraints. The small size of the second set ensures that satisfying these constraints preserves the competitive ratio.

For the weighted case, we augment the potential function with a term that bounds the difference between the cost of the integral and fractional solutions while simultaneously bounding the increase of the potential throughout the runtime, yielding a logarithmic competitive ratio for incremental DS.

To obtain matching lower bounds on our competitive ratios, we provide reductions from dominating set and online set cover. A key insight in both reductions is the use of auxiliary vertices that are revealed first and must be selected. Then, we reveal vertices which are covered by these auxiliary vertices, revealing information about the instance to the incremental algorithm (either the sets or the graph depending on the reduction), thus enabling us to construct the reductions. These reductions preserve both the instance size and the optimal solution size, allowing us to transfer known lower bounds to our setting. 

For the random-order lower bound, we replicate vertices so that a random permutation induces the adversarial set cover sequence with high probability, transferring random-order set cover lower bounds to the random order setting.

 \section{Preliminaries}
We introduce the notation used throughout the paper, formally defining the problems we study, and presenting their covering formulations.

 Let $G = (V, E, w)$ denote an undirected graph with vertex set $V$, edge set $E$, and weight function $w: V \to \R^+$. Let $w(S) = \sum_{s\in S}w(s)$ denote the total weight of all elements in $S\subseteq V$. 
The set of all connected components of $G$ is denoted by $C(G)$. 

Without loss of generality, we assume that $V = \{v_1, \dots, v_n\}$, and that vertices are requested in this order, i.e., vertex $v_1$ is requested first, then $v_2$, and so on.  Denote $V_i = \{v_1,\dots, v_i\}$ when the order is understood from context.

For a vertex $v$, we denote by $N(v)$ the set of neighbors of $v$ and $N[v] = N(v)\cup \{v\}$. We define the natural extensions to sets $N[U] = \bigcup_{u\in U} N[u]$.

For any $H\subseteq V$, let $G[H]$ be the induced subgraph of $G$ on $H$, formally $G[H] = (H, E\cap H\times H, w|_H)$ where $w|_H$ is the restriction of $w$ to $H$. This operation removes all vertices which are not in $H$, as well as edges incident to them.

We now define the competitive ratio used to benchmark our algorithms 
throughout the paper.

\begin{definition}[Competitive Ratio]
The \textnormal{competitive ratio} of an online algorithm $\mathcal{A}$ is 
\[\sup \frac{w(\mathcal{A})}{w(\mathrm{OPT})},\]
where $\mathrm{OPT}$ denotes the cost of the optimal algorithm, and the supremum is taken 
over all graphs, weight functions, and request sequences.
\end{definition}

In this paper, OPT always refers to an optimal \textbf{incremental} algorithm.

\subsection{Problem Definitions}
We begin by defining the dominating set and connected dominating set problems.
\begin{definition}[Dominating Set]
    For a graph $G$, a \textnormal{dominating set} is a subset $S\subseteq V$ such that $\bigcup_{v\in S}N[v] = V$. The objective is to minimize $w(S)$. 
\end{definition}
In addition to covering all vertices, we may impose the requirement that the dominating set be connected within each connected component of the graph. This leads to the following definition.
\begin{definition}[Connected Dominating Set]
A \textnormal{connected dominating set (CDS)} is a \textnormal{dominating set} of $G$ such that for every connected component $C$ of $G$, the set $S \cap C$ is connected.
\end{definition}
To model the decisions an online algorithm must make incrementally, we define the incremental dominating set problem and its connected variant.
Recall that $V_i$ denotes the set containing the first $i$ vertices requested by the online process.
\begin{definition}[Incremental Dominating Set]
For a graph $G$, an \textnormal{incremental dominating set} is a subset $S\subseteq V$ such that for all $i\in[n]$, $S\cap V_i$ is a \textnormal{dominating set} of $G[V_i]$.
\end{definition}
\begin{definition}[Incremental Connected Dominating Set]
For a graph $G$, an \textnormal{incremental connected dominating set} is a subset $S\subseteq V$ such that for all $i\in[n]$, $S\cap V_i$ is a \textnormal{connected dominating set} of $G[V_i]$.
\end{definition}

In the online setting, the vertices of a graph are requested one by one. In this paper, we consider two information models: at step $i$, the algorithm either knows the induced subgraph $G[V_i]$, which corresponds to the standard vertex arrival model, or the neighborhood subgraph $G[N[V_i]]$. Throughout the paper, the competitive ratio is measured against $\mathrm{OPT}$, the optimal incremental dominating set.

\begin{definition}[Known Neighborhood Model]
In the \textnormal{known neighborhood model}, at step $i$ when vertex $v_i$ is requested, the algorithm observes the induced subgraph $G[N[V_i]]$ rather than just $G[V_i]$. Vertices in $N[V_i] \setminus V_i$ are visible but cannot be selected for the dominating set until requested.
\end{definition}

We work in the \textnormal{late-accept} model~\cite{main, boyar2022relaxing}, where vertices can be added to the solution at any time but cannot be removed once chosen. We now define the online variants of the incremental Dominating Set and incremental Connected Dominating Set problems.

\begin{definition}[Online Incremental Dominating Set]
In the \textnormal{online incremental dominating set} problem, the algorithm must maintain a dominating set of $G[V_i]$ at each step $i$.
\end{definition}

\begin{definition}[Online Incremental Connected Dominating Set]
In the \textnormal{online incremental connected dominating set} problem, the algorithm must maintain a connected dominating set of $G[V_i]$ at each step $i$.
\end{definition}
Since any superset of a CDS is also a CDS, the late accept setting does not violate the incrementality requirements of previous steps. Adding vertices to the solution at later steps preserves the CDS property for all previous graphs.
   
   \subsection{Covering Problems}
   \label{sec:cover}
In this paper, we use linear programs with covering constraints
\begin{definition}[Covering Problem]
    
A \textnormal{covering problem} is a linear program of the form:
\begin{equation}
	\begin{aligned}
  ~~~~~~~~	& \min~~~  \sum_{i\in [n]}  w(i)\cdot x_i \\
		& ~~~~~~~~\text{s.t. }\\
		& a^j\cdot \vec{x}  \geq 1, 	~~~~~~~~~~~~~~\forall j\in [m] \\ 
		& ~~~~x_i \geq 0 ~~~~~~~~~~~~~~~\forall i \in [n]
	\end{aligned}
\end{equation} 
where $a^j \in \{0,1\}^n$ for all $j\in [m]$.
\end{definition}

\section{Covering LPs for Incremental DS and Incremental CDS}\label{sec:LP}
We now formulate linear programming (LP) relaxations for incremental DS and incremental CDS, expressing both as covering problems with each constraint involving at most $O(\Delta)$ variables and each variable participating in $O(\Delta)$ constraints. Formulating both problems as cover LPs has two main advantages. First, it allows us to leverage techniques already known in the literature for online covering problems. Second, it provides a unified framework for designing algorithms for both problems simultaneously, an approach we exploit in later sections.

We formalize these observations in the following theorem,
\begin{theorem}\label{thm3.1}

Both \textnormal{incremental DS} and \textnormal{incremental CDS} have \textnormal{LP} relaxations that are covering problems. The size of each constraint is bounded, and so is the number of variables in each constraint. More precisely:
    \begin{enumerate}
        \item For \textnormal{incremental DS}, each constraint involves at most $\Delta + 1$ variables.\label{thm3.1a}
    \item For \textnormal{incremental CDS}, each constraint involves at most $\Delta$ variables.\label{thm3.1b}
    \item In both relaxations, each variable appears in at most $\Delta + 1$ covering constraints, and a single non-negativity constraint.
    \end{enumerate}
\end{theorem}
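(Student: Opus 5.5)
The plan is to write down, for each problem, an integer program whose constraints are all of the form $\sum_{v\in A} x_v\ge 1$. I then show that its $\{0,1\}$-solutions are exactly the incremental (connected) dominating sets, and relax $x_v\in\{0,1\}$ to $x_v\ge 0$. Write $t(u)$ for the index of vertex $u$ in the request order. All constraints are indexed by the arriving vertex $v_i$ and involve only vertices of $V_i$. This is what makes the formulation usable online: constraints arrive together with the vertices.

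\textbf{Incremental DS.} For every vertex $u$ I impose the single constraint $\sum_{v\in N[u]\cap V_{t(u)}} x_v\ge 1$, so $u$ must be dominated by itself or by a neighbor requested before it.
\begin{itemize}
\item Necessity: at step $t(u)$, the set $S\cap V_{t(u)}$ must dominate $u$ inside $G[V_{t(u)}]$.
\item Sufficiency: if the constraint holds, the dominating vertex lies in $V_i$ for every $i\ge t(u)$, so $u$ stays dominated in every $G[V_i]$. The domination requirements for $u$ at later steps are weaker, so they are redundant.
\item Size: the constraint has at most $|N[u]|\le\Delta+1$ variables.
\item Frequency: $x_v$ appears in the constraint of $u$ only if $u\in N[v]$ and $t(u)\ge t(v)$, so in at most $\Delta+1$ constraints.
\end{itemize}

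\textbf{Incremental CDS.} When $v_i$ arrives, let $C_1,\dots,C_k$ be the connected components of $G[V_{i-1}]$ that contain an earlier neighbor of $v_i$. These are determined by the graph alone, not by $S$. I split into three cases.
\begin{itemize}
\item $k=0$: $v_i$ is an isolated component of $G[V_i]$, so add $x_{v_i}\ge 1$.
\item $k=1$: add $\sum_{u\in N(v_i)\cap V_{i-1}} x_u\ge 1$.
\item $k\ge 2$: add $x_{v_i}\ge 1$ and, for each $j$, $\sum_{u\in N(v_i)\cap C_j} x_u\ge 1$.
\end{itemize}

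For necessity in the case $k=1$, note that $v_i$ must be dominated. If $v_i\notin S$, some earlier neighbor of $v_i$ is in $S$. If $v_i\in S$, connectivity of $S\cap V_i$ together with $S\cap C_1\neq\emptyset$ forces $v_i$ to have a neighbor in $S$, which again is an earlier neighbor. For $k\ge2$, $v_i$ is a cut vertex of the merged component of $G[V_i]$. The nonempty sets $S\cap C_j$ can only be connected through $v_i$, so $v_i\in S$ and $v_i$ has an $S$-neighbor in each $C_j$.

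Sufficiency goes by induction on $i$, assuming $S\cap V_{i-1}$ is a CDS of $G[V_{i-1}]$. Each new component of $G[V_i]$ is either one of the untouched components, or $\{v_i\}$, or $C_1\cup\dots\cup C_k\cup\{v_i\}$. In the last case the added constraints make $v_i$ dominated and reconnect the pieces $S\cap C_j$ through $v_i$. Old vertices stay dominated because $S$ only grows.

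The size bound is at most $\Delta$, since every sum is over a subset of $N(v_i)$ or is a single variable. For frequency, $x_u$ appears in at most one constraint generated at step $t(u)$. At a later step $i$ it appears only if $u$ is an earlier neighbor of $v_i$, and then in exactly one constraint, because the $C_j$ are disjoint. This gives at most $1+\Delta$ constraints in total.

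The main obstacle is the CDS case analysis. The key points are that the components $C_j$ are graph-defined, so the constraints are fixed covering constraints independent of the solution, and that the cut-vertex argument justifies forcing $x_{v_i}\ge1$ when $k\ge2$.
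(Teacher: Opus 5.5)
Your proposal is correct and follows the paper's proof essentially step for step. It uses the same per-vertex domination constraint for incremental DS, and the same three-way case split for incremental CDS, where your $k=0$, $k=1$, $k\ge 2$ correspond to the paper's opener/attacher/joiner; the cut-vertex necessity argument, the inductive sufficiency argument and the frequency count of one constraint from a vertex's own step plus one per neighbor also match. Your indexing by $k$ is a bit cleaner than the paper's definitions, where an opener technically also meets the attacher definition, but the route is the same.
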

\subsection{Incremental DS as a Covering Problem}
We begin by formulating the LP relaxation for incremental DS. For each vertex $v_i$, we introduce a variable $x_i \in [0,1]$.
\begin{proof}[Proof of Theorem~\ref{thm3.1}(\ref{thm3.1a})]
When $v_i$ is requested, either $v_i$ or one of its (previously requested) neighbors needs to be in the dominating set, so 
\[ \sum_{j: v_j \in N[v_i] \cap V_i} x_j \ge 1.\]
Due to the monotone nature of the solution, it suffices to ensure that each newly requested vertex is covered when it appears. This gives us the following relaxation
\begin{equation}
	\label{eq:LP}
	\begin{aligned}
  ~~~~~~~~	& \min~~~  \sum_{i\in [n]}  x_i \cdot w(v_i)\\
		& ~~~~~~~~\text{s.t. }\\
		& \sum_{j: v_j\in N[v_i] \cap V_i }x_j  \geq 1, 	~~~~~~~~~\forall i\in [n] \\ 
		& ~~~~~~~~~~~~~~~~x_i \geq 0, ~~~~~~~~~\forall i \in [n]
	\end{aligned}
\end{equation} 
This relaxation is a covering problem. Each constraint involves at most $\Delta + 1$ variables since it includes $v_i$ and its neighbors in $V_i$.
Moreover, each variable $x_j$ appears only in the non-negativity constraint and constraints for vertices $v_i$ such that $v_j\in N[v_i]$. Since $|N[v_i]| \le \Delta + 1$, each variable $x_j$ appears in at most $\Delta + 1$ covering constraints. 
\end{proof}
\subsection{Incremental CDS as a Covering Problem}\label{sec:CDSascover}

In this subsection, we prove Theorem~\ref{thm3.1}(\ref{thm3.1b}) by identifying a small set of integral constraints that ensures the chosen subset is both a dominating set and is connected. We provide the intuition for the LP formulation here, while the formal proof of correctness is deferred to Appendix~\ref{sec:cdstolp}.

To express connectivity using local constraints, we classify each new vertex by how it interacts with existing components: it may start a new one, attach to one, or join several together. The cases require different types of constraints to maintain both domination and connectivity. To the best of our knowledge, this is the first time that  connectivity is captured using only local constraints.

\begin{proof}[Proof of Theorem~\ref{thm3.1}(\ref{thm3.1b})]
We start by classifying the vertices into three types:

\begin{enumerate}
    \item Opener - A vertex that, upon arrival, is not connected to any other vertex. Formally, \(N[v_i] \cap V_i = \{v_i\}\). For example, the first vertex \(v_1\) is always an opener.
    \item Attacher - A vertex that, upon arrival, is connected to only one component. Formally, there exists \(C \in C(G[V_{i-1}])\) such that for all \(C' \neq C\), \(N[v_i]\cap V_i \cap C' = \emptyset\).
    \item Joiner - A vertex that connects at least two components upon arrival. Formally, there exist \(C_1, C_2 \in C(G[V_{i-1}])\) such that \(N[v_i] \cap V_i \cap C_1 \neq \emptyset\) and \(N[v_i] \cap V_i \cap C_2 \neq \emptyset\).
\end{enumerate}
We now add the appropriate constraints for each type to enforce both domination and connectivity.
\paragraph{Case 1: Opener.}
An Opener creates a new component and must be chosen to dominate itself. The only way to satisfy the dominating set requirement is to include it in the dominating set, thus $x_i \geq 1$.

\paragraph{Case 2: Attacher}
An Attacher connects to one existing component $C$. The domination and connectivity constraints are satisfied as follows:
\begin{itemize}
    \item If \(v_i\) is added to the CDS, at least one of its neighbors in \(C\) must also be in the CDS to maintain connectivity.
    \item If \(v_i\) is \textbf{not} added to the CDS, at least one of its neighbors must be added to satisfy the domination requirement.
\end{itemize}
Both scenarios are enforced by the constraint:
\[
\sum_{v_j \in N(v_i) \cap V_i} x_j \geq 1.
\]
This constraint is sufficient because the neighbors of \(v_i\) are already adjacent to the CDS, so adding any neighbor maintains both domination and connectivity constraints.

\paragraph{Case 3: Joiner.}
The joiner connects multiple components \(C_1, \dots, C_k\). After adding the joiner the dominating set of any pair of components must have a path between them, but such a path must go through the joiner. As a result the joiner must be added to the dominating set. To maintain connectivity:
\begin{itemize}
    \item \(v_i\) must be included in the CDS, $x_i \geq 1$.
    \item For each component \(C_h\), at least one neighbor of \(v_i\) within \(C_h\) must be included:
    \[
    \sum_{v_j \in N(v_i) \cap V_i \cap C_h} x_j \geq 1.
    \]
\end{itemize}
This requirement is sufficient since any other vertex is already adjacent to the CDS, so adding a neighbor keeps the joiner connected to the dominating set in $C_h$.

The maximum size of a constraint is $\Delta$ for attachers.
Moreover, each constraint corresponds to a vertex \(v_i\) and involves only its neighbors. Any variable \(x_j\) appears in one non-negativity constraint, at most one constraint for each neighbor \(v_i\), and at most one constraint for $v_j$ itself. Since each vertex has at most \(\Delta\) neighbors, each variable appears in at most $\Delta + 2$ constraints overall.
\end{proof}
\subsection{Algorithmic Implications}
We start by showing the strength of Theorem~\ref{thm3.1} by combining it with several known theorems about covering problems to obtain immediate results.

The first of such results is for the offline problem, where from Srinivasan~\cite{srinivasan1999improved} we have the following theorem.
\begin{theorem}[Srinivasan~\cite{srinivasan1999improved}]
\label{thm:srinivasan-offline}
For any covering problem with sparsity $d$ (i.e., each variable appears in at most $d$ constraints), there is an $O(\log d)$-approximation algorithm.
\end{theorem}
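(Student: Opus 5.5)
The plan is to prove the statement by LP rounding with alteration. First solve the covering LP optimally, obtaining a fractional solution $x^*$ of cost $\mathrm{LP}=\sum_i w(i)x^*_i\le \OPT$. Set $\alpha=\ln d+1$ (with the convention $d\ge 1$). Then round independently: each variable $i$ is set to $y_i=1$ with probability $\min(1,\alpha x^*_i)$. By linearity of expectation, this first phase costs at most $\alpha\cdot\mathrm{LP}=O(\log d)\cdot\mathrm{LP}$ in expectation.

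Next, bound the failure probability of a single constraint $j$, whose support is $S_j=\{i: a^j_i=1\}$. If some $i\in S_j$ has $\alpha x^*_i\ge 1$, the constraint is certainly satisfied. Otherwise independence gives
\[
\Pr[\text{$j$ uncovered}]=\prod_{i\in S_j}(1-\alpha x^*_i)\le \exp\Bigl(-\alpha\sum_{i\in S_j}x^*_i\Bigr)\le e^{-\alpha}\le \frac{1}{e d},
\]
using $\sum_{i\in S_j}x^*_i\ge 1$.

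In the alteration phase, every constraint $j$ left uncovered is repaired by adding its cheapest variable $i_j=\arg\min_{i\in S_j}w(i)$. The key step, which I expect to be the main obstacle, is to charge this repair cost to the LP without incurring a factor depending on the constraint size $|S_j|$ (which may be unbounded). Charging to the LP mass of the constraint itself handles this: since $\sum_{i\in S_j}x^*_i\ge 1$,
\[
w(i_j)\le \sum_{i\in S_j} w(i)\,x^*_i .
\]
The expected repair cost is therefore at most
\[
\sum_j \Pr[\text{$j$ uncovered}]\sum_{i\in S_j}w(i)x^*_i\le \frac{1}{ed}\sum_i w(i)x^*_i\cdot\bigl|\{j: i\in S_j\}\bigr|\le \frac{1}{ed}\cdot d\cdot \mathrm{LP}\le \mathrm{LP}.
\]
This is exactly where sparsity enters: each variable is charged by at most $d$ constraints.

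Summing the two phases, the final integral solution is feasible with expected cost at most $(\ln d+2)\,\mathrm{LP}=O(\log d)\cdot\OPT$. This gives a randomized polynomial-time algorithm. To obtain a deterministic one, derandomize by the method of conditional expectations applied to the pessimistic estimator
\[
\sum_i w(i)y_i+\sum_j \Bigl(\sum_{i\in S_j}w(i)x^*_i\Bigr)\prod_{i\in S_j}(1-y_i),
\]
which is multilinear in the independent choices and so can be evaluated exactly at each step.
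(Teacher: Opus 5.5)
Your proof is correct. It necessarily takes a different route from the paper, because the paper gives no proof of this statement: it imports it as a black box from Srinivasan. Srinivasan's result is stated for general covering integer programs, with arbitrary nonnegative coefficients and right-hand sides, and it rests on a more delicate analysis of randomized rounding based on correlation inequalities.

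Your argument instead uses the special structure of the paper's definition, where every row is $0/1$ with right-hand side $1$. It does so in two places:
\begin{enumerate}
\item the one-line product bound $\prod_{i\in S_j}(1-\alpha x^*_i)\le e^{-\alpha}$;
\item the single-variable repair with $w(i_j)\le\sum_{i\in S_j}w(i)x^*_i$.
\end{enumerate}
Sparsity then enters only through the charging $\sum_j\sum_{i\in S_j}w(i)x^*_i\le d\cdot\mathrm{LP}$. This loss of generality costs nothing for the paper's applications, since Theorem~\ref{thm3.1} only produces covering problems of this form.

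In exchange you get a short self-contained proof with an explicit factor $\ln d+1+1/e$ against the LP. Your derandomization is also sound: the estimator dominates the post-repair cost pointwise and is multilinear, so the conditional expectations can be computed exactly.

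In this $0/1$ setting the problem is exactly weighted set cover in which each set (variable) contains at most $d$ elements (constraints). So Chv\'atal's greedy algorithm already gives an $H_d\le 1+\ln d$ approximation without solving an LP at all.
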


Combining Theorems~\ref{thm3.1} and~\ref{thm:srinivasan-offline} yields the approximation ratio for the offline setting of both incremental DS and incremental CDS.

\newtheorem*{manualThm110}{Theorem 1.10}
\begin{manualThm110}
    There is a polynomial-time $O(\log\Delta)$-approximation algorithm for \textnormal{incremental DS} and \textnormal{incremental CDS}.
\end{manualThm110}
\begin{proof}
By Theorem~\ref{thm3.1}, both incremental DS and incremental CDS can be formulated as covering integer programs where each variable appears in at most $\Delta + 1$ constraints, meaning that the column sparsity is $d = \Delta + 1$. 
Applying Theorem~\ref{thm:srinivasan-offline} yields an $O(\log(\Delta + 1)) = O(\log \Delta)$-approximation algorithm for both problems.
\end{proof}

For the online setting, we apply the following theorem from Gupta and Nagarajan~\cite{gupta2014approximating}.

\begin{theorem}[Gupta and Nagarajan~\cite{gupta2014approximating}]
\label{thm:gupta-online}
There exists an $O(\log k \cdot \log d)$-competitive randomized online algorithm for covering problems, where $k$ is the maximum number of variables in any constraint and $d$ is the sparsity (maximum number of constraints any variable appears in).
\end{theorem}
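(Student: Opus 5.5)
The plan is the standard two-stage approach: maintain online a fractional solution that is $O(\log k)$-competitive, and round it online, losing a factor $O(\log d)$. The novelty relative to the classical $O(\log m\log n)$ bound for online set cover is to make the rounding loss depend on the sparsity $d$ rather than on the number of constraints. We do this with an alteration step whose cost is charged locally.

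\emph{Fractional stage.} When a constraint $a^j\cdot\vec x\ge 1$ arrives and is violated, we run the multiplicative primal--dual update of Buchbinder and Naor. Let $S_j=\{i: a^j_i=1\}$, so $|S_j|\le k$.
\begin{itemize}
\item We raise a dual variable $y_j$ continuously.
\item Simultaneously, for every $i\in S_j$ we update $x_i$ by $dx_i/dy_j = \frac{1}{w(i)}\left(x_i+\frac1k\right)$.
\item We stop as soon as $\sum_{i\in S_j}x_i\ge 1$.
\end{itemize}
While the constraint is violated, the primal cost increases at rate $\sum_{i\in S_j}(x_i+1/k)\le 2$ per unit of dual increase. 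Solving the ODE shows that $x_i\ge \frac1k\bigl(\exp(\sum_{j\ni i}y_j/w(i))-1\bigr)$. Since we never need $x_i>1$, this forces $\sum_{j\ni i}y_j\le w(i)\ln(k+1)$. Hence $y/\ln(k+1)$ is dual feasible, and weak duality gives $\sum_i w(i)x_i\le 2\ln(k+1)\cdot\OPT_{LP}$. The values $x_i$ are monotone non-decreasing, which is what the rounding needs.

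\emph{Rounding stage.}
\begin{itemize}
\item For each variable $i$, we fix in advance a threshold $\tau_i$ chosen uniformly in $[0,1]$, independently across variables.
\item We set $z_i=1$ as soon as $c\ln(d+1)\,x_i\ge\tau_i$. Because $x_i$ only grows, this is consistent with irrevocable online decisions, and $\Pr[z_i=1]\le c\ln(d+1)x_i$. The expected cost of this part is therefore $O(\log d)\sum_i w(i)x_i$.
\item After processing constraint $j$, if it is still uncovered by $z$, we additionally add the cheapest variable $i^*_j\in S_j$. This guarantees feasibility.
\end{itemize}

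\emph{Main obstacle: bounding the alteration cost by $O(\log d)$ rather than $O(\log m)$.}
\begin{itemize}
\item Since $\sum_{i\in S_j}x_i\ge1$, independence of the thresholds gives $\Pr[j\text{ uncovered}]\le\prod_{i\in S_j}(1-\min(1,c\ln(d+1)x_i))\le e^{-c\ln(d+1)}=(d+1)^{-c}$.
\item The fractional solution covers $j$, so $w(i^*_j)\le\sum_{i\in S_j}w(i)x_i$.
\item Summing over constraints, the expected alteration cost is at most $(d+1)^{-c}\sum_j\sum_{i\in S_j}w(i)x_i$.
\item Each variable appears in at most $d$ constraints, so the double sum is at most $d\sum_iw(i)x_i$. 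The alteration cost is thus $d(d+1)^{-c}\sum_iw(i)x_i$, which is negligible for $c\ge1$.
\end{itemize}
This charging of each uncovered constraint to the fractional mass of its own variables, counted at most $d$ times, is the step that replaces $\log m$ with $\log d$.

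Combining the two stages gives expected cost $O(\log d)\cdot O(\log k)\cdot\OPT$.
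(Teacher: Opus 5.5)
Your proof is correct. The paper gives no proof of this statement: it imports the result from Gupta and Nagarajan and only applies it, with $k,d=O(\Delta)$, to get the randomized $O(\log^2\Delta)$ bound.

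Their theorem covers general covering integer programs, with arbitrary nonnegative coefficients and upper bounds $x\le u$. For that general class, the $O(\log k)$-competitive fractional algorithm is itself one of their main technical results. Your argument instead gives a short, self-contained proof for exactly the $0/1$ programs without upper bounds that the paper defines and uses. There the fractional stage reduces to the Buchbinder--Naor multiplicative update. The one real idea needed is the one you isolate: charge the fix-up cost of constraint $j$ to $\sum_{i\in S_j}w(i)x_i$ rather than to $\OPT$. Then each variable is charged by at most $d$ constraints, each failing with probability $(d+1)^{-c}$, so the scaling can depend on $d$ rather than on the total number of constraints.

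Two implicit points are worth stating.

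\begin{enumerate}
\item The independence calculation needs the fractional solution at the arrival of constraint $j$ to be independent of the thresholds. This holds against the standard oblivious adversary because your fractional algorithm is deterministic.
\item In the paper's setting $\Delta$, and hence $k$ and $d$, is not known in advance, but your scheme adapts directly:
\begin{itemize}
\item Use $|S_j|$ instead of $k$ in the update. This is why the ODE bound is an inequality.
\item Use the running maximum column count $d_t$ instead of $d$ in the threshold rule. The quantity $c\ln(d_t+1)x_i$ is still nondecreasing in time. For $c\ge 2$, the $r$-th constraint containing variable $i$ charges at most $(r+1)^{-c}w(i)x_i$, and these charges sum to less than $w(i)x_i$.
\end{itemize}
\end{enumerate}
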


Again, combining Theorems~\ref{thm3.1} and~\ref{thm:gupta-online} immediately yields the randomized competitive ratio.

\newtheorem*{manualThm12}{Theorem 1.2}
\begin{manualThm12}
There is an $O(\log^2\Delta)$-competitive randomized algorithm for \textnormal{incremental DS} and \textnormal{incremental CDS}.
\end{manualThm12}
\begin{proof}
By Theorem~\ref{thm3.1}, the LP relaxations of both problems are online covering problems with row sparsity $k \leq \Delta + 1$ and column sparsity $d \leq \Delta + 1$. Applying Theorem~\ref{thm:gupta-online} gives an $O(\log^2 \Delta)$-competitive randomized online algorithm.
\end{proof}

\subsubsection{Greedy Deterministic Online Algorithm}
\label{sec:algorithm}
We now a present deterministic algorithm addressing the weighted variant of both online incremental DS and online incremental CDS. Throughout the section, we rely on Theorem~\ref{thm3.1} to design algorithms that solve both of these problems. We describe a deterministic greedy online algorithm achieving a competitive ratio of $O(\Delta)$ for both problems. This bound is tight up to constant factors, matching the lower bound proven in Boyar et al.~\cite{main}.

Whenever a new vertex is requested it generates new constraints. When constraint \(j\) arrives the algorithm sets the variable \(x_{g(j)}\) to $1$, where $g(j)$ is the variable with the smallest cost \(w(g(j))\) among all variables that satisfy the constraint. 
We break ties arbitrarily. If the selected variable \(g(j)\) has already been added to the dominating set, the algorithm takes no action.

Let $M$ be the column sparsity, i.e., the maximum number of covering constraints in which any variable appears. For each constraint \(j\), let \(g(j)\) be the variable chosen by the greedy algorithm, and let \(opt(j)\) be a variable that satisfies the same constraint in the optimal solution. By the greedy choice, \(w(g(j)) \le w(opt(j))\).

\begin{lemma}\label{lem4.1}
The greedy algorithm is $M$-competitive.
\end{lemma}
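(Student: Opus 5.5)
The plan is a charging argument that assigns the cost of every greedy purchase to a variable of the optimal integral solution. Fix an optimal incremental solution $\OPT$, viewed as an integral feasible point of the covering LP from Theorem~\ref{thm3.1}. By that theorem, the constraints of the LP are exactly the domination and connectivity requirements, so $\OPT$ satisfies every constraint $j$. For each $j$ we may therefore fix a variable $opt(j)$ with $a^j_{opt(j)}=1$ that is set to $1$ in $\OPT$.

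\textbf{Bounding the greedy cost.} The algorithm only pays when it sets some $x_{g(j)}$ from $0$ to $1$, and it does so at most once per constraint. Its total cost is therefore at most
\[
\sum_{j} w(g(j)).
\]
Here we may even restrict the sum to those constraints $j$ where the algorithm actually pays, since every purchased vertex is $g(j)$ for at least one $j$. By the greedy choice, $w(g(j))\le w(opt(j))$: at the moment constraint $j$ arrives, every variable of that constraint is available for selection. This holds because constraints only involve already-requested vertices, namely those in $N[v_i]\cap V_i$ or the corresponding subsets. In particular $opt(j)$ is among the candidates.

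\textbf{Regrouping by the $\OPT$ vertex.} Next we regroup the sum as
\[
\sum_j w(opt(j)) = \sum_{u\in \OPT} w(u)\cdot |\{j : opt(j)=u\}|.
\]
Every $j$ with $opt(j)=u$ is a constraint in which the variable $x_u$ appears. By the definition of $M$ as the column sparsity, there are at most $M$ such constraints. Hence the total cost is at most $M\cdot w(\OPT)$.

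\textbf{Main obstacle.} The only delicate point is justifying that $\OPT$ satisfies each generated constraint, so that $opt(j)$ exists. For DS this is immediate. For CDS, the constraints depend on the component structure of $G[V_{i-1}]$, which is determined by the input rather than by the solution. Their validity for every incremental CDS is exactly the content of the formulation in Theorem~\ref{thm3.1} (proved in Appendix~\ref{sec:cdstolp}), so I would simply invoke it.

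A secondary point is that the greedy output is feasible. Each constraint is satisfied immediately when it arrives, and late-accept monotonicity keeps it satisfied afterwards. By the correctness of the formulation, the result is then a valid incremental (C)DS.

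Plugging in $M\le\Delta+1$ for DS, and the constant-offset bound for CDS, would give Theorem~\ref{thm1.3}.
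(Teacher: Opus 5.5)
Your proof is correct and is the same charging argument as the paper's: bound the greedy cost by $\sum_j w(g(j)) \le \sum_j w(opt(j))$ using the greedy choice, then regroup by the vertices of $\OPT$ and use that each variable lies in at most $M$ covering constraints. Your additional remarks make explicit two points the paper leaves implicit: that $opt(j)$ exists because every incremental (C)DS satisfies the generated constraints, and that every variable of a constraint is already requested, and hence selectable, when that constraint arrives.
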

\begin{proof}
The total cost of the greedy algorithm is
\[
\sum_{j} w(g(j)) \le \sum_{j} w(opt(j)).
\]
Since each variable can appear in at most \(M\) covering constraints, each cost term of the optimal solution is counted at most \(M\) times on the right-hand side. Hence,
\[
\sum_{j} w(opt(j)) \le M \cdot \text{OPT},
\]
where OPT is the cost of the optimal solution, and the greedy algorithm is \(M\)-competitive.
\end{proof}

We now prove our main theorem about the competitive ratio of deterministic algorithms for incremental DS and incremental CDS.

\newtheorem*{manualThm11}{Theorem 1.1}
\begin{manualThm11}
There is a $(\Delta + 1)$-competitive deterministic algorithm for \textnormal{incremental DS} and \textnormal{incremental CDS}.
\textnormal{incremental DS}.
\end{manualThm11}
\begin{proof}
By Lemma~\ref{lem4.1}, the greedy algorithm is \(M\)-competitive. By Theorem~\ref{thm3.1}, in both incremental DS and incremental CDS we have \(M \le \Delta + 1\). This yields a $\Delta + 1$ competitive algorithm for deterministic incremental DS and incremental CDS.
\end{proof}

\section{Know Thy Neighbor}\label{sec:neighbor}
In this section we present improved deterministic algorithms for the case where the neighborhood of each requested vertex $v_i$ ($N[V_i]$) is revealed when $v_i$ needs to be dominated. The setting remains incremental and vertices which have not been requested (yet were revealed earlier) cannot be used to cover requested vertices. In other words, when a vertex is requested, we learn about its neighbors, but only the requested vertex itself needs to be dominated.

Knowing the neighborhood of each requested vertex complies with the arrival model studied in previous sections, except that additional information is provided at runtime. This extra information allows us to design deterministic algorithms with polylogarithmic competitive ratios, a substantial improvement over the $\Omega(\Delta)
$ lower bound from the setting where only requested vertices are known (without full neighborhood information).

Since revealed vertices need not all be requested, we distinguish between $\Delta_G$, the maximum degree of the full revealed graph, and $\Delta$, the maximum degree of $G[V_n]$, the subgraph induced by requested vertices. 

In Section~\ref{sec:unweighted}, we present an $O(\log^2 \Delta_G)$-competitive deterministic algorithm for the unweighted case, for both incremental DS and incremental CDS. Then, in Section~\ref{sec:gen}, we show that in the weighted case there exists an $O(\log n \log \Delta)$-competitive deterministic algorithm for incremental DS. We leave improving the upper bound for weighted incremental CDS as an open problem.

This model differs from that of~\cite{harutyunyan2021online}, where a vertex must be covered only when its last neighbor is revealed, and late acceptance is disallowed.

\subsection{Unweighted Case}\label{sec:unweighted}
We first present a deterministic algorithm for incremental DS. Our approach is based on maintaining both a fractional solution and an integral dominating set $D$, coupled through a potential function. The potential function penalizes vertices that are covered fractionally but not integrally, ensuring that vertices with high fractional coverage are added to the integral solution.

The algorithm maintains a parameter $m$, initialized to an upper bound on the maximum degree of the revealed graph
$\Delta_G$, and updated dynamically to remain an upper bound as new vertices 
arrive (see Section~\ref{doubling}). We also maintain a fractional solution $x$ (possibly infeasible), initialized with $x_i = \tfrac{1}{4m}$ for each vertex $v_i$. 

To track how well each vertex is covered fractionally, we define the potential cover $c_v$ for each known vertex $v$. This measure has two components: the fractional coverage from revealed neighbors, and a correction term that accounts for potential neighbors that have not yet been revealed. Specifically, when a new neighbor of $v$ is revealed, it contributes $\tfrac{1}{4m}$ to the fractional coverage, while simultaneously decreasing the correction term by the same amount, ensuring $c_v$ remains unchanged. Formally, for each revealed vertex $v$ we define:
\[c_v = \sum_{u_j\in N[v]\cap V_i} x_j + \frac{m-|N[v]\cap V_i|}{4m}.\]

Let $D$ be the dominating set maintained by the algorithm at the current 
step, and denote the set of uncovered elements as $U_i = N[V_i]\setminus N[D]$. Define the potential at step $i$ as $\Phi_i = \sum_{v\in U_i}m^{3c_v}$. The exponential term ensures that vertices with high fractional coverage cannot remain uncovered integrally.

At each step, we first initialize any newly revealed neighbors of $v_i$ with 
fractional value $\tfrac{1}{4m}$. If $v_i$ is uncovered, we snapshot 
the current potential as $\Phi_s$, then increase the weights of vertices 
in $N[v_i]\cap V_i$ until $v_i$ is covered fractionally. We then add the 
smallest subset of $N[v_i]\cap V_i$ to $D$ such that the potential is 
restored to at most $\Phi_s$; by Lemma~\ref{lem:subset} this subset has size 
$O(\log m)$. Finally, if $v_i$ remains uncovered after this process, 
we add $v_i$ itself to $D$. A pseudocode of the algorithm is described in 
Algorithm~\ref{alg:det}.

At each step where an uncovered $v_i$ is requested, we increase the weights of vertices in $N[v_i]\cap V_i$ until $v_i$ is covered in the fractional solution. We then add $O(\log m)$ vertices from $N[v_i]\cap V_i$ to the dominating set so that $\Phi_i \le \Phi_{i-1}$.
Finally, if $v_i$ is still not covered, we add it to the dominating set $D$ to ensure its viability. The algorithm is described in Algorithm~\ref{alg:det}. 

\begin{algorithm}[ht]
	\caption{Unweighted Known Neighborhood Deterministic}
	\label{alg:det}
    \tcp*[h]{Notation as defined in the preceding text.}\\
Initialize $D \gets \emptyset$ (initial dominating set).

	\For{vertex $v_i \in V$ }{\label{algl3}
        For each newly revealed vertex $v_j$, set $x_j\gets \frac1{4m}$, and increment $\Phi_i$ by $m^{3c_{v_j}} \le m^{3/4}$.\label{alg2l4}
        
        \If{$v_i\in U_i$}{\label{alg2l5}
            Set $\Phi_s\gets \Phi_i$.
            
            \While{$c_{v_i} < 1$}{\label{alg2l6}
                For each $v_j\in N[v_i]\cap V_i$, multiply $x_j$ by two.\label{alg2l7}
            }
            Add a the smallest subset of vertices to $D$ for which $\Phi_i \le \Phi_s$. // By Lemma~\ref{lem:subset}, the size of such a subset is $O(\log m)$.\label{alg2l9}
        }
        \If{$v_i$ is not covered\label{alg2l10}} {Add $v_i$ to $D$.\label{alg2l11}}
        }
\end{algorithm}

To begin the analysis, we show that after each weight increase the algorithm can avoid an increase in the potential function by adding only a small subset of vertices to the dominating set in Line~\ref{alg2l9}.

\begin{lemma}\label{lem:subset}
If $v_i$ is uncovered at step $i$, then there exists a subset of 
$N[v_i]\cap V_i$ of size at most $6\log m$ such that adding the vertices 
in the subset to $D$ ensures that the potential does not increase, i.e.,
\[
\Phi_i \le \Phi_s,
\]
where $\Phi_s$ is the potential prior to the weight increases in Line~\ref{alg2l7}.
\end{lemma}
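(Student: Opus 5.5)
The plan is to prove existence by the probabilistic method, in the spirit of the derandomized online set cover analysis of Alon et al.~\cite{osc}. Let $x$ denote the fractional solution at the snapshot $\Phi_s$, let $x'$ be the solution after the doubling loop in Line~\ref{alg2l7}, and set $\delta_u = x'_u - x_u \ge 0$ for $u \in N[v_i]\cap V_i$ (and $\delta_u=0$ elsewhere). Form a random set $S\subseteq N[v_i]\cap V_i$ by including each $u$ independently with probability $p_u = 1 - m^{-3\delta_u}$. We will show two things: $\E[\Phi_i(S)] \le \Phi_s$, where $\Phi_i(S)$ is the potential after adding $S$ to $D$, and $\E[|S|] = O(\log m)$. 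From these we extract a single deterministic $S$ satisfying both conclusions of Lemma~\ref{lem:subset}.

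\textbf{Step 1: potential.} Fix an uncovered $w \in U_i$. Only vertices $u\in N[w]$ change $c_w$, and $c'_w - c_w = \sum_{u\in N[w]\cap V_i}\delta_u$. The vertex $w$ stays uncovered only if no $u\in N[w]$ with $\delta_u>0$ enters $S$, which happens with probability
\[
\prod_{u\in N[w]} m^{-3\delta_u} = m^{-3(c'_w-c_w)}.
\]
Therefore
\[
\E\bigl[\one[w\in U_i]\, m^{3c'_w}\bigr] \le m^{3c_w}.
\]
Vertices revealed in Line~\ref{alg2l4} are already counted in $\Phi_s$, and every other term can only vanish. Summing over $w$ gives $\E[\Phi_i(S)]\le \Phi_s$.

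\textbf{Step 2: size.} We have $p_u \le 3\delta_u \ln m$. Before the last doubling, $\sum_{u\in N[v_i]\cap V_i} x_u \le c_{v_i} < 1$, so after that doubling the sum is below $2$. Hence $\sum_u \delta_u < 2$ and $\E[|S|] < 6\ln m \le 6\log m$.

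\textbf{Main obstacle: getting both conditions at once.} Markov's inequality applied to the two expectations separately only gives failure probabilities summing to about $2$. So we need some slack in one of the two bounds. The first thing I would try is to strengthen Step 1 using the term of $v_i$ itself. At the snapshot, $v_i$ contributes $m^{3c_{v_i}}$ to $\Phi_s$, with $c_{v_i}<1$. After the loop, $c'_{v_i}\ge 1$, so $v_i$ either becomes covered or keeps a term of at most its original value in expectation. Alternatively, I would scale the sampling exponent by a constant, using $p_u = 1-m^{-\alpha\delta_u}$ with $\alpha>3$. This gives $\E[\Phi_i(S)] \le \sum_w m^{3c_w - (\alpha-3)(c'_w-c_w)}$, a strict decrease for every $w$ adjacent to the updated set, and $v_i$ gains at least a constant in $c$. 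This slack lets us apply Markov's inequality to a combined random variable such as $\Phi_i(S)/\Phi_s + |S|/(C\log m)$ and conclude that some outcome has both $\Phi_i(S)\le\Phi_s$ and $|S|\le C\log m$.

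The remaining bookkeeping is to verify that the constant can be taken as $6$, possibly after adjusting $\alpha$ or the logarithm base, or else to accept a slightly larger $O(\log m)$ constant, which suffices for the later competitive analysis. Since the algorithm picks the smallest subset in Line~\ref{alg2l9}, existence is all that Lemma~\ref{lem:subset} requires.
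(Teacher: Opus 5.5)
Your Steps 1 and 2 are both correct, but there is a genuine gap exactly where you flag the ``main obstacle''. With independent sampling you only control $|S|$ in expectation, and the fix you sketch does not close this.

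Markov's inequality on $\Phi_i(S)/\Phi_s + |S|/(C\log m)$ requires $\E[\Phi_i(S)]\le(1-\varepsilon)\Phi_s$ with $\varepsilon \ge \E|S|/(C\log m)=\Theta(1/C)$. That is a \emph{constant} relative slack, and raising the exponent to $\alpha>3$ does not provide it, for two reasons.
\begin{itemize}
\item $\Phi_s$ contains the terms of all uncovered vertices. Most of these are untouched by the update, so any local slack is diluted by an arbitrary factor. At minimum you would have to normalize by the affected part of the potential.
\item Even within the affected part, the gain $c'_w-c_w$ can be as small as $1/(4m)$. For example, take $w$ adjacent only to $v_i$ among the doubled vertices, with $x_{v_i}$ doubled once from its initial value $1/(4m)$. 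Its term then shrinks only by the factor $m^{-(\alpha-3)/(4m)}=1-O(\log m/m)$.
\end{itemize}
Up to $\Delta_G$ such neighbours of $v_i$ can carry most of the affected potential. It is true that $c_{v_i}$ itself gains $\Omega(1)$, but that term can be a vanishing fraction of the total. So in general the achievable $\varepsilon$ is only $O(\log m/m)$, which forces $C=\Omega(m/\log m)$. This is not bookkeeping about the constant $6$: the combined Markov step fails to give $O(\log m)$ at all.

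The missing idea is to make the size bound hold with certainty, so that only one expectation is needed. The paper does this as follows.
\begin{itemize}
\item It runs $6\log m$ independent rounds. In each round it selects at most one vertex, choosing $u$ with probability $\delta_u/2$. This is a valid distribution because $\sum_u\delta_u<2$, which is your own Step 2 computation.
\item Then $|S|\le 6\log m$ always.
\item An affected $w$ stays uncovered with probability $(1-(c'_w-c_w)/2)^{6\log m}\le m^{-3(c'_w-c_w)}$. Your Step 1 calculation then gives $\E[\Phi_i]\le\Phi_s$ directly.
\end{itemize}

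If you want to keep independent sampling, you need a tail bound rather than Markov, combined with restricting to the affected terms.
\begin{itemize}
\item Chernoff gives $\Pr[|S|>C\log m]\le m^{-\Omega(C)}$.
\item Every affected $w$ satisfies $c'_w-c_w\ge 1/(4m)$, since every doubled $x_u$ is at least $1/(4m)$. With $\alpha>3$ this gives relative slack $\Omega(\log m/m)$, which beats the polynomially small tail.
\item Conditioning on $|S|\le C\log m$ then works for a large constant $C$, but not with the constant $6$.
\end{itemize}
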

The proof is deferred to Appendix~\ref{sec:omit}. 
We next move from considering how much is added per update to bounding how often such updates happen. 

\begin{lemma}\label{lem5.1}
The total number of times the {\bf if} statement in Line~\ref{alg2l5} in Algorithm~\ref{alg:det} is triggered is at most $O(|\textnormal{OPT}|\log m)$.
\end{lemma}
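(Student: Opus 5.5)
We bound the number of triggers of the \textbf{if} statement in Line~\ref{alg2l5} by charging each trigger to a doubling step in Line~\ref{alg2l7}, and then bounding the total number of doublings with a standard online set cover potential argument against OPT. Each trigger performs at least one doubling, because at that moment $v_i$ is uncovered and, as argued below, $c_{v_i}<1$. It therefore suffices to show that the total number of doublings over the whole run is $O(|\textnormal{OPT}|\log m)$.

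\textbf{Step 1: every trigger forces a doubling.} Suppose $v_i$ is requested and lies in $U_i$. The invariant maintained by Lemma~\ref{lem:subset} and by Line~\ref{alg2l11} is that the potential never exceeds its initial contributions. Each newly revealed vertex adds at most $m^{3/4}$ (Line~\ref{alg2l4}), and the total number of revealed vertices is at most $m\cdot n$. So $\Phi$ stays bounded and no uncovered vertex can have $c_v\ge 1$ at the start of a step. More precisely, an uncovered vertex $v$ with $c_v\ge 1$ contributes $m^{3}$. We would argue this cannot persist, since after each while loop the offending vertex $v_i$ is covered. So uncovered vertices entering a step have $c_v<1$. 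If instead $c_{v_i}\ge 1$ already held, the while loop would do nothing and no potential would be added; we simply charge such triggers separately. This case should be impossible by the invariant, and I would verify that carefully. Consequently each counted trigger performs at least one doubling on $N[v_i]\cap V_i$.

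\textbf{Step 2: doublings are charged to OPT.} Since OPT is an incremental dominating set, $N[v_i]\cap V_i$ contains some $o\in\textnormal{OPT}$ at the moment $v_i$ is requested. Every doubling in that trigger therefore doubles $x_o$ for some $o\in\textnormal{OPT}$. Each $x_o$ starts at $\frac{1}{4m}$. A doubling only happens while $c_{v_i}<1$, and $c_{v_i}\ge x_o$ because the correction term is nonnegative, given $|N[v]\cap V_i|\le m$ by the invariant that $m$ upper bounds $\Delta_G$. Hence $x_o<1$ before every doubling, so $x_o$ is doubled at most $\log_2(4m)+1=O(\log m)$ times. Summing over $o\in\textnormal{OPT}$ gives $O(|\textnormal{OPT}|\log m)$ doublings that touch an OPT vertex. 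Every trigger involves at least one such doubling, and distinct triggers use distinct doubling events, so the number of triggers is at most this bound.

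\textbf{Main obstacle.} The delicate part is interacting with the guess-and-square updates of $m$. When $m$ changes, the initialization $\frac{1}{4m}$ and the exponent base change, so the $O(\log m)$ budget per OPT vertex must be reset per phase. I would handle this within a fixed phase, where $m$ is constant, and defer the accumulation across phases to Section~\ref{doubling}. There, squaring makes the sum of $\log m$ over phases geometric, and hence $O(\log m_{\text{final}})$. A second subtlety is ensuring $c_{v_i}<1$ whenever a trigger occurs. For the counting itself, though, I can sidestep this: a trigger with zero doublings adds no vertex beyond $v_i$ itself in Line~\ref{alg2l11}, and I would check from the potential invariant that this case cannot happen.
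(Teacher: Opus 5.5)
Your main line is the paper's own proof. Each coordinate starts at $\frac{1}{4m}$ and is doubled only while $c_{v_i}<1$, so it is doubled $O(\log m)$ times. Each trigger doubles the coordinate of some $o\in\textnormal{OPT}\cap N[v_i]\cap V_i$. Your Step 2 is fine; the off-by-one in $|N[v]\cap V_i|\le m$ is harmless.

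The gap is in Step 1, the claim that every trigger performs at least one doubling. Your justification that no uncovered vertex enters a step with $c_v\ge 1$ does not follow from the potential bound. Your own estimate gives $\Phi\le n\cdot m^{7/4}$, and the paper's Lemma~\ref{leminc} gives $i\cdot\Delta_G\cdot m$. Both exceed $m^3$ once enough vertices have been requested, so a single uncovered term of size $m^3$ is not excluded.

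The remark that the while loop covers $v_i$ also says nothing about other revealed vertices. Line~\ref{alg2l9} only restores the total, $\Phi_i\le\Phi_s$, not each term. So nothing prevents a revealed, not-yet-requested vertex $u$ from reaching $c_u\ge 1$ while staying outside $N[D]$, as its requested neighbors are doubled during other requests. When $u$ is later requested, the \textbf{if} in Line~\ref{alg2l5} fires but the while loop performs no doubling. Charging this trigger to an OPT doubling then fails. The paper's proof silently assumes this case away as well.

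The fix is the observation in your last sentence, made quantitative:
\begin{itemize}
\item In a zero-doubling trigger $\Phi_i=\Phi_s$, so the smallest subset in Line~\ref{alg2l9} is empty. Then $v_i$ is still uncovered and is inserted in Line~\ref{alg2l11}.
\item Hence zero-doubling triggers inject into Line~\ref{alg2l11} insertions, which Lemma~\ref{lem5.2} bounds by $2|\textnormal{OPT}|$ when $m\ge\Delta_G$.
\item There is no circularity: that lemma's proof uses only $c_{v_i}\ge 1$ after the while loop and Lemma~\ref{leminc}, not Lemma~\ref{lem5.1}.
\end{itemize}
Adding these $2|\textnormal{OPT}|$ triggers to the $O(|\textnormal{OPT}|\log m)$ triggers charged to OPT doublings gives the lemma.
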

\begin{proof}
Every coordinate $x_j$ is initialized to $1/(4 m)$ and the algorithm only multiplies coordinates by $2$ (Line~\ref{alg2l7}). Moreover, Line~\ref{alg2l6} ensures that the algorithm never increases the value of any coordinate beyond $2$. Thus, for every index $j$ the number of times $x_j$ can be doubled is at most
$\log(8m) = O(\log m)$.

Since $v_i$ must be dominated by a vertex in OPT, each time the \textbf{if} statement is triggered at least one coordinate corresponding to a vertex of OPT is doubled.

Combining these results, the total number of times that the \textbf{if} statement can be triggered is upper bounded by $O(|\textnormal{OPT}|\cdot \log m)$.
\end{proof}

Having bounded the number of update steps, we now show that the increase in the potential function itself remains bounded throughout the execution. Since each vertex added in the final while loop decreases the potential by a large amount, bounding the total potential increase immediately implies a bound on the number of vertices added in that phase.

\begin{lemma}\label{leminc}
At step $i$, $\Phi_i \le i\cdot \Delta_G\cdot m$.
\end{lemma}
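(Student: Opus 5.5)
We argue by induction on the step $i$, tracking exactly when the potential can increase during Algorithm~\ref{alg:det}. Set $\Phi_0 = 0$. Within step $i$ the potential changes in four ways.
\begin{enumerate}
\item Newly revealed vertices are initialized in Line~\ref{alg2l4}.
\item Weights are doubled in Line~\ref{alg2l7}.
\item Vertices are added to $D$ in Line~\ref{alg2l9}.
\item $v_i$ itself may be added to $D$ in Line~\ref{alg2l11}.
\end{enumerate}

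Three of these cannot increase $\Phi$. Adding vertices to $D$ only shrinks the uncovered set $U_i$, and every term $m^{3c_v}$ is positive. The doubling phase is paired with Line~\ref{alg2l9}: by Lemma~\ref{lem:subset}, after Line~\ref{alg2l9} we have $\Phi_i \le \Phi_s$, where $\Phi_s$ is the snapshot taken after initialization. So the net change from the doubling and the subset addition is nonpositive.

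The only source of increase is therefore initialization. We must also check that revealing a new neighbor of an already-known vertex $v$ does not change $c_v$. This holds by design: the new neighbor contributes $\frac{1}{4m}$ to the fractional sum and decreases the correction term $\frac{m-|N[v]\cap V_i|}{4m}$ by exactly $\frac{1}{4m}$. (If $v_i$ was previously revealed and only now enters $V_i$, the same bookkeeping applies.)

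Each newly revealed vertex $v_j$ enters with
\[
c_{v_j} = |N[v_j]\cap V_i|\cdot\tfrac{1}{4m} + \tfrac{m-|N[v_j]\cap V_i|}{4m} = \tfrac14,
\]
so it adds at most $m^{3/4}\le m$ to the potential. The vertices newly revealed at step $i$ lie in $N[v_i]$, so there are at most $\Delta_G+1$ of them. The increase at step $i$ is thus at most $(\Delta_G+1)m^{3/4}$. This is at most $\Delta_G m$ whenever $\Delta_G\ge 1$ and $m\ge 2$, since $(\Delta_G+1)m^{3/4}\le 2\Delta_G m^{3/4}\le \Delta_G m$ once $m^{1/4}\ge 2$. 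For small $m$ the constant is absorbed, or we count $v_i$ as not newly revealed when $i>1$.

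Summing over steps gives $\Phi_i \le \Phi_{i-1} + \Delta_G m$, hence $\Phi_i\le i\,\Delta_G\, m$.

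The main obstacle is the monotonicity bookkeeping. Specifically, we must confirm that $\Phi_s$ is snapshotted after the initialization in Line~\ref{alg2l4}, so that Lemma~\ref{lem:subset} exactly cancels the doubling phase. We must also confirm that $m$ changes only between guess phases, as in Section~\ref{doubling}. If $m$ grows, the potential's base changes; we would restart the analysis per phase, where $m$ is fixed and the bound holds.
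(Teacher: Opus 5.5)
Your proposal is correct and follows essentially the same argument as the paper. Requesting a vertex leaves every $c_u$ unchanged, Lemma~\ref{lem:subset} cancels the doubling phase, and additions to $D$ only shrink $U_i$, so the only increase comes from newly revealed vertices, each contributing at most $m^{3/4}\le m$. Your per-step count of at most $\Delta_G+1$ new vertices is slightly more careful than the paper's total of $i\cdot\Delta_G$, which drops the $+1$; however, your small-$m$ fallback of treating $v_i$ as previously revealed is not justified in general, since $v_i$ can lie outside $N[V_{i-1}]$, though this only affects constants that are irrelevant to Lemma~\ref{lem5.2}.
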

\begin{proof}
    The potential can change in three places: when a vertex is requested, when weights are increased in Line~\ref{alg2l7}, and when new vertices are revealed. 

    First, consider the moment when a vertex $v_i \in N[u]$ is requested. 
Its initial contribution of $1/(4m)$ is added to the first summand of 
$c_u$, but at the same time $|N[u]\cap V_i|$ increases by one, which 
decreases the correction term $\left(m-|N[u]\cap V_i|\right)/4m$ by exactly 
$1/(4m)$. The two changes cancel, and $c_u$ does not change.

    Second, by Lemma~\ref{lem:subset}, after each weight increase in 
    Line~\ref{alg2l7} the vertices added to $D$ in Line~\ref{alg2l9} 
    ensure that the potential does not increase, so weight increases cause 
    no net increase in the potential.

    Therefore the only source of increase is when new vertices are revealed. For each new vertex $u$, the increase in the potential is at most $m^{3c_u} \le m$, since $c_u \le \frac14$ at reveal. After $i$ steps, the amount of revealed vertices is bound by $i\cdot \Delta_G$, the total increase is bounded by $i\cdot \Delta_G\cdot m$.
\end{proof}

We now use the bounded increase in $\Phi$ to bound the number of vertices added in the final loop of the algorithm. Intuitively, since each additional vertex in that phase reduces the potential by a large amount, a global bound on the total potential increase immediately limits how many such insertions can occur.

\begin{lemma}\label{lem5.2}
    Denote the number of total requested vertices throughout the algorithm as $r$. Then number of vertices added by Lines~\ref{alg2l10}-\ref{alg2l11} is bounded by $\frac{r\cdot \Delta_G}{m^2}$, and if $m\ge \Delta_G$ then the bound is $2\cdot |\textnormal{OPT}|$.
\end{lemma}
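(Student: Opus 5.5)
The plan is a potential-drop argument. I will show that each vertex added in Lines~\ref{alg2l10}--\ref{alg2l11} witnesses a large amount of potential leaving the system. I will then compare this against the total potential ever injected, which Lemma~\ref{leminc} bounds.

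\textbf{Step 1: where the potential stands when Line~\ref{alg2l11} fires.} Suppose Line~\ref{alg2l11} fires at step $i$. The vertex $v_i$ is still in $U_i$, and the \textbf{while} loop of Line~\ref{alg2l6} has terminated. At that moment $c_{v_i}\ge 1$. Line~\ref{alg2l9} then only adds vertices to $D$, and it changes no $x_j$ and no $|N[v]\cap V_i|$, so $c_{v_i}$ is still at least $1$ at Line~\ref{alg2l10}. (If the \textbf{if} of Line~\ref{alg2l5} was not triggered, $v_i$ was already covered, so Line~\ref{alg2l11} does not fire.) Hence the term for $v_i$ in $\Phi_i$ is at least $m^{3c_{v_i}}\ge m^3$. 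Adding $v_i$ to $D$ removes $v_i$ from $U_i$, which decreases $\Phi$ by at least $m^3$; removing other newly covered vertices only decreases it further.

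\textbf{Step 2: potential accounting.} I will track all increases and decreases of $\Phi$ over the run, noting that $\Phi\ge 0$ always.
\begin{itemize}
\item Requesting a vertex leaves every $c_u$ unchanged, as shown in the proof of Lemma~\ref{leminc}.
\item Doublings followed by Line~\ref{alg2l9} give no net increase, by Lemma~\ref{lem:subset}.
\item Additions to $D$ only decrease $\Phi$.
\end{itemize}
So the only positive contributions are reveals, each at most $m$, and there are at most $r\Delta_G$ revealed vertices in total. Therefore
\[
(\#\text{Line~\ref{alg2l11} additions})\cdot m^3 \;\le\; r\cdot\Delta_G\cdot m ,
\]
which gives the bound $r\Delta_G/m^2$.

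\textbf{Step 3: the second claim.} Assume $m\ge \Delta_G$. Then the bound becomes $r\Delta_G/m^2\le r/\Delta_G$, and I need $r/\Delta_G\le 2|\textnormal{OPT}|$.

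Every requested vertex is dominated by some vertex of OPT, and each vertex dominates at most $\Delta+1\le\Delta_G+1$ requested vertices. Hence $r\le(\Delta_G+1)|\textnormal{OPT}|\le 2\Delta_G|\textnormal{OPT}|$ whenever $\Delta_G\ge1$. This yields $r/\Delta_G\le 2|\textnormal{OPT}|$. The case $\Delta_G=0$ is trivial: every vertex is an isolated opener, so OPT contains them all.

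\textbf{Expected obstacle.} The delicate part is Step 2. The naive bound from Lemma~\ref{leminc} measures the current potential, not the cumulative injection of potential. I therefore need to argue directly that all decreases, and in particular the $m^3$ drops, are paid for by the total reveal increments. This works because $\Phi$ starts at $0$ and never goes negative.

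A second point to check is that when $m$ is updated dynamically (Section~\ref{doubling}) the exponents are recomputed. I will treat a fixed $m$ within a phase, so that the argument above applies verbatim.
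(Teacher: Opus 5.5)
Your proposal is correct and is essentially the paper's proof. Each insertion in Line~\ref{alg2l11} deletes a term of size at least $m^3$ from $\Phi$, the only net source of increase is the reveal step (proof of Lemma~\ref{leminc} plus Lemma~\ref{lem:subset}), and $r\le(\Delta_G+1)|\textnormal{OPT}|\le 2\Delta_G|\textnormal{OPT}|$ gives the second claim. Your explicit cumulative accounting via $\Phi_0=0\le\Phi$ makes precise the paper's one-line appeal to Lemma~\ref{leminc}, which as stated only bounds the current potential.

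One slip you share with the paper: a single request can reveal up to $\Delta_G+1$ new vertices (e.g.\ the first request reveals $v_1$ together with all its neighbors). So the reveal count is at most $r(\Delta_G+1)$, not $r\Delta_G$. This costs only a constant factor. For $\Delta_G\ge 1$ and $m\ge 16$ it disappears entirely if you use the exact reveal increment $m^{3/4}$ instead of $m$.
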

\begin{proof}
When we insert $v_i$ to $D$ in Line~\ref{alg2l11}, we remove at least $m^{3c_u}\ge m^3$ from the potential. By lemma~\ref{leminc} the number of added vertices is at most
\[\frac{r\cdot\Delta_G\cdot m}{m^3} = \frac{r\cdot \Delta_G}{m^2}\]
If $m \ge \Delta_G$ then $|\textnormal{OPT}|\ge r/(\Delta_G+1)\ge r/(2\Delta_G)$, giving us the upper bound.
\end{proof}

We are now ready to combine the previous bounds to obtain the overall competitive ratio.

\begin{lemma}\label{lem5.5a}
    If $m \ge \Delta_G$, the competitive ratio of algorithm~\ref{alg:det} is $O(\log^2 m)$.
\end{lemma}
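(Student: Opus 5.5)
The plan is to count every vertex that Algorithm~\ref{alg:det} ever inserts into $D$. Insertions happen in exactly two places: Line~\ref{alg2l9}, inside the \textbf{if} of Line~\ref{alg2l5}, and Lines~\ref{alg2l10}--\ref{alg2l11}. No other line adds vertices to $D$, so $|D|$ at any step is at most the sum of these two contributions. I will bound each contribution separately in terms of $|\textnormal{OPT}|$ and then add the bounds. Throughout, $\textnormal{OPT}$ denotes the optimal incremental solution for the requests made so far. This makes the bound valid at every step, as the statement of Theorem~\ref{thm1.11} requires.

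For the first contribution, each execution of the \textbf{if} block in Line~\ref{alg2l5} adds the smallest subset restoring $\Phi_i\le\Phi_s$. Lemma~\ref{lem:subset} shows that some subset of size at most $6\log m$ achieves this, so the minimum-size subset has size at most $6\log m$ as well. Lemma~\ref{lem5.1} bounds the number of times the \textbf{if} statement is triggered by $O(|\textnormal{OPT}|\log m)$. Multiplying the two bounds, Line~\ref{alg2l9} contributes $O(|\textnormal{OPT}|\log^2 m)$ vertices in total.

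For the second contribution, the hypothesis $m\ge\Delta_G$ lets me apply the second part of Lemma~\ref{lem5.2}. It gives at most $2|\textnormal{OPT}|$ vertices added by Lines~\ref{alg2l10}--\ref{alg2l11}. Adding the two contributions yields $|D| = O(|\textnormal{OPT}|\log^2 m) + 2|\textnormal{OPT}| = O(\log^2 m)\cdot|\textnormal{OPT}|$. Since the graph is unweighted, this is exactly the claimed competitive ratio.

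I also need $D$ to be a feasible incremental dominating set. Lines~\ref{alg2l10}--\ref{alg2l11} guarantee that every requested $v_i$ is dominated at the end of its step. Vertices are never removed from $D$, and every vertex of $D$ has been requested, so $D\cap V_i$ dominates $G[V_i]$ for every $i$.

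The only delicate point is the validity of the lemmas under the current $m$. Lemma~\ref{lem5.1} and Lemma~\ref{lem:subset} hold for any fixed $m$, but Lemma~\ref{lem5.2} needs $m\ge\Delta_G$. The statement assumes this, so the combination is direct. Handling the case where $m$ must change is the job of the guess-and-square argument in Section~\ref{doubling}, not of this lemma.
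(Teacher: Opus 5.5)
Your proposal is correct and follows the paper's proof. Like the paper, you bound the Line~\ref{alg2l9} insertions by the number of \textbf{if}-triggers (Lemma~\ref{lem5.1}) times the subset size (Lemma~\ref{lem:subset}), bound the Line~\ref{alg2l11} insertions by $2|\textnormal{OPT}|$ via Lemma~\ref{lem5.2} using $m\ge\Delta_G$, and add the two. Your $6\log m$ matches the statement of Lemma~\ref{lem:subset}, where the paper's proof writes $4\log m$, and your explicit feasibility check is a small addition rather than a different route.
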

\begin{proof}
By Lemma~\ref{lem5.1} we enter the \textbf{if} in Line~\ref{alg2l5} at most $O(|\textnormal{OPT}|\cdot\log m)$ times, and by Lemma~\ref{lem:subset} each time we add $4\log m$ vertices to $D$. By Lemma~\ref{lem5.2} we then add at most $2|\textnormal{OPT}|$ vertices in Lines~\ref{alg2l10}-\ref{alg2l11}. Hence
    \[
    |D| \;\le\; O(|\textnormal{OPT}|\cdot\log m)\cdot 4\log m + 2|\textnormal{OPT}| \;=\; O(|\textnormal{OPT}|\cdot\log^2 m),
    \]
    proving the claimed competitive ratio.
\end{proof}
We note that Algorithm~\ref{alg:det} extends naturally to the case where 
constraints are of the form $N(v_i) \cap D$ (as with 
attacher constraints), where the requested vertex itself is excluded. 
If $N(v_i) = \emptyset$ then $v_i$ is an isolated vertex and no constraint 
is enforced. Otherwise, all we need to do is replace all instances 
of $N[v_i]$ with $N(v_i)$. The potential function argument is unchanged, 
and the same $O(\log^2\Delta_G)$-competitive bound holds.

\subsubsection{Connected Dominating Set}
In this part we extend the result of Section~\ref{sec:unweighted} to 
incremental CDS. The algorithm is similar to Algorithm~\ref{alg:det}, but with a special case for joiners (recall that a joiner is a vertex that joins two connected components, see Section~\ref{sec:CDSascover} for the formal definition). When a joiner is requested, for each constraint created by it, we immediately add an arbitrary vertex from the constraint to the dominating set. We prove that the number of vertices added this way is bounded by $O(|\textnormal{OPT}|)$.

\begin{lemma}\label{lem5.5}
    The number of constraints added by joiners is bounded by $2\cdot |\textnormal{OPT}|$.
\end{lemma}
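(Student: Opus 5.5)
The plan is a counting argument on connected components, charging every joiner constraint to an opener or a joiner. Both kinds of vertex are forced into every incremental CDS, so both belong to OPT. Recall from the proof of Theorem~\ref{thm3.1}(\ref{thm3.1b}) that a joiner $v_i$ touching components $C_1,\dots,C_k$ of $G[V_{i-1}]$, with $k\ge 2$, generates exactly $k+1$ constraints. These are the constraint $x_i\ge 1$ and one neighbor constraint for each $C_h$. The goal is to bound $\sum_{\text{joiners}}(k_i+1)$ by counting all of these constraints, including the $x_i \geq 1$ constraints.

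\emph{Step 1 (component bookkeeping).} Track $|C(G[V_i])|$ as $i$ increases. An opener raises the count by exactly $1$. An attacher leaves it unchanged, since it merges into its single component. A joiner with $k_i$ components merges them into one, so it lowers the count by $k_i-1\ge 1$. The count starts at $0$ and never becomes negative. Hence
\[
\sum_{\text{joiners } v_i} (k_i-1) \;\le\; \#\text{openers}.
\]

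\emph{Step 2 (both types lie in OPT).} The constraint $x_i\ge 1$ is forced for openers and for joiners, as argued in Cases~1 and~3. Concretely, an opener is isolated in $G[V_i]$ and must dominate itself. For a joiner, the dominating-set parts in the components it merges can only be connected through $v_i$. Since OPT is a feasible incremental CDS, it contains every opener and every joiner. These are distinct vertices, so $\#\text{openers}+\#\text{joiners}\le |\textnormal{OPT}|$.

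\emph{Step 3 (combine).} Write the joiner-constraint total as
\[
\sum_{\text{joiners}}(k_i+1)=\sum_{\text{joiners}}(k_i-1)+2\,\#\text{joiners}.
\]
By Step 1 this is at most $\#\text{openers}+2\,\#\text{joiners}$. That is at most $2(\#\text{openers}+\#\text{joiners})$, which by Step 2 is at most $2|\textnormal{OPT}|$.

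The only delicate point I expect is Step 1. One must check that a joiner really merges all $k_i$ touched components into a single component of $G[V_i]$, so the drop is exactly $k_i-1$. One must also check that attachers never change the component count. Both facts follow directly from the definitions, because $v_i$ is adjacent to a vertex in each of $C_1,\dots,C_k$.
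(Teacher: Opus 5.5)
Your proof is correct and is essentially the paper's argument. The paper charges each merged component to its representative, an opener or joiner that is retired once its component is merged; this yields exactly your inequality $\sum_{\text{joiners}} k_i \le \#\text{openers}+\#\text{joiners}$, which the paper then combines, as you do, with the fact that every opener and joiner lies in $\textnormal{OPT}$. The paper actually states the slightly stronger bound that opener and joiner constraints together number at most $2|\textnormal{OPT}|$, which is the form used in Theorem~\ref{thm1.11}; your computation gives this too, since adding the $\#\text{openers}$ opener constraints to $\#\text{openers}+2\,\#\text{joiners}$ yields $2(\#\text{openers}+\#\text{joiners})$.
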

\begin{proof}
    We actually prove a stronger statement, namely that the total number of constraints created by openers and joiners is at most $2\cdot |\textnormal{OPT}|$.
    Let $\mathcal{O}$ be the number of openers and let $\mathcal{J}$ be the number of joiners.
    By the transformation in Section~\ref{sec:CDSascover}, every opener and joiner must belong to the dominating set, so $|\textnormal{OPT}| \ge \mathcal{O} + \mathcal{J}$.

    We use a charging argument. When an opener is requested, it becomes the representative of its component. When a joiner is requested, each connected component it merges creates one constraint, which we charge to that component's representative. The joiner then becomes the representative of the merged component.

Each vertex can be charged at most twice: once when requested (as an opener or joiner), and at most once later as a representative. Therefore, the total number of constraints is at most $2(\mathcal{O} + \mathcal{J}) \le 2\cdot |\textnormal{OPT}|$.
\end{proof}

\subsubsection{Guess and Increase}\label{doubling}

To extend our algorithm to the case where the maximum degree $\Delta_G$ is not known in advance, we use a guess-and-update technique inspired by the standard “guess-and-double” framework. A straightforward adaptation, starting with an underestimate of $\Delta_G$ and doubling it whenever the maximum degree exceeds the current guess, leads to a competitive ratio of
\[\sum_{j = 1}^{\log\Delta_G}O(\log^2(\Delta_G/2^j))\cdot \textnormal{OPT} = O(\log^3\Delta_G)\cdot \textnormal{OPT},\]which is asymptotically worse than the optimal bound. To handle this discrepancy, instead of doubling the current estimate of $\Delta_G$, we square it upon discovering that the maximum degree violates the current guess.

We are now ready to prove the main theorem of the section.
\newtheorem*{manualThm13}{Theorem 1.3}
\begin{manualThm13}
When the neighborhood of requested vertices is known, and the graph is unweighted, then
there is an $O(\log^2\Delta_G)$-competitive deterministic algorithm for both \textnormal{incremental DS} and \textnormal{incremental CDS}.
\end{manualThm13}
\begin{proof}
Denote by $\text{OPT}_i$ the cost of the optimal solution for the first $i$ vertices, and let $\Delta_i$ be the maximum degree of the revealed graph at this step. Since each new vertex introduces additional constraints, it follows that $\text{OPT}_i \le \text{OPT}$ for all $i$.

We maintain an adaptive upper bound $\Delta'$ on the current maximum degree, initialized to a constant. As vertices arrive, whenever $\Delta_i > \Delta'$, we update $\Delta' \leftarrow (\Delta')^2$, restart the algorithm with the new bound, and include in the global solution all vertices selected during the new run while retaining vertices selected during previous solutions.

By Lemma~\ref{lem5.5a}, each phase with bound $\Delta'$ has cost
$O(\log^2 \Delta')\cdot \textnormal{OPT}$. Since $\Delta'$ grows quadratically, after $k$ such increases the total cost is bounded by
\[O\left(
   \sum_{j=0}^k \log^2\!\big(\Delta_G^{1 / 2^{\,k-j}}\big)
\right)\cdot\textnormal{OPT} \le 
O\left(
   \sum_{j=0}^\infty \log^2\!\big(\Delta_G^{ 2^{\,-j}}\big)
\right)\cdot\textnormal{OPT}
= O(\log^2\!\Delta_G)\cdot\textnormal{OPT}.
\]
Therefore, the adaptive algorithm preserves the $O(\log^2 \Delta_G)$
competitive ratio even when $\Delta_G$ is unknown in advance.

To extend the result to incremental CDS, we use the CDS formulation from Section~\ref{sec:CDSascover}.
We use Algorithm~\ref{alg:det} 
to handle the attacher constraints (as defined in Section~\ref{sec:CDSascover}, these are the constraints corresponding to vertices that attach to an existing component) and achieve an $O(\log^2\Delta_G)$-competitive 
dominating set for these constraints. Since the optimal solution for the 
attacher constraints is at most $|\textnormal{OPT}|$, the number of 
vertices selected by Algorithm~\ref{alg:det} is bounded by $O(\log^2\Delta_G)\cdot|\textnormal{OPT}|$. By Lemma~\ref{lem5.5}, there are at most $2|\textnormal{OPT}|$ constraints 
created by openers and joiners, and adding one vertex per such constraint 
contributes at most $2|\textnormal{OPT}|$ additional vertices. Combining 
both contributions results in a valid incremental CDS of size 
$O(\log^2\Delta_G)\cdot|\textnormal{OPT}|$, finishing the proof.
\end{proof}

\subsection{General Algorithm for Incremental DS}\label{sec:gen}
In this part, we show an $O(\log m\log \Delta)$ deterministic algorithm for incremental DS that uses an upper bound on the value of the optimal solution $\alpha$, and a parameter $m$ which represents an upper bound on the number of \textbf{requested} vertices $n$. In this section $\Delta$ represents the degree of $G[V_n]$, the subgraph induced by requested vertices, disregarding vertices which are revealed but not requested. 
We apply doubling to handle the case where $\alpha$ is not known, and we later show how to increase 
$m$ dynamically, using a technique similar to doubling, to obtain the desired competitive ratio of $O(\log n \log \Delta)$. If the number of requested vertices $n$ is known in advance we can set $m = n$ instead.

Our algorithm maintains both a fractional and an integral solution. We use the fractional solution to guide which vertices to add to the integral dominating set $D$, while a potential function ensures the two solutions remain balanced. To obtain the fractional solution, we apply the following result from Buchbinder et al.~\cite{book}:
\begin{theorem}\label{thm4.3}
    There is a deterministic $O(\log d)$-competitive fractional algorithm for covering problems with constraint size up to $d$.
\end{theorem}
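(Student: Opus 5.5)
The statement is Theorem~\ref{thm4.3}, the Buchbinder--Naor online fractional covering result. I would prove it with the standard multiplicative-update primal--dual scheme. All variables start at $0$. When a constraint $\sum_{j\in S} x_j\ge 1$ arrives with $|S|\le d$ and is currently violated, we raise its dual variable $y$ continuously. Simultaneously, every $j\in S$ is updated by
\[
\frac{\partial x_j}{\partial y}=\frac{x_j}{w_j}+\frac{1}{d\,w_j},
\]
and this continues until the constraint is satisfied. A discrete version, multiplying each $x_j$ by $(1+1/w_j)$ and adding $1/(d w_j)$ in each step, can be implemented deterministically. The analysis then has three parts: primal cost versus dual growth, dual feasibility up to a factor $O(\log d)$, and weak duality.

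For the first part, while the constraint is violated we have $\sum_{j\in S}x_j<1$. The rate of increase of the primal cost is therefore
\[
\sum_{j\in S} w_j\frac{\partial x_j}{\partial y}=\sum_{j\in S}x_j+\frac{|S|}{d}\le 2,
\]
so the primal cost is at most twice the dual objective $\sum y$.

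For the second part, fix a variable $j$ and let $Y_j$ be the total dual raised on constraints that contain $j$. Solving the differential inequality gives
\[
x_j\ge \frac1d\left(\exp\!\left(\frac{Y_j}{w_j}\right)-1\right).
\]
We never raise duals on a constraint once it is satisfied. So whenever $Y_j$ increases, $x_j<1$, which gives $Y_j\le w_j\ln(d+1)$. Scaling all duals by $1/\ln(d+1)$ therefore yields a feasible solution of the dual packing LP.

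For the third part, weak duality gives
\[
\text{primal}\le 2\sum y\le 2\ln(d+1)\cdot \mathrm{OPT}_{LP}\le O(\log d)\,\mathrm{OPT}.
\]

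The main obstacle is the second part, and specifically the bound on $x_j$ at the moment its dual last grows. The point that makes it work is that duals are only raised on constraints that are currently unsatisfied, which forces $x_j<1$ at those moments. Alternatively, I could simply cite Buchbinder and Naor~\cite{book} directly, since the paper invokes the theorem as a black box.
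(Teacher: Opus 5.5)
Your proposal is correct and is exactly the standard continuous primal--dual (multiplicative-update) proof of Buchbinder and Naor. The paper does not reprove this theorem but cites it as a black box, and your three steps (primal rate at most twice the dual rate, $x_j=\frac1d(e^{Y_j/w_j}-1)$ with $x_j\le 1$ giving $Y_j\le w_j\ln(d+1)$, then weak duality) are all sound. One caveat on your aside about the discrete update: multiplying by $(1+1/w_j)$ with unit dual steps needs the usual normalization $\min_j w_j\ge 1$. Without it, a single constraint on one variable of weight $\epsilon$ costs $1$ against $\mathrm{OPT}=\epsilon$. The continuous process is itself deterministic, though, so the theorem does not rely on that aside.
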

Since each covering constraint involves at most $\Delta + 1$ variables, Theorem~\ref{thm4.3} guarantees an online monotone fractional solution which is $O(\log\Delta)$ competitive.

Denote the dominating set as $D$. Let the set of vertices adjacent to requested vertices that are not yet dominated at step 
$i$ be $U_i = N[V_i]\setminus N[D]$. Denote $c_{v_i} = \sum_{u_j\in N[v_i]\cap V_i} x_j$ as the fractional coverage of $v_i$, and define the potential at step $i$ as \[\Phi_i = \sum_{v\in U_i}m^{2c_v} + \exp\left(\frac1{2\alpha}\left(\sum_{v\in D}w(v) - 3\log m\sum_{v_j\in V_i}w(v_j)x_j\right)\right).\]
Where the first sum ensures that every vertex is covered when it is requested, and the second term ensures that the integral and fractional solutions do not differ by too much.

At each step a vertex $v_i$ is requested to be dominated. We update the potential function for the newly revealed neighbors of $v_i$, and then add the covering constraint $\sum_{j \in N[v_i] \cap V_i} x_j \geq 1$ to the LP. We increase the weights of vertices in $N[v_i]\cap V_i$ in the fractional solution until this constraint is satisfied. Then, we add a subset of vertices to $D$ that ensures that the potential does not increase. Vertices $v_j$ with $w(v_j) > \alpha$ are kept at $x_j = 0$ in the fractional solution, ensuring that heavy vertices are never selected. The algorithm is described formally in Algorithm~\ref{alg:weighted}.

\begin{algorithm}[ht]
	\caption{Known Neighborhood Deterministic}
	\label{alg:weighted}
    \tcp*[h]{Notation as defined in the preceding text.}\\

Initialize $D \gets \emptyset$ (initial dominating set).

Initialize $x \gets \vec{0}$, $\Phi_0 \gets e^0 = 1$.

	\For{vertex $v_i \in V$ }{
        For each newly revealed vertex $u$, increment $\Phi_i$ by $m^{2c_u} = 1$.
        
        Add covering constraint for $v_i$ and compute the online fractional solution $\tilde{x}$ from Theorem~\ref{thm4.3}.\label{alg3l4}
        
        \For{$j = 1,\dots, i$}{\label{alg3l5}
            Set $x_j \gets \tilde{x}_j$. // Update fractional solution

If \(\Phi_i\) increased by the change in last line, add \(v_j\) to $D$\label{alg3l6}
        }
        }
\end{algorithm}

We first show that the potential does not increase during the loop starting at Line~\ref{alg3l5}.
\begin{lemma}\label{lem5.4}
Denote $\Phi_i^j$ as the value of the potential function after the $j$th iteration of the loop in Line~\ref{alg3l5}, and $\Phi_i^0$ as the potential at the beginning of the loop. Then,
\[\forall j: \Phi_i^j \le \Phi_i^{j-1}\]
\end{lemma}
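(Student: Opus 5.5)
The plan is an averaging (derandomization) argument applied to a single coordinate update. Fix an iteration $j$ of the loop in Line~\ref{alg3l5}. Let $\delta = \tilde{x}_j - x_j \ge 0$ be the increase of $x_j$; the fractional algorithm of Theorem~\ref{thm4.3} is monotone, so $\delta \ge 0$. Write $\Phi_i^{j-1} = S + E$, where $S=\sum_{v\in U_i} m^{2c_v}$ and $E$ is the exponential term. Define two candidate outcomes:
\begin{itemize}
\item $\Phi_A$ is the potential after setting $x_j\gets\tilde x_j$ without adding $v_j$;
\item $\Phi_B$ is the potential after also adding $v_j$ to $D$.
\end{itemize}
The algorithm adds $v_j$ exactly when $\Phi_A > \Phi_i^{j-1}$. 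So it suffices to exhibit some $p\in[0,1]$ with $(1-p)\Phi_A + p\Phi_B \le \Phi_i^{j-1}$. Then $\min(\Phi_A,\Phi_B)\le \Phi_i^{j-1}$, and whichever branch the algorithm takes, the potential does not increase. (If $\Phi_A\le\Phi_i^{j-1}$ we are done; otherwise $\Phi_B\le\Phi_i^{j-1}$ is forced.) If $w(v_j)>\alpha$ then $x_j$ stays $0$, so $\delta=0$ and nothing changes. Hence we may assume $w(v_j)\le\alpha$.

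I would choose $p = 1-m^{-2\delta}$, which satisfies $p\le 2\delta\ln m$, and handle the two parts of the potential separately.

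\textbf{First sum.} Only terms with $v\in N[v_j]\cap U_i$ change. In outcome $A$ each such term becomes $m^{2c_v+2\delta}$. In outcome $B$ all of these terms vanish, because $v_j$ now dominates them; terms outside $N[v_j]$ are identical in both outcomes. So the averaged contribution of an affected term is $(1-p)m^{2c_v+2\delta} = m^{2c_v}$, and the first sum does not increase in expectation.

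\textbf{Exponential term.} Put $a = \tfrac{3\log m\, w(v_j)}{2\alpha}$ and $y = \tfrac{w(v_j)}{2\alpha}\le \tfrac12$. Outcome $A$ multiplies $E$ by $e^{-a\delta}$, and outcome $B$ multiplies it by $e^{y-a\delta}$. The averaged factor is therefore $e^{-a\delta}\bigl(1+p(e^{y}-1)\bigr)$, and I need to show
\[
1+p(e^y-1)\le e^{a\delta}.
\]
Using $e^y-1\le 1.3y$ for $y\le \tfrac12$ and $p\le 2\delta\ln m$, the left-hand side is at most
\[
1+2.6\,\delta\ln m\cdot\frac{w(v_j)}{2\alpha}=1+1.3\,\delta\ln m\,\frac{w(v_j)}{\alpha}.
\]
This is at most $1+a\delta\le e^{a\delta}$ because $a = 1.5\,\tfrac{w(v_j)}{\alpha}\log m$.

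The main obstacle is this last constant check. It depends on the base of $\log$ relative to $\ln$. With $\log=\ln$ or $\log_2$ (which is larger) the factor $3$ in the potential leaves enough slack; otherwise I would adjust $p$ or the constants. The other point to verify is that $w(v_j)\le\alpha$ holds whenever $\delta>0$, since that is what gives $y\le\tfrac12$. Summing the two parts gives $(1-p)\Phi_A+p\Phi_B\le S+E=\Phi_i^{j-1}$, which proves the lemma.
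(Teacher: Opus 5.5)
Your proposal is correct and follows the paper's proof (Lemma~\ref{lemb.2} in the appendix) essentially step for step. The paper also adds $v_j$ with probability $1-m^{-2\delta_j}$ and shows each affected term of the first sum is preserved exactly in expectation. It bounds the exponential term's expected multiplier via $1-e^{-x}\le x$, then $e^y-1\le \tfrac32 y$ on $[0,\tfrac12]$ (using $w(v_j)\le\alpha$), then $1+x\le e^x$. Your explicit averaging step, bounding $\min(\Phi_A,\Phi_B)$ by the weighted average, just spells out what the paper states in one line. Your tighter constant $1.3$ also checks out under the paper's implicit convention $\log=\ln$.
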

The proof is deferred to Appendix~\ref{sec:omit}.

We next upper bound the potential function.
\begin{lemma}\label{lem5.8} At each step $i$,
\[\Phi_i \le n + 1.\]
\end{lemma}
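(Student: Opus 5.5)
Throughout step $i$, the potential changes in only two ways. First, when new vertices are revealed, each one adds $m^{2c_u} = m^0 = 1$, since a newly revealed vertex has no requested neighbours with positive fractional value. Second, during the loop in Line~\ref{alg3l5}, the fractional solution is updated and vertices are added to $D$.

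By Lemma~\ref{lem5.4}, the loop never increases $\Phi$. Therefore the only net increase over the whole run comes from revealing vertices.

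I will argue by induction on $i$ that
\[
\Phi_i \le 1 + \#\{\text{vertices revealed up to step } i\}.
\]
The base case is $\Phi_0 = e^0 = 1$. For the step, reveals add at most one unit per new vertex, and the fractional update together with the additions to $D$ does not increase the potential. One subtlety is that the fractional algorithm of Theorem~\ref{thm4.3} is monotone, so $x$ only grows. This means the second (exponential) term can only shrink, except when $D$ grows. But additions to $D$ occur exactly in the iterations controlled by Lemma~\ref{lem5.4}, so all of this is already accounted for.

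It remains to bound the number of revealed vertices by $n$. Here $n$ should be read as the number of revealed vertices, as in the statement of Theorem~\ref{thm1.12}. Every revealed vertex contributes at most one unit, giving $\Phi_i \le n+1$.

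The main point to check is that terms leaving $U_i$ only decrease $\Phi$, and that a vertex's contribution counted at reveal is never re-added. A vertex in $U_i$ only leaves $U_i$ and never re-enters it, since $D$ only grows. Hence the first sum is bounded by its reveal-time contribution plus the changes inside the loop, and those are handled by Lemma~\ref{lem5.4}.

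If $n$ is instead intended to be the number of requested vertices (with $m \ge n$), the same argument goes through as long as the per-vertex reveal cost is charged appropriately. Reconciling this is the only delicate step; I expect the intended reading to be the revealed count, in line with the notation of Theorem~\ref{thm1.12}.
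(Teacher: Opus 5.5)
Your proof is correct and is essentially the paper's own argument. The potential starts at $\Phi_0=1$, each newly revealed vertex enters $U_i$ with $c_u=0$ and so adds exactly $1$, and by Lemma~\ref{lem5.4} the update loop never increases $\Phi$; this gives $\Phi_i\le 1+(\text{number of revealed vertices})\le n+1$, with $n$ read as the number of revealed vertices, which is what the paper's proof implicitly uses.

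You should drop your closing hedge rather than try to reconcile it, because under the requested-vertex reading the bound is false. Suppose $v_2$ is requested after its neighbour $v_1$ is already in $D$, so $v_2$'s constraint is already satisfied and nothing changes. Then each of $k$ unrequested neighbours of $v_2$ not adjacent to $v_1$ contributes $1$, giving $\Phi_2>k$, which exceeds $n+1=3$ once $k\ge 3$.
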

\begin{proof}
    The potential only changes in Lines~\ref{alg3l4} and~\ref{alg3l6}. By Lemma~\ref{lem5.4} the potential can only increase at Line~\ref{alg3l4}. It increases by at most 1 for each newly revealed vertex, giving a total increase of at most $n$. Combining this with the initial value completes the proof.
\end{proof}

We now prove the competitive ratio of the algorithm.
\begin{theorem}\label{thm5.6}
    If $m \ge n$, then at every step $i$, $D$ is an incremental DS with competitive ratio $O(\log m \log \Delta)$.
\end{theorem}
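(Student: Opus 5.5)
Both parts of the statement, feasibility and cost, follow from the single upper bound $\Phi_i \le n+1 \le m+1$ of Lemma~\ref{lem5.8}. Each summand of $\Phi_i$ is positive, so each summand is individually at most $m+1$. I would read off feasibility from the first sum and cost from the exponential term. Throughout, I assume $\alpha \ge w(\textnormal{OPT})$; the case of unknown $\alpha$ is handled afterwards by doubling, as announced.

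\textbf{Feasibility.} Fix a step $i$. After Line~\ref{alg3l4}, the fractional solution $\tilde x$ of Theorem~\ref{thm4.3} satisfies the constraint of $v_i$, so $c_{v_i} \ge 1$ once all coordinates are copied in the loop of Line~\ref{alg3l5}.
\begin{itemize}
\item Suppose $v_i$ were still not dominated by $D$. Then $v_i \in U_i$, and it contributes $m^{2c_{v_i}} \ge m^2$ to $\Phi_i$.
\item By Lemma~\ref{lem5.8}, $\Phi_i \le m+1 < m^2$ for $m \ge 2$, which is a contradiction.
\end{itemize}
Hence every requested vertex is dominated at the end of its step. Since $D$ only grows, every earlier requested vertex stays dominated. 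Every vertex added to $D$ is some $v_j$ with $j \le i$, i.e.\ already requested, so $D \cap V_i = D$ dominates $G[V_i]$ for every $i$. This makes $D$ an incremental DS at every step.

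One technical point must be checked here. Heavy vertices ($w > \alpha$) are forced to $x_j = 0$, so I need $\tilde x$ to still cover $v_i$. I would run Theorem~\ref{thm4.3} on the LP restricted to vertices of weight at most $\alpha$. Since $\alpha \ge w(\textnormal{OPT})$, every vertex of OPT is light, so this LP is still feasible with value at most $w(\textnormal{OPT})$.

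\textbf{Cost.} The exponential summand is also at most $m+1$. This gives
\[
\sum_{v\in D} w(v) \;\le\; 3\log m \sum_{j} w(v_j)x_j + 2\alpha \ln(m+1).
\]
By Theorem~\ref{thm4.3} with constraint size $\Delta+1$, the fractional cost satisfies $\sum_j w(v_j)x_j = O(\log \Delta)\cdot w(\textnormal{OPT}_i)$. Therefore
\[
w(D) = O(\log m\log\Delta)\, w(\textnormal{OPT}) + O(\alpha\log m).
\]

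\textbf{Removing the assumption on $\alpha$.} Guess $\alpha$ and double it whenever the current run would violate the bound. Each phase then costs $O(\log m \log \Delta)\,\alpha_{\text{phase}}$, and these costs form a geometric sum dominated by the final guess, which is at most $2w(\textnormal{OPT})$. This yields $O(\log m\log\Delta)\cdot w(\textnormal{OPT})$ at every step, since $\textnormal{OPT}_i \le \textnormal{OPT}$ and the argument applies to every prefix.

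\textbf{Main obstacle.} I expect the main difficulty to be in the details rather than in the $\Phi$ bound. One detail is the interplay between the heavy-vertex rule and the fractional guarantee. The other is making the doubling restart rigorous: deciding when the guess is declared too small (e.g., when the fractional cost exceeds $c\log\Delta\cdot\alpha$), and ensuring that a restarted run's potential starts fresh, so that Lemma~\ref{lem5.8} applies within each phase.
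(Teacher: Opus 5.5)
Your proposal is correct and follows essentially the same argument as the paper. Feasibility comes from an undominated $v_i$ contributing $m^{2c_{v_i}}\ge m^2$ to $\Phi_i$, contradicting Lemma~\ref{lem5.8}, and the cost bound comes from taking logarithms of the exponential summand and using the $O(\log\Delta)$-competitiveness of the fractional solution. Your explicit handling of the heavy-vertex restriction and of the doubling on $\alpha$ only spells out what the paper leaves implicit; the restriction means running the fractional algorithm on light vertices, which stays feasible once $\alpha\ge w(\textnormal{OPT})$.
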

\begin{proof}
    We first show that $D$ is a valid incremental DS. If it is not, then there is an uncovered vertex $v$ with $c_v\ge 1$, so $\Phi \ge m^{2c_v}\ge m^2 \ge n^2$, which contradicts Lemma~\ref{lem5.8}.

    It remains to show that $\sum_{v\in D} w(v) \le O(\alpha \log m \log \Delta)$.
    By lemma~\ref{lem5.8}, $\Phi \le n+ 1 = 2n$ and then
    \begin{align*}
\exp\Bigg\{\frac{1}{2\alpha}\Big(\sum_{v\in D} w(v) - 3 \log m \sum_{v_j \in V_i} w(v_j) x_j \Big)\Bigg\} &\le 2n \\
\frac{1}{2\alpha}\Big(\sum_{v\in D} w(v) - 3 \log m \sum_{v_j \in V_i} w(v_j) x_j \Big) &\le \log 2n \\
\sum_{v\in D} w(v) &\le 2 \alpha \log 2n + 3 \log m \sum_{v_j \in V_i} w(v_j) x_j \\
\sum_{v\in D} w(v) &\le O(\alpha \log m \log \Delta)
\end{align*}
    In the second inequality we take logarithms of both sides. In the last inequality, we used $m \ge n$ and the fact that the fractional solution $\sum_{v_j \in V_i} w(v_j) x_j$ is $O(\log \Delta)$-competitive. 
\end{proof}

We are now ready to prove the main theorem of the section.

\newtheorem*{manualThm14}{Theorem 1.4}
\begin{manualThm14}
When the neighborhood of requested vertices is known, there is an $O(\log n \log \Delta)$-competitive deterministic algorithm for \textnormal{incremental DS}.
\end{manualThm14}
\begin{proof}
By Theorem~\ref{thm5.6}, if $m \ge n$ then Algorithm~\ref{alg:weighted} 
achieves competitive ratio $O(\log m \log \Delta)$. To handle the case 
where $n$ is not known in advance, we use the guess-and-increase technique 
from Section~\ref{doubling}: we start with an initial constant value of 
$m$, and when $i > m$ we square $m$ and rerun Algorithm~\ref{alg:weighted} 
with the updated value. By the same calculation as in the proof of 
Theorem~\ref{thm1.11}, the resulting cost across all phases is bounded by
\[
\sum_{j = 0}^{\infty} O\big(\alpha \log(m^{2^{-j}}) \log \Delta \big) 
= O(\alpha \log m \log \Delta).
\]
At any point $m \le n^2$, so $\log m = O(\log n)$, yielding the desired 
$O(\log n \log \Delta)$ competitive ratio.
\end{proof}
\section{Lower Bounds}
\subsection{Lower Bounds via Online Set Cover}
We present a reduction from the online set cover problem to the online 
incremental Dominating Set problem. The reduction preserves the instance 
size up to polynomial factors and maintains the optimal solution size up 
to a constant additive term. This allows us to transfer known hardness 
results for online set cover and prove Theorems~\ref{thm1.7} 
and~\ref{thmdetLB}.

Given an instance of online set cover with universe 
$U = \{e_1, e_2, \dots, e_n\}$ and collection 
$\mathcal{S} = \{S_1, S_2, \dots, S_m\}$ with $S_i \subseteq U$, 
construct the incremental dominating set instance as follows. 
The vertex set is
\[
    V = \{r, t, t'\} \cup \mathcal{S} \cup U,
\]
where $r$ is a root adjacent to all set nodes, and $t$ is a terminal 
adjacent to all element nodes, with a neighbor $t'$. The edge set is
\[
    E = (\{r\} \times \mathcal{S}) 
        \cup \{(S_i, e_j) \mid e_j \in S_i\} 
        \cup (U \times \{t\}) 
        \cup \{(r,t),\, (t, t')\}.
\]
That is, every set $S_i$ is adjacent to $r$, a set $S_i$ and element 
$e_j$ are adjacent iff $e_j \in S_i$, every element of $U$ is adjacent 
to $t$, and $t$ is adjacent to both $r$ and $t'$.

In the online adversarial setting, vertex $r$ is requested first, 
followed by the vertices in $\mathcal{S}$. When an element is requested 
in the online set cover instance, the corresponding vertex is requested 
in the incremental DS instance. Once the set cover instance is finished, the terminal
$t$ is requested, followed by $t'$, and then vertices in $U$ which were not yet 
requested. A visualization of the reduction appears in 
Figure~\ref{fig:osctoods}.

\begin{figure}[tp]
    \centering

\begin{tikzpicture}[
    scale=1,
    every node/.style={draw, circle, minimum size=6mm, inner sep=0pt, font=\small},
    >=stealth
]
\node (e1) at (-1,0) {$e_1$};
\node (e2) at (1,0)  {$e_2$};
\node (e3) at (3,0)  {$e_3$};
\node (e4) at (5,0)  {$e_4$};
\node (S1) at (0,-1.5)  {$S_1$};
\node (S2) at (2,-1.5)  {$S_2$};
\node (S3) at (4,-1.5)  {$S_3$};
\node (r) at (2,-3) {$r$};
\node (l)  at (-3, -1.5) {$t$};
\node (l2) at (-3,  0)   {$t'$};
\draw (r) -- (S1);
\draw (r) -- (S2);
\draw (r) -- (S3);
\draw (l) -- (r);
\draw (l2) -- (l);
\draw (S1) -- (e1);
\draw (S1) -- (e2);
\draw (S2) -- (e1);
\draw (S2) -- (e3);
\draw (S2) -- (e4);
\draw (S3) -- (e2);
\draw (S3) -- (e4);
\draw (l) -- (e1);
\draw (l) -- (e2);
\draw (l) -- (e3);
\draw (l) -- (e4);
\end{tikzpicture}
    \caption{The resulting reduction from a set cover instance with 4 elements $E = \{e_1,e_2,e_3,e_4\}$ and 3 sets $S_1 = \{e_1,e_2\}$, $S_2 = \{e_1,e_3,e_4\}$ and $S_3 = \{e_2, e_4\}$. The vertices $r$ and $S_1, S_2, S_3$ are requested first, and then when an element $e_i$ is requested in the online set cover instance its corresponding vertex is requested in the reduction. After the online set cover instance is finished, $t$ and $t'$ are requested, followed by the rest of the vertices in $E$.}
    \label{fig:osctoods}
\end{figure}

We start by proving the following lemma,
\begin{lemma}\label{lem:opt}
    There is an optimal solution for the reduced incremental instance that contains only elements belonging to $\{r,t\}\cup \mathcal{S}$.
\end{lemma}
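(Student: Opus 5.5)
We prove this by an exchange argument. Take an arbitrary optimal incremental dominating set $S^*$ and transform it into a solution $S'\subseteq\{r,t\}\cup\mathcal{S}$ with $|S'|\le|S^*|$ (in the weighted setting, all weights in the reduction are taken uniform). The request order is $r$, then $\mathcal{S}$, then the online elements, then $t$, $t'$, then the remaining elements. We apply two replacements: every element vertex $e_j\in S^*$ is replaced by some set vertex $S_i$ containing $e_j$, and $t'$ is replaced by $t$. Duplicates only shrink the set, so $|S'|\le|S^*|$. We may assume every element lies in at least one set; otherwise the set cover instance is infeasible, and we can harmlessly ignore it.

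Before any replacements, observe that $r\in S^*$. Indeed, $r$ is requested first and, at that moment, $G[V_1]=\{r\}$ is isolated, so it must dominate itself. Every $S_i$ is adjacent to $r$, so all of $\mathcal{S}$ is already dominated by $r$ at every prefix. Hence the choice of which set vertices lie in the solution never affects domination of $r$ or of $\mathcal{S}$.

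Next we verify that $S'$ is feasible at every prefix $V_k$, going through the vertex classes.
\begin{itemize}
\item $r$ and the set vertices are dominated by $r\in S'$.
\item Consider an element $e_j$ requested during the online phase, at time $k$. In $S^*$, it is dominated within $V_k$ by either itself or an earlier-requested neighbor. Its neighbors are set vertices (all requested before it) and $t$ (not yet requested). If $e_j$ is dominated by a set vertex, that vertex remains in $S'$. If $e_j$ dominates itself, then $e_j\in S^*$ is replaced by a set $S_i\ni e_j$, and $S_i$ was requested before $e_j$, so $e_j$ stays dominated at all prefixes.
\item Consider $t$. At its request, its available neighbors are $r$ and the requested elements, and $r\in S'$. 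So $t$ is dominated.
\item Consider $t'$. Its only neighbor is $t$, which is requested before it. So in $S^*$ at that prefix, either $t$ or $t'$ is selected, and after the replacement $t\in S'$. Moreover, $t'$ itself does not need to dominate anything else: its only neighbor $t$ is already dominated by $r$.
\item Elements requested in the final phase are adjacent to $t\in S'$, which was requested before them.
\end{itemize}

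The step needing the most care is confirming that replacing an element $e_j$ does not break domination of \emph{other} vertices that $e_j$ dominated in $S^*$. The neighbors of $e_j$ are only set vertices and $t$, and all of these are dominated by $r$ already. Likewise, the replacing set vertex $S_i$ was requested before $e_j$, so it is available at every prefix where it is needed. This completes the argument: $S'$ is a feasible incremental dominating set of no larger size, and it consists only of vertices from $\{r,t\}\cup\mathcal{S}$.
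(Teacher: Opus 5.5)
Your proposal is correct and is essentially the paper's exchange argument: $r$ is forced into any solution, each selected element vertex is replaced by a set vertex containing it, $t'$ is replaced by $t$, and since this is a map from the old solution onto the new one, the size cannot grow. You are more careful than the paper about prefix-wise feasibility: set vertices are requested before the elements they replace, and unselected elements stay dominated by a retained set vertex or by $t$. The only nit is that an element lying in no set need not make the set cover instance infeasible if it is requested only after $t$, but such an element can simply be dropped since $t$ dominates it.
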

\begin{proof}
    Let $\mathcal{I}$ be a solution to the reduced incremental instance. Vertex $r\in \mathcal{I}$, as $r$ is the first requested vertex. Let $\mathcal{S}' = \mathcal{S}\cap \mathcal{I}$ and $U' = U\cap \mathcal{I}$. For each $u\in U'$ we choose some set $s_u\in \mathcal{S}$ that contains it. Define $\mathcal{S}_{U'} = \{s_u | u\in U'\}$, and let $\mathcal{I'} = \{r,t\}\cup \mathcal{S}'\cup \mathcal{S}_{U'}$. 
    
    The solution $\mathcal{I}'$ dominates all requested vertices: each 
    $u \in U'$ is adjacent to $s_u \in \mathcal{I}'$, every set vertex 
    is dominated by $r$, and $r$ dominates itself. Since $t'$ has only 
    one neighbor $t$, any solution must contain $t$ or $t'$, so replacing 
    whichever appears in $\mathcal{I}$ with $t$ does not increase the 
    solution size. Finally, $|\mathcal{S}_{U'}| \le |U'|$ gives 
    $|\mathcal{I}'| \le |\mathcal{I}|$.\end{proof}

We now proceed to proving that the reduction satisfies several properties.
\begin{lemma}\label{lem6.1}
    Our reduction satisfies the following properties:
\begin{enumerate}
    \item The number of vertices in the resulting incremental instance is $n + m + 3$.\label{lem6.1.1}
    \item The optimal solution size in the incremental instance differs from the optimal set cover size by 2.\label{lem6.1.2}
    \item The reduction is online, meaning that each request in the online set cover instance is translated into a corresponding vertex in the online incremental DS instance.\label{lem6.1.3}
    \item There is an optimal solution that is connected.\label{lem6.1.4}
\end{enumerate}
\end{lemma}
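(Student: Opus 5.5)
We verify the four items in turn. Item~\ref{lem6.1.1} is immediate from the vertex set $V=\{r,t,t'\}\cup\mathcal{S}\cup U$, which has $3+m+n$ vertices. Item~\ref{lem6.1.3} also follows from the construction. The requests for $r$ and for all set vertices are issued before the online set cover sequence begins, and they depend only on the (known) collection $\mathcal{S}$. Each element request $e_j$ is translated immediately into a request of the vertex $e_j$. The tail $t,t',$ followed by the remaining elements, is issued only after the set cover sequence ends. So the $k$-th vertex request is determined by the first $k$ set cover requests, as an online reduction requires.

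For item~\ref{lem6.1.2}, let $U^\ast\subseteq U$ be the elements actually requested in the set cover instance, and let $\OPT_{SC}$ be the minimum number of sets covering $U^\ast$. We prove $|\OPT| = \OPT_{SC}+2$ in two directions.

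\textbf{Upper bound.} Take an optimal cover $\mathcal{C}$ and the set $\{r,t\}\cup\mathcal{C}$. We check incremental domination at every prefix. The vertex $r$ dominates itself. Each set vertex is requested after $r$ and is adjacent to it. Each requested $e_j\in U^\ast$ arrives after all set vertices, so some $S\in\mathcal{C}$ containing $e_j$ is already present and adjacent to it. The vertex $t$ is adjacent to the already-requested $r$ and lies in the solution. The vertex $t'$ is adjacent to $t$. The late elements are adjacent to $t$.

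\textbf{Lower bound.} By Lemma~\ref{lem:opt} we may assume the optimal solution has the form $\{r,t\}\cup\mathcal{S}''$. Every $e_j\in U^\ast$ is requested before $t$, so at its request time its only requested neighbors are set vertices, and it is not itself in this solution. Hence $\mathcal{S}''$ must contain a set containing $e_j$. So $\mathcal{S}''$ is a set cover of $U^\ast$, and $|\OPT|\ge \OPT_{SC}+2$. This prefix-timing argument is the main point to get right: it is exactly why $t$ is requested after the set cover phase, since an earlier $t$ would dominate all elements for free.

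For item~\ref{lem6.1.4}, the solution $\{r,t\}\cup\mathcal{C}$ is connected in every prefix. In each $G[V_k]$, the vertex $r$ is adjacent to every chosen set vertex present and, once requested, to $t$, so the chosen vertices present always form a star around $r$. This also shows the optimal incremental DS value equals the optimal incremental CDS value, which is what transfers the bounds to CDS.
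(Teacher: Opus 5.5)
Your proof is correct and follows the paper's route: Lemma~\ref{lem:opt} together with the explicit solution $\{r,t\}\cup\mathcal{C}$ for item 2, and the star around $r$ for item 4. Your timing argument (every element of $U^\ast$ is requested before $t$, so a chosen set vertex must dominate it) makes explicit the step the paper only asserts when it says $\mathcal{I}\setminus\{r,t\}$ is a feasible cover.
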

\begin{proof}
Property~\ref{lem6.1.1} is true by the way the reduction is defined.

    The set cover solution selects a subset $\mathcal{S}'\subseteq \mathcal{S}$ to cover all the requested elements.
    The set $\{r, t\}\cup \mathcal{S}'$ is a valid solution to the reduction, since all vertices corresponding to sets in $\mathcal{S}$ are dominated by $r$, each requested element in $U$ must be covered by an element in $\mathcal{S}'$, and all other elements in $U$ are dominated by $t$.
 As a result, for each solution to the online set cover problem of size $k$ there is a solution to the incremental problem with size $k+2$.

    By Lemma~\ref{lem:opt}, let $\mathcal{I} \subseteq \{r,t\}\cup \mathcal{S}$ be the optimal solution to the reduced incremental instance, then $\mathcal{I}\backslash\{r,t\}$ is a feasible cover for the set cover instance of size $|\mathcal{I}| - 2$. Combining both directions proves Property~\ref{lem6.1.2}.

By Lemma~\ref{lem:opt} any optimal solution 
belongs to $\{r,t\} \cup \mathcal{S}$. The vertices $r$ and $\mathcal{S}$ 
are revealed at the start, before any element is requested. Subsequently, 
each element request in the online set cover instance corresponds exactly 
to the arrival of the matching vertex in the incremental DS instance. 
Once the set cover instance finishes, $t$ is requested, after which any 
remaining elements of $U$ are dominated by $t$. Thus the online algorithm does not need to
select any new vertices to dominate the remaining vertices requested in $U$, proving Property~\ref{lem6.1.3}.

 By Lemma~\ref{lem:opt}, there exists an optimal solution containing only $r$, $t$, 
and elements from $\mathcal{S}$, where each vertex in $\mathcal{S}$ is adjacent to $r$. Since $r$ 
is requested first and all other vertices in the optimal solution are connected to $r$, the 
partial solution remains connected at every step, thus proving Property~\ref{lem6.1.4}.
\end{proof}
We now prove the polynomial lower bound.
\newtheorem*{manualThm15}{Theorem 1.5}
\begin{manualThm15}
There is no polynomial-time online algorithm for \textnormal{incremental DS} or \textnormal{incremental CDS} with competitive ratio $o(\log^2 \Delta)$ unless \textnormal{NP} $\subseteq$ \textnormal{BPP}, even when the neighborhood of requested vertices is known.
\end{manualThm15}
\begin{proof}
    By the results of Korman et al.~\cite{lowerbound}, there are constants $a, b > 0$ such that there are online set cover
instances with $\hat{n}$ elements and $O(\hat{n}^a)$ sets such that the minimum number of sets required
to form a feasible cover at the end of the algorithm is $K = O(\hat{n}^b)$ and every polynomial randomized online
algorithm is $\Omega(\log^2 \hat{n})$-competitive on these instances unless $NP \subseteq BPP$. Moreover, this lower bound holds even when the full 
    set system is known in advance, as the hardness stems from the adversary's 
    choice of element requests rather than the structure of the sets. Since 
    the graph in our reduction is fully determined by the set system, knowing 
    the graph in advance is equivalent to knowing the set system, and thus any 
    lower bound transfers directly to the known-neighborhood setting of 
    incremental DS.

By Lemma~\ref{lem6.1}(\ref{lem6.1.2}), the optimal solution size remains $\Theta(K)$ in the reduced incremental instance.
Moreover, by Lemma~\ref{lem6.1}(\ref{lem6.1.3}), every polynomial algorithm is $\Omega(\log^2 \hat{n})$-competitive on the reduced incremental instance.

By Lemma~\ref{lem6.1}(\ref{lem6.1.1}), the reduced instance has $n = \mathrm{poly}(\hat{n})$ vertices.
In this construction, the maximum degree $\Delta$ is at least the degree of vertex $r$, which is $m = O(\hat{n}^a)$, so $\Delta = \Omega(\hat{n}^a)$. Since $\log \hat{n} = \Theta(\log \Delta)$, the lower bound of $\Omega(\log^2 \hat{n})$ translates to $\Omega(\log^2 \Delta)$.

Lemma~\ref{lem6.1}(\ref{lem6.1.4}) asserts that the optimal solution is connected, giving the same hardness result for incremental CDS.
\end{proof}
\newtheorem*{manualThm16}{Theorem 1.6}
\begin{manualThm16}
There is no deterministic online algorithm for \textnormal{incremental DS} or \textnormal{incremental CDS} with competitive ratio $o(\log^2 \Delta/\log \log \Delta)$, even when the neighborhood of requested vertices is known.
\end{manualThm16}
\begin{proof}The proof follows the same reduction as Theorem~\ref{thm1.7}, with the hardness bound replaced by the deterministic lower bound of Alon et al.~\cite{osc}.

    By the results of Alon et al.~\cite{osc}, for each $\hat{n}, \hat{m}$ with $\hat{n} = \Theta(\hat{m})$ there is an instance of online set cover with $\hat{n}$ elements and $\hat{m}$ sets such that the size of the optimal solution is 1 and there is no deterministic algorithm with competitive ratio $o(\log \hat{n}\log \hat{m}/\log\log (\hat{n} + \hat{m}))$ which simplifies to $o(\log^2 \hat{n}/\log\log \hat{n})$. As in the proof of Theorem~\ref{thm1.7}, since the graph is fully 
determined by the set system, the lower bound holds in the 
known-neighborhood model as well.

By Lemma~\ref{lem6.1}(\ref{lem6.1.2}) the optimal solution size is constant in the reduced incremental instance, and by Lemma~\ref{lem6.1}(\ref{lem6.1.3}) every deterministic algorithm is $\Omega(\log^2 \hat{n}/\log\log \hat{n})$-competitive on the reduced incremental instance.

Due to Lemma~\ref{lem6.1}(\ref{lem6.1.1}) the number of vertices in the reduced incremental instance is polynomial in $\hat{n}$. Additionally, since $\hat{n} = \Theta(\hat{m})$ in these instances and vertex $r$ has degree $\hat{m}$, the maximum degree satisfies $\Delta = \Omega(\hat{n})$. Therefore $\log \hat{n} = \Theta(\log \Delta)$, so the lower bound of $\Omega(\log^2 \hat{n}/\log\log \hat{n})$ translates to $\Omega(\log^2 \Delta/\log\log \Delta)$, proving the theorem for incremental DS.

Lemma~\ref{lem6.1}(\ref{lem6.1.4}) asserts that the optimal solution is connected, giving the same hardness result for incremental CDS.
\end{proof}
\subsection{Random Order Lower Bound}
We present here a lower bound of $\Omega(\log n)$ on random order incremental DS. This proves Theorem~\ref{thm1.10}. As a result, we can conclude that no algorithm has an $o(\log n)$ competitive ratio in the adversarial case, proving Theorem~\ref{thm1.8}.

We start with the result of Gupta et al.~\cite{random}.
\begin{theorem}[Theorem 5.1 in~\cite{random}]
\label{thmrand}
There is an instance of set cover with $n$ elements and $\Theta(n)$ sets, where any online algorithm for random order set cover is $\Omega(\log n)$-competitive. 
\end{theorem}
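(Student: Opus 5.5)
Since this statement is imported from Gupta et al.~\cite{random}, the intended proof here is simply to invoke their Theorem 5.1. If I had to reprove it, I would use Yao's principle. I would fix a distribution over set cover instances on a common universe with $\Theta(n)$ sets, requested in uniformly random order, and show that every deterministic online algorithm pays $\Omega(\log n)\cdot\OPT$ in expectation.

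\textbf{Candidate construction.} Identify the universe with the nonzero vectors of $\mathbb{F}_2^k$, where $n=2^k-1$. For every nonzero $a$ there is one set $S_a=\{x : a\cdot x=1\}$, so the number of sets is $n$. The adversary draws a hidden $a^*$ uniformly and requests a random subset of $S_{a^*}$ in random order. The key structural fact is a linear-algebra dichotomy. A family $\{S_b : b\in B\}$ covers all of $S_{a^*}$ iff $a^*\in \mathrm{span}(B)$; if $a^*\notin\mathrm{span}(B)$, some $x$ satisfies $b\cdot x=0$ for all $b\in B$ and $a^*\cdot x=1$. Moreover, after $t$ arrivals the posterior on $a^*$ is uniform on an affine subspace of dimension $k-\mathrm{rank}$ of the seen elements.

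\textbf{Main obstacle.} With $\OPT=1$, an algorithm can protect itself cheaply against the early arrivals. A family of $j$ random sets leaves a fresh element uncovered with probability only about $2^{-j}$. So $O(\log k)$ sets suffice during the learning phase, after which the algorithm can buy $a^*$ once it is identified. This construction therefore only gives $\Omega(\log\log n)$, and the real work is amplifying it.

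\textbf{Amplification, first attempt.} I would take $L$ disjoint copies (blocks), each with its own hidden vector, so that $\OPT=L$. I would tune the number of requested elements per block so that each block reveals only a few elements in random order. The aim is that, in each block, the algorithm must either buy a constant fraction of $\log n$ ``hedging'' sets or incur a constant probability of a forced purchase at each informative arrival. The goal is a per-block cost of $\Omega(k)=\Omega(\log n)$, giving $\Omega(\log n)\cdot\OPT$ by linearity of expectation over the blocks.

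\textbf{Amplification, fallback.} If this fails, I would instead make every block a copy of the Alon et al.-style instance~\cite{osc}. In that instance the adversary requests elements in the intersection of the sets not yet chosen, so each arrival halves the candidate set. In the random-order version, the halving must hold with constant probability per arrival because the random ordering is uniform within the block.

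\textbf{Steps in order.} First, establish the covering dichotomy above. Second, show that at each arrival before identification the arriving element is uncovered with constant probability unless the algorithm already holds $\Omega(k)$ sets in that block. Third, sum over blocks.
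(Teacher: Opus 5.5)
Citing Theorem~5.1 of Gupta et al.~\cite{random} is exactly what the paper does. It imports the statement as a black box and gives no argument of its own, so your first sentence is already a complete justification. The self-contained reproof you outline, however, has a genuine gap and cannot reach $\Omega(\log n)$, because disjoint copies never amplify a competitive ratio. The blocks share no elements or sets, and a uniformly random order on the union restricts to a uniformly random order on each block. So an algorithm may run your own cheap strategy independently in every block. It holds $\lceil\log_2 k\rceil$ independent sets, pays for the rare forced purchases, and buys $S_{a^*}$ once the roughly $k+O(1)$ arrivals seen so far pin $a^*$ down. This costs $O(\log k)$ in expectation per block against one set of $\OPT$ per block. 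The overall ratio therefore stays $O(\log k)=O(\log\log n)$ however you tune $L$ or the number of requests per block; fewer requests only make a block cheaper. In particular, your second step is false as stated. By your own computation, a rank-$j$ family leaves a fresh element of $S_{a^*}$ uncovered with probability roughly $2^{-j}$. So already $j=\Theta(\log k)$ sets, not $\Omega(k)$, drive the per-arrival probability of a forced purchase down to $O(1/k)$ throughout the learning phase.

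The fallback fails for a related reason. The Alon et al.\ instance~\cite{osc} gets its halving from the \emph{order} of requests, and random order removes exactly this. In the deterministic version the next element is chosen adaptively in the intersection of the unbought sets. In its oblivious randomized counterpart, the nodes of a random root-to-leaf path are requested top-down. Take the path instance under random order, and let the algorithm always hold a leaf below the deepest path node seen so far. It can then be forced to buy only at arrivals deeper than all previous ones. A uniformly random permutation of the $\log_2 m$ path nodes has only $H_{\log_2 m}=O(\log\log m)$ such records, so halving ``with constant probability per arrival'' does not happen. Both routes are therefore capped at $O(\log\log n)$. A proof needs a single instance whose random-order ratio is already $\Omega(\log n)$, which is the nontrivial content of~\cite{random}. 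Here you should simply rely on the citation, as the paper does.
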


Let $(S, E)$ be the set cover instance used in Theorem~\ref{thmrand}, where $S$ is the collection of subsets and $E$ is the ground set of elements. We construct an incremental Dominating Set instance as follows:

The vertex set includes all elements in $E$. For each subset $s_i \in S$, we create a clique $S_i = {s_i^1, \dots, s_i^{n^3}}$. Each vertex in $S_i$ is connected to all elements $e \in s_i$. Additionally, the clique $R = {r^1, \dots, r^{n^5}}$ is added, where each $r^j \in R$ is connected to all vertices in every $S_i$.

Let $\eta$ be the event that a vertex in $R$ is requested first, and also for each $S_i$ a vertex $s\in S_i$ is requested before all the vertices in $E$. We show that event $\eta$ occurs with probability at least $1/2$.

\begin{lemma}\label{lem6.4}
    For sufficiently large $n$, $\P[\eta] \ge \frac12$.
\end{lemma}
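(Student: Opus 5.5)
The plan is to bound the probability of the complement of $\eta$ with a union bound over its failure modes. Let $N$ be the total number of vertices. We have $|E| = n$, there are $k = \Theta(n)$ cliques $S_i$ of size $n^3$ each, and $|R| = n^5$. Hence
\[
N = n + k n^3 + n^5 = n^5(1+o(1)).
\]
Under a uniformly random permutation, $\eta$ can fail in two ways.

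\textbf{(a) The first vertex is not in $R$.} The first requested vertex is uniform over all $N$ vertices, so
\[
\P[\text{first} \notin R] = \frac{n + k n^3}{N} = O\!\left(\frac{n^4}{n^5}\right) = O(1/n).
\]

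\textbf{(b) Some clique $S_i$ is not hit before the first element of $E$.} Fix $i$ and restrict the random permutation to the $n + n^3$ vertices of $E \cup S_i$. The induced relative order is uniform, and the first vertex of this restricted order lies in $E$ with probability
\[
\frac{n}{n+n^3} \le \frac{1}{n^2}.
\]
The failure event for $S_i$ is exactly that some vertex of $E$ precedes every vertex of $S_i$, i.e.\ that this first vertex lies in $E$. A union bound over the $k = \Theta(n)$ cliques gives a total failure probability of $O(1/n)$ for (b).

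Combining (a) and (b), $\P[\overline{\eta}] = O(1/n)$, which is at most $1/2$ for sufficiently large $n$. This proves the lemma.

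I do not expect a real obstacle here. The only points needing care are stating precisely that the relative order of any fixed subset under a uniform permutation is itself uniform, and checking that the number of sets $\Theta(n)$ from Theorem~\ref{thmrand} is small enough for the union bound against the $1/n^2$ per-clique failure probability. The polynomial blow-up factors $n^3$ and $n^5$ were clearly chosen to give this slack.
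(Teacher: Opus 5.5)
Your proof is correct and follows the same route as the paper. The paper also takes a union bound over two failure events. The first is that the first request lies outside $R$, which has probability $O(1/n)$ because $|R| = n^5$ dominates the $\Theta(n^4)$ other vertices. The second is that, for one of the $\Theta(n)$ cliques $S_i$, some element of $E$ precedes all of $S_i$, which has probability $O(1/n^2)$ per clique. Your restriction of the permutation to $E \cup S_i$, giving the exact value $n/(n+n^3)$, simply makes explicit the per-clique estimate that the paper states as $1-\Theta(n/n^3)$.
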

\begin{proof}
    The size of $|R|$ is $n^5$ and the size of vertices which are not in $R$ is $\Theta(n) \cdot n^3 + n = \Theta(n^4)$. By definition of random order, the probability that a vertex in $R$ is not requested first is $\Theta\left(\frac{1}{n}\right) \le \frac{1}4$.

Next, for each $S_i$, the probability that all vertices in $E$ are requested after the first vertex of $S_i$ is at least $1 - \Theta\left(\frac{n}{n^3}\right) = 1 - \Theta\left(\frac{1}{n^2}\right)$. Applying the union bound over all $|S| = \Theta(n)$ sets, the probability that some $e \in E$ appears before any vertex in its corresponding $S_i$ is at most $\Theta(n \cdot \frac{1}{n^2}) = \Theta\left(\frac{1}{n}\right) \le \frac14$.

Applying the union bound to these two events, we get $\P[\eta] \ge \frac12$
\end{proof}

We now can prove the lower bound.

There is no online random-order algorithm for \textnormal{incremental DS} or \textnormal{incremental CDS} with competitive ratio $o(\log \Delta)$.
\newtheorem*{manualThm17}{Theorem 1.7}
\begin{manualThm17}
There is no online random-order algorithm for \textnormal{incremental DS} or \textnormal{incremental CDS} with competitive ratio $o(\log \Delta)$.
\end{manualThm17}
\begin{proof}
Assume the event $\eta$ occurs. Then some vertex in $R$ is selected first, and for every $s_i \in S$, a vertex $s_i^j \in S_i$, which is already dominated by $r$, is requested before any $e \in E$.

When vertices in $E$ are requested (in random order), each $e \in E$ is adjacent only to vertices $s_i^j$ such that $e \in s_i$. To maintain the domination invariant, the algorithm must select such an $s_i^j$ upon $e$'s arrival. Selecting $e$ itself would dominate only $e$, whereas selecting a neighboring $s_i^j$ may also dominate other future elements in $E$, so we may assume the algorithm avoids selecting $e$ directly.

This simulates the random order set cover problem of Theorem~\ref{thmrand}, where the algorithm must cover elements of $E$ using sets $s_i$. Thus, given $\eta$, any algorithm incurs competitive ratio $\Omega(\log n)$. Since by Lemma~\ref{lem6.4} the event $\eta$ occurs with constant probability, the expected competitive ratio is also $\Omega(\log n)$. As the reduction produces a graph of size polynomial in $n$, the lower bound applies to the random-order incremental Dominating Set problem. Furthermore, since the maximum degree $\Delta$ in this construction is $\Theta(n^5)$, this establishes a lower bound of $\Omega(\log \Delta)$.

To extend the result to the connected variant, note that the vertex $r \in R$ is adjacent to all $s_i^j$. Therefore, under $\eta$, both the optimal and the algorithm's solution are connected, so the same lower bound applies to the incremental Connected Dominating Set problem as well.
\end{proof}

Theorem~\ref{thm1.8} follows as a corollary of Theorem~\ref{thm1.10}, since any $o(\log \Delta)$-competitive algorithm for incremental DS would imply such an algorithm in the random-order setting, yielding a contradiction. The bound also holds when the neighborhood of requested vertices are known, since the reduction from online set cover already assumes that the set system is known in advance.

\subsection{Hardness of Approximation}\label{sec:hard}
We construct a reduction from the offline Dominating Set (DS) problem to the incremental Dominating Set and incremental Connected Dominating Set (CDS) problems. The reduction preserves the number of vertices and maintains the maximum degree up to a constant factor. This construction is used to prove Theorem~\ref{thm1.1}.

Let $G = (V, E)$ be an input graph with $V = \{v_1, \dots, v_n\}$, maximum degree $\Delta$, and minimum dominating set of size $\OPT$. We define a new graph $G' = (V', E')$ by adding $S = \{s_1, \dots, s_{n/\Delta}\}$ (wlog we assume that $\Delta$ divides $n$) and $U = \{u_1, \dots, u_n\}$ to $V$. The full vertex set is $V' = V \cup S \cup U$, consisting of $n(2 + 1/\Delta)$ vertices.

We define the following edge sets in the new graph:
\begin{align*}
    E_s &= \{(s_i, v_{\Delta \cdot i + j}) \mid i \in [n/\Delta],\ j \in [\Delta] \}, \\
    E_u &= \{(v_i, u_j) \mid (v_i, v_j) \in E \} \cup \{(v_i, u_i) \mid i \in [n] \}.
\end{align*}
The final edge set is $E' = E_s \cup E_u$.

Intuitively, the set $U$ is a duplicate of $V$ as the edges are mirrored from $V$ to $U$, and each vertex $v_i \in V$ is also connected to its counterpart $u_i \in U$. 
The vertices are requested in the following order: first the vertices in $S$, then those in $V$, and finally those in $U$.

The degree of each vertex $s_i$ is $\Delta$, the degree of $v_i$ is bounded by $2\Delta + 1$ and the degree of $u_i$ is at most $\Delta + 1$. Thus, the maximum degree is $O(\Delta)$. In Figure~\ref{fig:dstoids} an example of the reduction is given.

\begin{figure}[t]
    \centering
\begin{tikzpicture}[
    every node/.style={circle, draw, minimum size=8mm, font=\footnotesize, inner sep=0pt},
    >=stealth
]

\node (v1) at (-6,-1) {$v_1$};
\node (v2) at (-6, 1) {$v_2$};
\node (v3) at (-4,-1) {$v_3$};
\node (v4) at (-4, 1) {$v_4$};

\draw (v1)--(v2)--(v4)--(v3);
\draw (v2)--(v3);

\draw[thick, double,-implies, double distance=1.5mm] (-2.5,0)--(-1,0);

\node (u1) at (1,2) {$u_1$};
\node (u2) at (2.5,2) {$u_2$};
\node (u3) at (4,2) {$u_3$};
\node (u4) at (5.5,2) {$u_4$};

\node (v1p) at (1,0) {$v_1$};
\node (v2p) at (2.5,0) {$v_2$};
\node (v3p) at (4,0) {$v_3$};
\node (v4p) at (5.5,0) {$v_4$};

\node (s1) at (1.75,-2) {$s_1$};
\node (s2) at (4.75,-2) {$s_2$};
\node[dashed] (s3) at (3.25,-2) {$s_1'$};

\draw (s1)--(v1p);
\draw (s1)--(v2p);
\draw (s2)--(v3p);
\draw (s2)--(v4p);

\draw (v1p)--(u1);
\draw (v2p)--(u2);
\draw (v3p)--(u3);
\draw (v4p)--(u4);

\draw (v1p)--(u2);
\draw (v2p)--(u1);
\draw (v2p)--(u3);
\draw (v2p)--(u4);
\draw (v3p)--(u2);
\draw (v3p)--(u4);
\draw (v4p)--(u2);
\draw (v4p)--(u3);
\draw[dashed] (s1)--(s3);
\draw[dashed] (s2)--(s3);


\end{tikzpicture}
    \caption{An example of the reduction presented in Section~\ref{sec:hard}. 
    The original graph is shown on the left, and the reduced instance on the right. 
    Vertices in the reduced instance are requested from the bottom row to the top. 
    The vertex $s_1'$ (dashed) is added when constructing an incremental CDS instance.}
    \label{fig:dstoids}
\end{figure}

We now prove that the size of any solution is almost preserved between the two models
\begin{lemma}\label{lem6.3.3}
    If $D$ is a dominating set of $G$, then $D' = S \cup D$ is an incremental dominating set of $G'$ with $|D'| \le 3|D|$.
\end{lemma}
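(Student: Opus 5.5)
We prove the two claims separately: first that $D' = S\cup D$ is an incremental dominating set of $G'$, by checking domination at every prefix of the request order, and then the size bound $|D'| \le 3|D|$, by a degree-counting argument.

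\textbf{Feasibility.} The requests come in three phases, and we check that each prefix $V'_t$ is dominated by $D'\cap V'_t$.

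\begin{enumerate}
\item \emph{$S$-phase.} While vertices of $S$ arrive, every requested vertex lies in $S\subseteq D'$, so each dominates itself.
\item \emph{$V$-phase.} When $v_k$ arrives, all of $S$ is already present and selected. By the construction of $E_s$, $v_k$ is adjacent to the unique $s_i$ whose block contains it (the blocks partition $V$). Hence $v_k$ is dominated by some $s_i \in D'\cap V'_t$, regardless of whether $v_k \in D$.
\item \emph{$U$-phase.} All of $V$, and hence all of $D$, has already been requested. Take $u_j\in U$. Since $D$ dominates $G$, either $v_j\in D$, or some $v_i\in D$ satisfies $(v_i,v_j)\in E$. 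In the first case the edge $(v_j,u_j)\in E_u$ dominates $u_j$; in the second case the edge $(v_i,u_j)\in E_u$ does. Either way the dominator is already requested.
\end{enumerate}

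So every prefix is dominated, and $D'$ is an incremental dominating set.

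\textbf{Size bound.} We have $|D'| = |S| + |D| = n/\Delta + |D|$, so it suffices to show $n/\Delta \le 2|D|$. Every vertex of $G$ has degree at most $\Delta$, so each vertex of $D$ dominates at most $\Delta+1$ vertices. Since $D$ dominates all $n$ vertices, $|D|(\Delta+1)\ge n$. Using $\Delta+1 \le 2\Delta$ for $\Delta\ge 1$, this gives $n/\Delta \le 2n/(\Delta+1) \le 2|D|$. Therefore $|D'| \le 3|D|$. If $\Delta = 0$, the divisibility convention for $n/\Delta$ breaks down; we take $\Delta\ge 1$, as the construction implicitly does.

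\textbf{Main obstacle.} Both parts are routine. The one point needing care is the indexing convention in $E_s$: with $s_i$ adjacent to $v_{\Delta i + j}$ for $i\in[n/\Delta]$ and $j\in[\Delta]$, the indices range up to $n+\Delta$ and miss $v_1,\dots,v_\Delta$. I would read this as the intended partition of $V$ into consecutive blocks of size $\Delta$, i.e.\ $s_i$ adjacent to $v_{\Delta(i-1)+j}$. Feasibility in the $V$-phase rests entirely on every $v_k$ having an $S$-neighbor, so this reading is essential.
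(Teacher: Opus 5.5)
Your proof is correct and follows the paper's argument essentially step for step. You check domination phase by phase: $S$ dominates itself, each $v_k$ is dominated by its block's $s_i$, and each $u_j$ is dominated via $E_u$ by $v_j$ or by a $G$-neighbor of $v_j$ in $D$. The size bound $|D| + n/\Delta \le |D| + 2n/(\Delta+1) \le 3|D|$ then follows from $|D| \ge n/(\Delta+1)$, exactly as in the paper.

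Your reading of $E_s$ as partitioning $V$ into consecutive blocks $v_{\Delta(i-1)+1},\dots,v_{\Delta i}$ is the intended one. The paper's proof simply asserts that every $v_i$ has an $S$-neighbor.
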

\begin{proof}
    If $D$ is a dominating set of $G$, then the set $D' = S \cup D$ is an incremental dominating set in $G'$. Each vertex $s_i$ is added to the solution immediately upon arrival. By construction, every vertex $v_i \in V$ is adjacent to at least one such $s_j$, ensuring that $V$ is dominated once it arrives. For each vertex $u_i \in U$, if $v_i \in D$, then $u_i$ is directly adjacent to $v_i \in D'$. Otherwise, since $D$ is a dominating set in $G$, there exists a neighbor $v_j \in D$ of $v_i$, and in this case $u_i$ is adjacent to $v_j \in D'$ via the edge $(v_j, u_i) \in E_u$.

The size of $D'$ is
\[|D| + \frac{n}\Delta \le |D| + 2\frac{n}{\Delta + 1} \le 3|D|.\]
The first inequality holds because $\Delta > 0$, and the second follows from the standard lower bound on the size of any dominating set in a graph with maximum degree $\Delta$.
\end{proof}
For the converse direction:
\begin{lemma}\label{lem6.3.4}
    If $I$ is an incremental dominating set of $G'$, then the set $D = \{v_i \mid v_i \in I \text{ or } u_i \in I\}$ is a dominating set of $G$ and satisfies $|D| = O(|I|)$.
\end{lemma}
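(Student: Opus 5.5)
The plan is to verify the two claims separately: first that $D$ dominates $G$, then that $|D| = O(|I|)$. The domination argument uses only the last phase of the request order, when the vertices of $U$ arrive. Fix $v_i \in V$. At the final step all of $V' = S \cup V \cup U$ has been requested. Since $I$ is an incremental dominating set, $I$ itself is a dominating set of $G'$, so in particular $u_i$ is dominated by some vertex of $I$.

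By the definition of $E'$, the closed neighborhood of $u_i$ in $G'$ is
\[N_{G'}[u_i] = \{u_i\} \cup \{v_i\} \cup \{v_j \mid (v_j,v_i)\in E\}.\]
This holds because $U$ has no edges to $S$ or to other vertices of $U$, and $E_u$ joins $u_i$ exactly to $v_i$ and to the $G$-neighbors of $v_i$. We split into three cases according to which vertex of $I$ dominates $u_i$:
\begin{itemize}
\item If $u_i \in I$ or $v_i \in I$, then $v_i \in D$ by the definition of $D$, so $v_i$ dominates itself.
\item If some $v_j \in I$ with $(v_j,v_i)\in E$, then $v_j \in D$ and $v_j$ dominates $v_i$ in $G$.
\end{itemize}
In every case $v_i \in N_G[D]$, so $D$ is a dominating set of $G$. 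This step also uses the facts that every $u_i$ is requested and that the final prefix is all of $V'$, both of which follow from the request order.

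For the size bound, $D$ is the image of $I \cap (V\cup U)$ under the map $v_i \mapsto v_i$, $u_i \mapsto v_i$. This map is at most two-to-one, and vertices of $I\cap S$ contribute nothing to $D$. Hence
\[|D| \le |I \cap V| + |I \cap U| \le |I|,\]
which is in fact stronger than $O(|I|)$.

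The only point that needs care is the structural claim about $N_{G'}[u_i]$. Specifically, we must check that $E_s$ touches only $S$ and $V$, and that no edges run inside $U$. Both follow by reading off the definitions of $E_s$ and $E_u$. I expect no real obstacle here. The lemma is routine, and its role is to combine with Lemma~\ref{lem6.3.3} so that approximation ratios transfer between DS and incremental DS up to constant factors, with the maximum degree preserved up to a constant.
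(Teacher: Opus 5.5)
Your proposal is correct. The domination argument is the same as the paper's, but you bound the size in a different and more direct way.

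\textbf{Domination.} Both arguments rest on the fact that $N_{G'}(u_i)=\{v_i\}\cup N_G(v_i)$. Whichever vertex of $I$ dominates $u_i$ therefore places $v_i$ in $N_G[D]$. The paper phrases this as first checking whether $v_i\in D$ or $v_i$ has a $G$-neighbor in $I$, and only then falling back on $u_i$. That is the same case analysis in a different order.

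\textbf{Size bound.} You observe that $D$ is the image of $I\cap(V\cup U)$ under $v_i,u_i\mapsto v_i$, which gives $|D|\le|I|$ immediately. The at-most-two-to-one property you mention is not needed for this inequality. The paper instead combines three facts to conclude $|I|\le 4|D|$:
\begin{itemize}
\item $S\subseteq I$, because each $s_i$ has no requested neighbor when it arrives.
\item The two-to-one correspondence gives $|I\setminus S|\le 2|D|$.
\item Any dominating set of $G$ has size at least $n/(\Delta+1)$.
\end{itemize}
This is the reverse of the inequality the lemma states.

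Your direction is exactly what the lemma asserts. It is also the one the hardness proof of Theorem~\ref{thm1.1} relies on: an approximate incremental solution $I$ on $G'$ becomes a dominating set of $G$ no larger than $I$, and Lemma~\ref{lem6.3.3} gives $\OPT(G')\le 3\,\OPT(G)$. The paper's computation adds the complementary fact that $|D|$ and $|I|$ agree up to a factor of $4$. The reduction does not use that fact.
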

\begin{proof}
Fix any $v_i \in V$.
\begin{itemize}
    \item If $v_i \in D$, then $v_i$ is trivially dominated.
    \item If $v_i \notin D$, but $v_i$ is adjacent in $G$ to some $v_j \in I$, then $v_i$ is dominated via $v_j \in D$.
    \item Otherwise, $u_i \in V'$ must be dominated in $G'$, and since $u_i \notin I$, there must exist a vertex $v_j \in I$ such that $(v_j, u_i) \in E_u$. By construction of $E_u$, this implies $(v_i, v_j) \in E$, and $v_j \in D$.
\end{itemize}

In all cases, $v_i$ is adjacent to some $v_j \in D$, so $D$ is a dominating set of $G$.

To bound the size of $D$, note that $S \subseteq I$, and each $v_i \in D$ corresponds to at most two vertices in $I$. Therefore,
\[
2|D| \ge |I| - |S| = |I| - \frac{n}{\Delta} \ge |I| - 2\frac{n}{\Delta + 1}.
\]
Using $|D| \ge \frac{n}{\Delta + 1}$,
\[
4|D| \ge 2|D| + 2\frac{n}{\Delta + 1} \ge |I|.
\]
\end{proof}

To extend the reduction to incremental CDS, we introduce an additional set of vertices $S' = \{s'_1, \dots, s'_{n/\Delta - 1}\}$ and add the edges $(s_i, s'_i)$ and $(s_i, s'_{i+1})$ for all $i \in [n/\Delta - 1]$. These vertices are requested after $S$ and must be selected to maintain connectivity. As a result, any valid incremental CDS solution must include the entire set $S'$ to maintain connectivity. Moreover, the reduction preserves the asymptotic relationship between the size of the original dominating set and the size of the corresponding incremental CDS solution.

We now prove the hardness of approximation.

\newtheorem*{manualThm19}{Theorem 1.9}
\begin{manualThm19}
\textnormal{Incremental DS} and \textnormal{incremental CDS} are \textnormal{NP-complete}. Furthermore, unless \textnormal{NP} $\subseteq$ \textnormal{BPP} there is no polynomial-time $o(\log\Delta)$ approximation algorithm for either problem.
\end{manualThm19}
\begin{proof}
The reduction preserves the maximum degree up to a constant factor. By Lemmas~\ref{lem6.3.3} and~\ref{lem6.3.4} there is a linear relationship between the sizes of optimal solutions in $G$ and $G'$. Thus, the existence of a polynomial-time $o(\log \Delta)$-approximation algorithm for incremental Dominating Set would imply the same for the classical Dominating Set problem, contradicting known hardness results~\cite{hardness}.
\end{proof}
\section{Future Directions}
Our results provide a foundation for understanding the incremental dominating set problem and its variants. Future work could explore several additional directions.

A natural next step is to improve the competitive ratio for the weighted case in the known-neighborhood setting (Theorem~\ref{thm1.11}), from the current bound of $O(\log n \log \Delta)$ to $O(\log^2 \Delta)$. While the techniques that improve the bound in the unweighted case do not directly extend to the general case, we believe that a more refined algorithm achieving this bound exists. Another promising direction in this setting is to extend our results for the incremental CDS problem to the weighted case.

Beyond worst-case analysis, it would be interesting to study random-order models, where the input is revealed in a randomized rather than adversarial order, potentially yielding improved competitive ratios. Unlike in other online settings, in the incremental model the input order affects the set of feasible solutions and the value of the optimum. Even the definition of competitive ratio in this context becomes nontrivial, as the expressions $\E_{\pi}[\mathcal{A}(\pi)/\textnormal{OPT}(\pi)]$ and $\E_{\pi}[\mathcal{A}(\pi)] / \E_{\pi}[\textnormal{OPT}(\pi)]$ may differ.
\bibliographystyle{splncs04}
\bibliography{bibfile}

\input{CDStoLP}

\input{omit}

\end{document}

%% file: CDStoLP.tex
\appendix

\section{Relaxation for Incremental Connected Dominating Set}
\label{sec:cdstolp}

In this appendix, we provide a detailed proof that the LP formulation presented in Section~\ref{sec:CDSascover} is a valid relaxation for the incremental CDS problem. Recall that we classify each arriving vertex $v_i$ into three types based on how it interacts with existing components:

\begin{enumerate}
    \item \textbf{Opener}: not connected to any existing component ($N[v_i] \cap V_i = \{v_i\}$)
    \item \textbf{Attacher}: connected to exactly one existing component
    \item \textbf{Joiner}: connects two or more existing components
\end{enumerate}

We prove that the constraints defined for each type in Section~\ref{sec:CDSascover} form a valid relaxation by showing both directions: any incremental CDS induces a feasible LP solution, and any integral feasible LP solution induces an incremental CDS.

Denote by $C_1^i,\dots,C_{h_i}^i$ the connected components induced by the
vertices requested up to step~$i$.

\paragraph{Forward direction: incremental CDS implies LP feasibility.}
We show that any feasible incremental CDS $D$ induces a feasible
integral solution to the LP.

\paragraph{Case 1: Opener.} 
By definition $v_i\in D$, hence $x_i=1$.

\paragraph{Case 2: Attacher.} 
Let $C^{i-1}$ denote the unique component adjacent to $v_i$. 
The LP constraint is $\sum_{v_j\in N(v_i)\cap V_i} x_j \ge 1$.
\begin{itemize}
\item If $v_i\notin D$, then $v_i$ is dominated by some $v_j \in N(v_i) \cap D$, 
yielding $\sum_{v_j\in N(v_i)\cap V_i} x_j \ge 1$.

\item If $v_i\in D$, connectivity requires a neighbor
$v_j\in N(v_i)\cap C^{i-1}$ with $v_j\in D$, 
satisfying the constraint.
\end{itemize}

\paragraph{Case 3: Joiner.} 
Let $v_i$ connect components $C_1^{i-1},\dots,C_{k_i}^{i-1}$ with $k_i\ge2$.
Since these components are disjoint, any path between them passes through $v_i$.
Connectivity of $D$ requires $v_i\in D$, thus $x_i=1$.
Moreover, for each $h$, connectivity requires 
$v_j\in N(v_i)\cap C_h^{i-1}$ with $v_j\in D$, yielding
\[
\sum_{v_j\in N(v_i)\cap C_h^{i-1}} x_j \ge 1 .
\]

\paragraph{Reverse direction: LP feasibility implies incremental CDS.}
We prove by induction on~$i$ that any integral feasible LP solution induces an incremental CDS.

\paragraph{Case 1: Opener.} 
The LP gives $x_i=1$, hence $v_i\in D$. 
The component contains only $v_i$, which is trivially connected.

\paragraph{Case 2: Attacher.} 
Let $C^{i-1}$ denote the unique component adjacent to $v_i$.
The constraint $\sum_{v_j\in N(v_i)\cap V_i} x_j \ge 1$ ensures 
$v_j\in N(v_i) \cap V_{i-1}$ with $v_j\in D$, so $v_i$ is dominated.
\begin{itemize}
\item If $v_i\notin D$, then $D\cap C^{i}$ trivially remains connected.

\item If $v_i\in D$, then some $v_j\in N(v_i)\cap C^{i-1}$ satisfies $v_j\in D$.
By induction, $D\cap C^{i-1}$ is connected, so $D\cap C^{i}$ is connected.
\end{itemize}

\paragraph{Case 3: Joiner.} 
The LP gives $x_i=1$, hence $v_i\in D$.
For each component $C_h^{i-1}$ joined by $v_i$, the constraint 
$\sum_{v_j\in N(v_i)\cap C_h^{i-1}} x_j \ge 1$
guarantees $v_j \in D\cap C_h^{i-1}$ adjacent to $v_i$.
By induction, each $D\cap C_h^{i-1}$ is connected. Since $v_i$ has a neighbor in each $D\cap C_h^{i-1}$, adding $v_i$ connects these sets into a single connected component.

\medskip
\noindent
Therefore, the LP is a valid relaxation of the incremental CDS problem.

%% file: omit.tex
\section{Omitted Proofs}
\label{sec:omit}
\subsection{Proof of Lemma~\ref{lem:subset}}
In this part we prove lemma~\ref{lem:subset} which was omited from the main text
\newtheorem{manualLemma}{Lemma} 
\renewcommand{\themanualLemma}{4.1} 
\begin{manualLemma}
If $v_i$ is uncovered at step $i$, then there exists a subset of 
$N[v_i]\cap V_i$ of size at most $6\log m$ such that adding the vertices 
in the subset to $D$ ensures that the potential does not increase, i.e.,
\[
\Phi_i \le \Phi_s,
\]
where $\Phi_s$ is the potential prior to the weight increases in Line~\ref{alg2l7}.
\end{manualLemma}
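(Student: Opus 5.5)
We follow the probabilistic-method argument of Alon et al.~\cite{osc}. Let $N = N[v_i]\cap V_i$, and let $\delta_j = x_j^{\mathrm{new}} - x_j^{\mathrm{old}}$ be the increase of $x_j$ caused by the doubling loop in Line~\ref{alg2l7}. Let $k$ be the number of doubling rounds. At the start of the loop $c_{v_i} < 1$, and after $k-1$ rounds it is still below $1$. Since every doubling round at most doubles the fractional-neighborhood part of $c_{v_i}$, the last round adds less than $1$. This gives $\sum_{j\in N}\delta_j \le 1$ up to a constant, say at most $2$.

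\textbf{Step 1: the random subset.} We pick a random subset $R\subseteq N$ as follows. Run $\lceil 2\log m\rceil$ independent rounds, and in each round include every $v_j\in N$ independently with probability $\min(1,\delta_j)$. The expected size of $R$ is at most $2\lceil 2\log m\rceil\le 6\log m$. By Markov's inequality (or simply by bounding the expectation of the combined objective below), some outcome has both small size and potential at most $\Phi_s$.

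\textbf{Step 2: effect on an uncovered vertex.} Fix $u\in U_i$ and let $\delta_u = \sum_{j\in N\cap N[u]}\delta_j$, so that $c_u$ increased by exactly $\delta_u$. Newly revealed vertices are handled separately: their terms are added before $\Phi_s$ is recorded, and reveals leave each $c_u$ unchanged (as argued in Lemma~\ref{leminc}). Hence the new term of $u$ in the potential is $m^{3c_u^{\mathrm{old}}}m^{3\delta_u}$. The probability that $u$ stays uncovered after one round is
\[
\prod_{j}(1-\min(1,\delta_j))\le e^{-\delta_u},
\]
or $0$ if some $\delta_j\ge 1$. After $2\log m$ rounds this becomes $e^{-2\delta_u\log m}=m^{-2\delta_u/\ln 2}$, and $2/\ln 2 > 2.88$. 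The constants must be tuned so that this beats $m^{3\delta_u}$. With natural logarithms, $3\ln m$ rounds give $m^{-3\delta_u}$, and the expected number of picks is still $O(\log m)$. We choose the round count so that
\[
\E\big[\text{new term of }u\big]\le m^{3c_u^{\mathrm{old}}}.
\]
The constant $6$ in the statement leaves room for this, for example $3\ln m$ rounds times total mass $2$ equals $6\ln m\le 6\log m$.

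\textbf{Step 3: $v_i$ and conclusion.} The vertex $v_i$ itself is in $U_i$, and after the loop $c_{v_i}\ge 1$, so its term is at least $m^3$ unless it gets covered. Its term is handled in the same way, so the expected potential after adding $R$ is at most $\Phi_s$. Summing over $u\in U_i$ gives $\E[\Phi]\le\Phi_s$. Adding vertices to $D$ never increases any term, since it only removes vertices from $U_i$.

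\textbf{Main obstacle: combining the two requirements.} We need one outcome with both $|R|\le 6\log m$ and $\Phi\le\Phi_s$, and we must guarantee this simultaneously. We do it by bounding the combined objective
\[
\E\!\left[\frac{\Phi}{\Phi_s}+\frac{|R|}{C\log m}\right]
\]
and rescaling, or by using slack in the exponent: take slightly more rounds so that $\E[\Phi]\le\Phi_s/2$, then apply Markov's inequality to $|R|$ with probability $1/2$. Either way, the rounding constants have to be adjusted to land exactly on $6\log m$.
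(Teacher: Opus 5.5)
Your Steps 2--3 are fine as far as they go. With $r$ independent rounds, a vertex $u\in U_i$ stays uncovered with probability at most $e^{-r\delta_u}$, so $r=3\ln m$ rounds give $\E[\Phi]\le\Phi_s$.

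\textbf{The gap.} The problem is the step you yourself flag as the main obstacle, and neither remedy you propose closes it. Because you include each $v_j$ independently, $|R|$ is random. You only control $\E|R|\le r\sum_j\delta_j<6\ln m$, which already uses up the whole budget, so Markov on $|R|$ gives nothing.

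Taking more rounds to get $\E[\Phi]\le\Phi_s/2$ is impossible. Every $u\in U_i$ outside $N[N[v_i]\cap V_i]$ keeps its term $m^{3c_u}$ whatever $R$ is, and terms with tiny $\delta_u$ barely shrink. These terms can make up almost all of $\Phi_s$.

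The combined objective $\E[\Phi/\Phi_s+|R|/(C\log m)]$ fails for the same reason. Since $\E[\Phi]\le\Phi_s$ holds with no slack, an outcome with small combined value can still have $\Phi>\Phi_s$. Conditioning on $|R|$ being small does not help either: both $\Phi$ and the event $\{|R|\le k\}$ are decreasing in the independent picks, so they are positively correlated, and the conditioning can only raise $\E[\Phi]$.

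\textbf{The fix.} You need to make the size bound deterministic, so that $\E[\Phi]\le\Phi_s$ alone suffices. By your own computation $\sum_j\delta_j<2$. So in each round you may select at most one vertex: choose $v_j$ with probability $\delta_j/2$, and nothing with the remaining probability. Run $6\ln m$ independent rounds.

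Then $|R|\le 6\ln m\le 6\log m$ always. The probability that $u$ stays uncovered is $(1-\delta_u/2)^{6\ln m}\le m^{-3\delta_u}$. Each term therefore has expectation at most $m^{3c_u^{\mathrm{old}}}$, and some outcome has $\Phi\le\Phi_s$. This is precisely the procedure used in the paper's proof.
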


Recall that $\Phi_s$ is the value of the potential before the weight increase, and $\Phi_i$ is its current value. When a new uncovered vertex is revealed, the weights of vertices covering it are doubled until it is covered in the fractional solution. 

For a vertex $v_j$, let $c'_{v_j}$ denote its original cover, $\delta_j$ the increase in the fractional variable $x_j$, and let
\[
\delta_j' = \sum_{v_k\in N[v_j]\cap V_j}\delta_k = c_{v_j} - c'_{v_j}
\]
be the total increase in the value of the fractional covering of $v_j$.

We prove the lemma by describing a randomized procedure with two properties:
\begin{enumerate}
    \item It never selects more than $6\log m$ vertices.
    \item It satisfies $\E[\Phi_i] \le \Phi_s$.
\end{enumerate}
These two properties together imply the existence of a subset of size at most $6\log m$ with $\Phi_i \le \Phi_s$.

\paragraph{The procedure.}
Repeat the following operation $6\log m$ times: with probability $\delta_j/2$, select each vertex $v_j$ (at most one vertex is chosen per iteration). Since the total weight of vertices covering the requested element is at most 2, we can ensure that at most one vertex is selected in each iteration.

By construction, the total number of selected vertices is at most $6\log m$, proving property 1.

To prove property 2, fix a vertex $v_j \in U_{i-1}$. The probability that $v_j$ is not covered after the procedure is
\[
\left(1 - \frac{\delta'_j}{2}\right)^{6\log m} \;\le\; m^{-3\delta'_j}.
\]
Thus the expected contribution of $v_j$ to the potential is
\[
(1 - m^{-3\delta'_j}) \cdot 0
+ m^{-3\delta'_j}\cdot m^{3(c'_{v_j} + \delta'_j)}
= m^{3c'_{v_j}}.
\]
This is equal to the original contribution of $v_j$ to the potential. Since this holds for all $v_j \in U_{i-1}$, the expected potential does not increase, proving the second property.
\subsection{Proof of Lemma~\ref{lem5.4}}
In this part, we prove lemma~\ref{lem5.4} about the change of the potential function throughout the run of algorithm~\ref{alg:weighted}.

\renewcommand{\themanualLemma}{4.7} 
\begin{manualLemma}
Denote $\Phi_i^j$ as the value of the potential function after the $j$th iteration of the loop in Line~\ref{alg3l5}, and $\Phi_i^0$ as the potential at the beginning of the loop. Then,
\[\forall j: \Phi_i^j \le \Phi_i^{j-1}\]
\end{manualLemma}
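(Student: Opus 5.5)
The plan is to prove the lemma by a probabilistic averaging argument applied to a single iteration $j$ of the loop in Line~\ref{alg3l5}. The two options in that iteration are ``update $x_j$ and do not add $v_j$'' and ``update $x_j$ and add $v_j$''. I will show that a suitable random mixture of the two options has expected potential at most $\Phi_i^{j-1}$. The algorithm adds $v_j$ exactly when the first option increases the potential. In that case the second option must give a value at most $\Phi_i^{j-1}$: otherwise both outcomes would exceed $\Phi_i^{j-1}$, and so would their expectation.

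\textbf{Setup.} Let $\delta=\tilde x_j-x_j\ge 0$; this is nonnegative because the fractional algorithm of Theorem~\ref{thm4.3} is monotone. If $\delta=0$, nothing changes and there is nothing to prove. If $w(v_j)>\alpha$, then $x_j$ stays $0$ and again $\delta=0$. So assume $\delta>0$ and $w:=w(v_j)\le\alpha$.

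Define the randomized option to add $v_j$ with probability $p=\min\{1,\,2\delta\ln m\}$. I will work with natural logarithms; if $\log$ is base 2, the constants below shift slightly but there is slack.

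I will bound the expectation term by term, using linearity of expectation.

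\textbf{First sum.} Only vertices $v\in U_i\cap N[v_j]$ are affected. For each such $v$, raising $x_j$ by $\delta$ multiplies its term $m^{2c_v}$ by $m^{2\delta}=e^{2\delta\ln m}$. If $v_j$ is added, $v$ becomes covered and its term drops to $0$. Hence the expected new term is
\[
(1-p)\,e^{2\delta\ln m}\,m^{2c_v}\le m^{2c_v}.
\]
When $p=2\delta\ln m<1$ this is the inequality $(1-p)e^{p}\le 1$; when $p=1$ it is trivial. Terms for vertices outside $N[v_j]$ are unchanged, and adding $v_j$ can only delete terms.

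\textbf{Exponential term.} Write $Y$ for this term. The fractional cost increases by $w\delta$, so $Y$ is multiplied by $\exp\bigl(-\tfrac{3\ln m\,w\delta}{2\alpha}\bigr)$. Adding $v_j$ multiplies $Y$ by $e^{w/(2\alpha)}$. The expected multiplicative factor is therefore
\[
e^{-\frac{3\ln m\, w\delta}{2\alpha}}\Bigl(1+p\bigl(e^{\frac{w}{2\alpha}}-1\bigr)\Bigr).
\]
Since $w/(2\alpha)\le 1/2$, I will use $e^{y}-1\le 1.3\,y$ for $y\le 1/2$ and $1+z\le e^{z}$. This bounds the factor by
\[
\exp\Bigl(-\tfrac{3\ln m\, w\delta}{2\alpha}+2\delta\ln m\cdot\tfrac{1.3\,w}{2\alpha}\Bigr)\le 1.
\]
This inequality is exactly where the constant $3\log m$ in the potential, and the exclusion of heavy vertices, are needed.

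\textbf{Conclusion.} Summing the two parts gives $\E[\Phi]\le\Phi_i^{j-1}$. By the averaging argument above, $\Phi_i^j\le\Phi_i^{j-1}$ in both branches of Line~\ref{alg3l6}.

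\textbf{Main obstacle.} I expect the delicate step to be the exponential-term inequality: the choice of $p$ must be large enough to neutralize the $m^{2\delta}$ growth of the first sum, yet small enough that the cost penalty $e^{w/2\alpha}$ is paid for by the $3\log m$ fractional-cost credit. Matching these constants, including the case $p$ capped at $1$ and the base of the logarithm, is where I would spend the care.
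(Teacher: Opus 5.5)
Your proof is correct and follows the paper's argument. Like the paper, you add $v_j$ at random, check that each of the two parts of the potential does not increase in expectation (using $w(v_j)\le\alpha$ and the $3\log m$ credit), and then derandomize by the averaging observation. The only difference is your choice $p=\min\{1,2\delta\ln m\}$ instead of the paper's $p=1-m^{-2\delta_j}$. This turns the paper's exact equality on the first sum into the inequality $(1-p)e^{p}\le 1$, and lets you bound the second term directly via $p\le 2\delta\ln m$ instead of using $e^{-x}+(1-e^{-x})z\le 1+x(z-1)$.
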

To prove the lemma, we start by denoting $\delta_j$ as the change in the parameter $x_j$ of vertex $v_j$. We prove the following helper lemma.
\begin{lemma}\label{lemb.2}
    If we add $v_j$ to $D$ with probability $\left(1-m^{-2\delta_j}\right)$ then 
    \[\E[\Phi_i^j]\le \Phi_i^{j-1}.\]
\end{lemma}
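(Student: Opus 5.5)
The proof is a direct one-step expectation computation, splitting $\Phi$ into its two summands: the sum $\sum_{v\in U_i} m^{2c_v}$ and the exponential cost term. In iteration $j$ only $x_j$ changes, by $\delta_j\ge 0$; the fractional solution of Theorem~\ref{thm4.3} is monotone. If $v_j\in D$ already, or $w(v_j)>\alpha$ (so $x_j=0$ and $\delta_j=0$), there is nothing to prove: the first sum is unaffected on uncovered vertices and the exponential term can only decrease. So assume $v_j\notin D$, $w(v_j)\le\alpha$ and $\delta_j>0$. Let $p=1-m^{-2\delta_j}$ be the probability of adding $v_j$ to $D$.

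\emph{First summand.} The only terms that change are those of uncovered $v\in U_i$ with $v_j\in N[v]\cap V_i$, whose cover rises from $c_v$ to $c_v+\delta_j$. If $v_j$ is added to $D$, every such $v$ becomes dominated and leaves $U_i$, contributing $0$. Otherwise it contributes $m^{2c_v+2\delta_j}$. Hence its expected contribution is
\[
(1-p)\,m^{2c_v+2\delta_j} = m^{-2\delta_j}\,m^{2c_v+2\delta_j} = m^{2c_v},
\]
exactly its old value. All other terms of the sum are unchanged, so the expectation of the first summand equals its value before the iteration.

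\emph{Second summand.} Write $w=w(v_j)$. The exponent changes by $\frac{1}{2\alpha}\left(w\cdot \one[v_j\text{ added}] - 3\log m\, w\,\delta_j\right)$. The old value is therefore multiplied in expectation by
\[
e^{-\frac{3 w\delta_j\log m}{2\alpha}}\left(1-p+p\,e^{w/(2\alpha)}\right).
\]
I will show this factor is at most $1$ in three steps.
\begin{itemize}
\item Since $w/(2\alpha)\le 1/2$, convexity gives $e^{w/(2\alpha)}\le 1+w/\alpha$.
\item From $1-e^{-y}\le y$ we get $p\le 2\delta_j\ln m$.
\item Combining these, the bracket is at most $1+2\delta_j\ln m\cdot w/\alpha\le \exp\!\left(2 w\delta_j\ln m/\alpha\right)$.
\end{itemize}
It then suffices that $2\ln m\le \tfrac32\log m$. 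This holds with $\log=\log_2$, since $\tfrac32\log_2 m\approx 2.16\ln m$. Multiplying the two summands' bounds, $\E[\Phi_i^j]\le\Phi_i^{j-1}$.

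The only delicate point is this last comparison of constants. It depends on the base of the logarithm in the potential and on the restriction $w(v_j)\le\alpha$ (so that the $e^{y}\le 1+2y$ linearisation is valid). I would double-check that the factor $3$ in the potential indeed dominates $2/\ln 2$ after converting bases. If it did not, the fix is to enlarge the constant $3$, which only changes constants in Theorem~\ref{thm5.6}.
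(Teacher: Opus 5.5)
Your proof follows the same route as the paper's. The first summand is handled by the identical exact-expectation computation $(1-p)m^{2c_v+2\delta_j}=m^{2c_v}$. The second summand is bounded through the multiplicative factor, using $1-m^{-2\delta_j}\le 2\delta_j\ln m$, a linearization of $e^{w/(2\alpha)}$, and $1+z\le e^z$. Your explicit treatment of the cases where $v_j$ is already in $D$ or $w(v_j)>\alpha$ is a small improvement, since the paper's factorization of $T'$ tacitly assumes $v_j\notin D$.

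The one real problem is the step you flagged yourself. Your bound $e^{y}\le 1+2y$, with $y=w/(2\alpha)\le\tfrac12$, is too lossy unless the $\log$ in the potential means $\log_2$. Under $\log_2$ your argument is correct. However, the paper's own computation in this section reads $\log$ as the natural logarithm: it writes $m^{-2\delta_j}=e^{-x}$ with $x=2\delta_j\log m$, and in Theorem~\ref{thm5.6} it takes logarithms of an exponential. Under that reading your final requirement becomes $2\ln m\le\tfrac32\ln m$, which is false, so the argument as written does not prove the lemma.

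Your proposed remedy, enlarging the constant $3$, changes the potential and hence the algorithm, so it proves a different statement. The actual fix is to keep your convexity chord sharp. On $[0,\tfrac12]$ the chord gives $e^{y}-1\le 2(\sqrt e-1)\,y\le\tfrac32 y$. The bracket is then at most $1+2\delta_j\ln m\cdot\frac{3w}{4\alpha}=1+\frac{3w\delta_j\ln m}{2\alpha}$, which cancels the factor $e^{-3w\delta_j\log m/(2\alpha)}$ exactly. This is precisely the paper's inequality $e^x-1\le\tfrac32 x$ for $x\in[0,\tfrac12]$. It works for any logarithm base $b\le e$, so it covers both readings without touching the constant $3$.
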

Thus, if the potential increases when changing $x_j$, then adding $v_j$ to $D$ ensures we must get $\Phi_i^j \le \Phi_i^{j-1}$, proving lemma~\ref{lem5.4}.

\begin{proof}[Proof of Lemma~\ref{lemb.2}]
We begin with the first part of the potential, $\sum_{u \in U_i} m^{2c_u}$.
For any $u \in U_i$ not adjacent to $v_j$, $c_u$ does not change, and the corresponding term $m^{2c_u}$ remains unchanged.
If $u$ is adjacent to $v_j$, then the expected contribution is
\[
(1-m^{-2\delta_j})\cdot 0 + m^{-2\delta_j} \cdot m^{2(c_u + \delta_j)} = m^{2c_u},
\]
so the expected value of each term $m^{2c_u}$ remains equal to its previous value.

    We now bound the second term of the potential function. Let
\[
T = \exp\!\left(\frac{1}{2\alpha}\Big(\sum_{v\in D}w(v) - 3\log m \sum_{v_j \in V_i} w(v_j)x_j\Big)\right)
\]
denote the previous value of the second term, and let $T'$ denote its value after the update. Then
\[
T' = T \cdot \exp\!\left(-\tfrac{3\log m}{2\alpha}\,w(v_j)\delta_j\right)
     \cdot \E\!\left[\exp\!\Big(\tfrac{1}{2\alpha}w(v_j)\cdot \mathds{1}[v_j\in D]\Big)\right],
\]
where $\mathds{1}[v_j\in D]$ is the indicator variable for adding $v_j$.

    The expected value is
    \setcounter{equation}{0}
    \begin{align}
    &\E\left[\exp\left(\frac1{2\alpha}w(v_j)\mathds{1}[v_j\in D]\right)\right]\nonumber\\
        &= m^{-2\delta_j} + (1-m^{-2\delta_j})\cdot \exp\left(\frac{w(v_j)}{2\alpha}\right)\\
        &\le 1+2\delta_j\log m\left(\exp\left(\frac{w(v_j)}{2\alpha}\right) - 1\right)
        \label{aln2}\\
        &\le 1 + 2\delta_j\log m\frac{3w(v_j)}{4\alpha}\label{aln3}
        \\&\le \exp\left(\frac{3\log m\cdot w(v_j)\delta_j}{2\alpha}\right).\label{aln4}
    \end{align}
    Inequality~(\ref{aln2}) follows since $e^{-x} + (1-e^{-x})\cdot z \le 1 + x(z-1)$. 
Inequality~(\ref{aln3}) uses $e^x - 1 \le \tfrac{3}{2}x$ for $x \in [0,1/2]$, and note that $w(v_j)/(2\alpha) \le 1/2$, since heavy vertices with $w(v_j)>\alpha$ never appear in the fractional solution. 
Finally~(\ref{aln4}) follows from $1 + x \le e^x$ for $x \ge 0$.

We conclude that $\E[T'] \le T$. Since the first part of the potential does not change, and the second part does not increase, by linearity of expectation
\[
\E[\Phi_i^j] \le \Phi_i^{j-1}.\qedhere
\]
\end{proof}